\DeclareMathOperator*{\argmin}{argmin}
\newcommand\footnoteref[1]{\protected@xdef\@thefnmark{\ref{#1}}\@footnotemark}
\newcolumntype{?}{!{\vrule width 2pt}}
\newcounter{countitems}
\newcounter{nextitemizecount}
\newcommand{\setupcountitems}{%
  \stepcounter{nextitemizecount}%
  \setcounter{countitems}{0}%
  \preto\item{\stepcounter{countitems}}%
}
\newcommand{\computecountitems}{%
  \edef\@currentlabel{\number\c@countitems}%
  \label{countitems@\number\numexpr\value{nextitemizecount}-1\relax}%
}
\newcommand{\nextitemizecount}{%
  \getrefnumber{countitems@\number\c@nextitemizecount}%
}
\newcommand{\previtemizecount}{%
  \getrefnumber{countitems@\number\numexpr\value{nextitemizecount}-1\relax}%
}
\newenvironment{thmclone}[1]{\noindent
\textbf{Theorem~\ref{#1}} (restated).\em}{\par}
\newenvironment{lemclone}[1]{\noindent
\textbf{Lemma~\ref{#1}} (restated).\em}{\par}
\newenvironment{lemexe}[1]{\noindent
\textbf{Lemma~\ref{#1}} (extended).\em}{\par}
\newenvironment{AutoMultiColItemize}{%
\ifnumcomp{\nextitemizecount}{>}{2}{\begin{multicols}{2}}{}%
  \setupcountitems\begin{enumerate}}%
{\end{enumerate}%
\unskip\computecountitems\ifnumcomp{\previtemizecount}{>}{2}{\end{multicols}}{}}
\newcommand{\tup}[2]{\ensuremath{\langle #1, #2 \rangle}}
\newcommand{\triple}[3]{\ensuremath{\langle #1, #2, #3 \rangle}}
\newcommand{\sasha}[1]{{\footnotesize\color{red}[\textbf{Sasha:}
#1]}}
\newtheorem{definition}{Definition}
\newtheorem{lemma}{Lemma}
\newtheorem{corollary}{Corollary}
\newtheorem{theorem}{Theorem}
\newtheorem{observation}{Observation}
\newtheorem{assumption}{Assumption}
\newtheorem{claim}{Claim}
\newcommand{\Write}{\ensuremath{\textsc{write}}}
\newcommand{\Read}{\ensuremath{\textsc{read}}}
\newcommand{\lwrite}{\ensuremath{\textit{write}}}
\newcommand{\lread}{\ensuremath{\textit{read}}}
\newcommand{\PntCont}{\ensuremath{\textrm{PntCont}}}
\newcommand{\writemax}{\ensuremath{\textit{write-max}}}
\newcommand{\readmax}{\ensuremath{\textit{read-max}}}
\newcommand{\maxreg}{\ensuremath{\textrm{max-register}}}
\newcommand{\maxregs}{\ensuremath{\textrm{max-registers}}}
\newcommand{\kreg}{\ensuremath{$k$\textrm{-register}}}
\newcommand{\relwrites}{\ensuremath{\textit{rel-writes}}}
\newcommand{\BlockedWrites}{\ensuremath{\text{BlockedWrites}}}
\begin{document}


\title{Space Complexity of Fault Tolerant Register Emulations}

 \author{{Gregory Chockler} \\CS Department\\Royal
 Holloway, University of London, UK\\gregory.chockler@rhul.ac.uk\\
   \and
   {Alexander Spiegelman}\thanks{Alexander Spiegelman is
   grateful to the Azrieli Foundation for the award of an
   Azrieli Fellowship.} \\Viterbi EE
   Department\\Technion, Haifa, Israel\\sashas@tx.technion.ac.il
   }

\date{}
\maketitle

\begin{abstract}

  Driven by the rising popularity of cloud storage, the costs
  associated with implementing reliable storage services from a
  collection of fault-prone servers have recently become an actively
  studied question.  The well-known ABD result shows that an
  $f$-tolerant register can be emulated using a collection of $2f+1$
  fault-prone servers each storing a single read-modify-write object
  type, which is known to be optimal. In this paper we generalize this
  bound: we investigate the inherent space complexity of emulating
  reliable multi-writer registers as a fucntion of the type of the
  base objects exposed by the underlying servers, the number of
  writers to the emulated register, the number of available
  servers, and the failure threshold.

  We establish a sharp separation between registers, and both
  max-registers (the base object types assumed by ABD) and CAS in
  terms of the resources (i.e., the number of base objects of the
  respective types) required to support the emulation; we show that no
  such separation exists between max-registers and CAS.
  Our main technical contribution is lower and upper bounds on the
  resources required in case the underlying base objects are
  fault-prone read/write registers. We show that the number of
  required registers is directly proportional to the number of writers
  and inversely proportional to the number of servers.

\setcounter{page}{0}
\thispagestyle{empty}

\end{abstract}

\newpage

\section{Introduction}
\label{sec:intro}

Reliable storage emulations seek to construct
fault-tolerant shared objects, such as read/write
registers, using a collection of \emph{base objects} hosted on
failure-prone servers. 
Such emulations are core enablers for many modern storage
services and applications, including cloud-based online data
stores~\cite{pnuts,riak,spinnaker,zookeeper,mongodb} and
Storage-as-a-Service
offerings~\cite{s3,simple-db,dynamodb,azure-storage}.

Most existing storage emulation algorithms are
constructed from storage services capable of supporting
custom-built read-modify-write (RMW)
primitives~\cite{ABD95, EnglertS00, EnglertS00, rambo,
DuttaGLV10, GeorgiouNS09,
DBLP:journals/eatcs/AguileraKMMS10, mwr-journal, MR98}, and
perhaps the most famous one is ABD~\cite{ABD95}.
This algorithm emulates an $f$-tolerant atomic wait-free
register, accessed by an unbounded number of processes (readers
and writers), on top of $2f+1$ servers, each of which stores a
single RMW object.
Since $f$-tolerant register emulation is
impossible with less than $2f+1$ servers~\cite{attiya-quorum,
lynch-quorum}, the ABD algorithm's space complexity is optimal.

However, support for atomic RMW is not always available: the
operations exposed by network-attached disks are typically
limited to basic read/write capabilities, and the interfaces
exposed by cloud storage services sometimes augment this
with simple conditional update primitives similar to
Compare-and-Swap (CAS).
A natural question that arises is therefore how the
ABD results generalize when only weaker primitives (e.g.,
read/write registers) are available.
More specifically, we are interested whether reliable
storage emulations are possible with weaker primitives, and if
so, what their space complexity is, and in particular,
whether it depends on the number of writers and
the number of servers.


To answer these questions, we consider a collection of $n$ fault-prone
servers, each of which stores
base objects supporting the given primitive.
The failure granularity is servers,
meaning that a server crash causes all base objects it stores to
crash as well.
We study three primitives:
read/write register, max-register~\cite{aspnes2009max},
and CAS.
For each primitive, we are interested in the number of 
base objects required to emulate an $f$-tolerant register for
$k$ writers using $n$ servers.
  
%
%
To strengthen our result, we prove the lower bound under weak liveness
and safety guarantees, namely, obstruction freedom and write
sequential safety (WS-Safety).  The latter is a weak generalization
for Lamport's notion of safety~\cite{lamport1986interprocess} to
multi-writer registers, which we define in Section~\ref{sec:prelim}.
Since atomicity usually requires readers to write, which may induce a
dependency on the number of readers, we consider regularity for our
upper bound; we define in Section~\ref{sec:prelim} write sequential
regularity (WS-Regularity), which is a weaker form of multi-writer
regularity defined in~\cite{mwr-journal}.
The lower bound proved in~\cite{attiya-quorum,lynch-quorum} on the
number of servers required for $f$-tolerant register emulations can be
easily generalized for WS-Safe obstruction-free emulations.
Therefore, we assume that $n \geq 2f +1$ throughout the paper.

Table~\ref{table:hiererchy} summarizes 
our results.  Interestingly, even though both registers and
max-registers have the lowest
consensus number of $1$ in Herlihy's hierarchy~\cite{herlihy1991wait},
our results show they are clearly separated with respect to their
power to support a reliable multi-writer register in a
space-efficient fashion. On the other hand, no such separation exists
between CAS, which has an infinite consensus number, and max-register.
As an aside, we note that our classification has implications for the
standard shared memory model (without base object failures); for
example, it implies that a max-register for $k$ writers cannot be
implemented from less than $k$ read/write registers (proven in
Theorem~\ref{thm:maxreg}).

\begin{table*}[t]
\caption{The number of base objects used by
     $f$-tolerant register emulations with $k$ writers and
    $n>2f$ servers.}
\label{table:hiererchy}
\begin{minipage}{\textwidth}   
\centering
\begin{tabular}{ccc}
    \toprule
    	\multirow{2}{*}{\textbf{Base object}}  
%
    	
    	&Lower bound & Upper bound  \\
    	& (WS-Safe, obstruction-free) & (WS-Regular, 
    	wait-free)\\
    	
    \midrule
		
		max-register & $2f+1$ & $2f+1$ \\

		CAS & $2f+1$ & $2f+1$ \\

    
        	\multirow{2}{*}{read/write register} & 
    	\multirow{2}{*}{$ kf + \Big\lceil
    	\frac{k}{\frac{n-(f+1)}{f}} \Big\rceil (f+1)$}  &
    	\multirow{2}{*}{$kf +  \big\lceil \frac{k}{\big\lfloor
    	\frac{n-(f+1)}{f} \big\rfloor} \big\rceil (f+1)$}\\
    	
    	&&\\
    	
%

    
%


		
    \bottomrule

\end{tabular}
\end{minipage}
\end{table*}

\paragraph{Results.}
Despite the fact that the original ABD emulation~\cite{ABD95} assumes
a general RMW base object on every server, we observe that the code
executed by each server in the multi-writer ABD
protocol~\cite{rambo,mwr-journal,MR98} can be encapsulated into the
\writemax\/ (for handling update messages) and \readmax\/ (for
handling read messages) primitives of \maxreg\/.  Therefore, the upper
bound of $2f+1$ applies to \maxregs\/ as well.  
In Appendix~\ref{app:CAStoMax} we show how to emulate a
max-register from a single CAS in a wait-free manner. 
Thus, the upper bound for max-register also applies to
CAS.

Our main technical contribution is a lower bound on the number of
read/write registers required to emulate an $f$-tolerant WS-Safe
obstruction-free register for $k$ clients using $n$ servers.  While
the ABD~\cite{ABD95} space complexity does not depend on the number of
writers or the number of servers, we show in Section~\ref{sec:space}
(Theorem~\ref{thm:kf}) that when servers support only read/write
registers, the lower bound increases linearly with the number of
writers and decreases (up to a certain point) with the number of
available servers.  In particular, our lower bound implies that at
least $kf + f+1$ registers are needed regardless of the number of
available servers.  We exploit asynchrony to show that an emulated
write must complete even if it leaves $f$ pending writes on base
registers, forcing the next writer to use a different set of
registers, even in a write-sequential run.

In Section~\ref{sec:space}, we present a new upper bound construction 
that closely matches our lower bound (Theorem~\ref{thm:upper}). 
Note that the two bounds are closely aligned, and in particular,
coincide in the two important cases of $n = 2f+1 $ and
$n \geq kf +f +1$ where they are equal to $kf + k(f+1)$ and $kf+f+1$
respectively.  An interesting open question is to close the remaining
small gap.

Another open question
is whether our lower bound is tight for stronger regularity
definitions~\cite{shao2011multiwriter}.  In the special case of
$n=2f+1$ servers and $k$ writers, a matching upper bound of $(2f+1)k$
registers can be achieved by simply having each server implement a
single $k$-writer \maxreg\/ from $k$ base
registers~\cite{aspnes2009max}. The question of the general case of
$n\ge 2f+1$ remains open.

In Appendix~\ref{sec:space:app}, we show the following three
additional results implied by an extended variant of our main lower
bound construction: (1) a lower bound of $k$ registers per server for
$n = 2f+1$ (Theorem~\ref{thm:2fplus1}); (2) a lower bound on the
number of servers when the maximum number of registers stored on each
server is bounded by a known constant
(Theorem~\ref{thm:bounded-servers}); and (3) impossibility of
constructing fault-tolerant multi-writer register emulations adaptive
to point contention~\cite{point-cont-afek,point-cont}
(Theorem~\ref{theorem:adaptive}).

\paragraph{Related work.}
The space complexity of fault-tolerant register emulations has been
explored in a number of prior works. Aguilera et
al.~\cite{aguilera-nad} show that certain types of multi-writer
registers cannot be reliably emulated from a fixed number of
fault-prone ones if the number of writers is not a priori known. They
however, do not provide precise bounds on the number of base registers
as a function of the number of writers, the number of servers, and the
failure threshold as we do in this paper.
The space complexity of reliable register emulations in terms of the
amount of data stored on fault-prone RMW servers was studied
in~\cite{Fan2003,Chockler2007}, and more recently
in~\cite{spiegelman16,cadambe16,cadambe17}. Since we are only
interested in the number of stored registers and not their sizes,
these results are orthogonal to ours.

Basescu et al.~\cite{ICStore12} describe several fault-tolerant
multi-writer register emulations from a collection of fault-prone
read/write data stores. Their algorithms incorporate a 
garbage-collection mechanism that ensures that the storage cost
is adaptive to the write concurrency, provided that the
underlying servers can be accessed in a synchronous fashion. 
Our results show that asynchrony has a profound impact on
storage consumption
by exhibiting a {\em sequential}\/ failure-free
run where the number of registers that need to be stored
grows linearly with the number of writers.

The proof of our main result (see Lemma~\ref{lem:exhaustive-run})
further extends the adversarial framework of~\cite{spiegelman16} to
exploit the notion of {\em register covering}\/ (originally due
to~\cite{burns-lynch}) extended to fault-prone base registers as
in~\cite{aguilera-nad}.  Covering arguments have been successfully
applied to proving numerous space lower bound results in the
literature (see~\cite{faith-survey2014} for a survey) including the
recent tight bounds for obstruction-free
consensus~\cite{Gelashvili2015,Zhu:2016}, which are at
the heart of the space hierarchy
of~\cite{ellen2016complexity}.



\vspace{-3mm}
\section{Informal Model}
\label{sec:prelim}

In this section, we will introduce basic premises of our system
model in informal terms. 
The formal model can be found in Appendix~\ref{app:formal-model}. 

An {\em asynchronous fault-prone shared memory
system}~\cite{JCT98} consists of a set of shared {\em base}\/
objects ${\cal B} = \{b_1,b_2,\dots\}$ accessed by clients in
the set ${\mathbb C} = \{c_1, c_2, \dots\}$.
We extend the model by mapping objects to a set
${\cal S} = \{s_1, s_2, \dots\}$ of servers via a function
$\delta : {\cal B} \to {\cal S}$. We denote $n
\triangleq|\cal S|$. For $B \subseteq {\cal B}$, $\delta(B)$ 
denotes the {\em image}\/ of $B$, i.e., $\delta(B) = \{\delta(b)
: b \in B\}$.
Conversely, for $S \subseteq {\cal S}$, 
$\delta^{-1}(S) = \{b : \delta(b) \in S\}$.  
Both servers and clients can fail by crashing. 
A crash of a server cause all objects mapped to that server to
instantaneously crash\footnote{Note that the original fault-prone
shared model~\cite{JCT98} is a specific case of our model
when $\delta$ is an injective function.}.

 
We study algorithms emulating reliable {\em multi-writer/multi-reader
  (MWMR)}\/ registers to a set of clients. Our focus will be on a
register supporting an \emph{a priori fixed and known} set of $k>0$
writers, to which we will henceforth refer simply as
\emph{\kreg}. Clients interact with the emulated register via {\em
  high-level}\/ read and write operations. To distinguish the
high-level emulated reads and writes from low-level base object
invocations, we refer to the former as \Read\/ and \Write. We say that
high-level operations are {\em invoked}\/ and {\em return} whereas
low-level operations are {\em triggered}\/ and {\em respond}. A
high-level operation consists of a series of trigger and respond {\em
  actions}\/ on base objects, starting with the operation's invocation
and ending with its return. Since base objects are crash-prone, we
assume that the clients can trigger several low-level operations 
without waiting for the previously triggered operations to
respond.

An emulation algorithm $A$ defines the behaviour of clients as
deterministic state machines where state transitions are associated
with actions, such as trigger/response of low-level operations. A {\em
  configuration}\/ is a mapping to states from system components,
i.e., clients and base objects. An {\em initial configuration}\/ is
one where all components are in their initial states.

A {\em run}\/ of $A$ is a (finite or infinite) sequence of alternating
configurations and actions, beginning with some initial configuration,
such that configuration transitions occur according to $A$. We will
henceforth refer to an $A$'s transition occurring in a run (i.e., a
triple consisting of consecutive configuration, action, and
configuration) simply as a {\em step}.  We use the notion of
time $t$ during a run $r$ to refer to the configuration reached after $t$ steps
in $r$.
A run is {\em write-only}\/ if it has no invocations of
high-level \Read\/ operations. A run $r$ is {\em
write-sequential}\/ if no two high-level writes are concurrent
in~$r$.

We say that a base object, client, or server is {\em faulty}\/ in a
run $r$ if it fails at some time in $r$, and correct otherwise. A run
is {\em fair}\/ if (1) for every low-level operation triggered by a
correct client on a correct base object there is eventually a
matching response, and (2) every correct client gets infinitely many
opportunities to both trigger  low-level operations and execute
return actions. We say that a low-lever operation on a base object is
{\em pending}\/ in run $r$ if it was triggered but has no matching
response in $r$.
We assume that the base objects are atomic~\cite{herlihy} (see
Section~\ref{sec:consistency} of Appendix~\ref{app:formal-model}).

\vspace{-2mm}
\paragraph{Properties of emulation algorithms.}
Let $A$ be a fault-tolerant \kreg\/ emulation for some $k>0$.
We now give informal definitions of safety, liveness,
fault-tolerance, and space consumption properties of $A$. The formal
definitions can be found in Appendix~\ref{app:formal-model}.

\noindent {\em Write-Sequential Regularity (WS-Regular)}: $A$ is
write-sequential regular (WS-Regular) if for all its write-sequential
runs $r$, for each high-level read operation $Rd$, there exists a
linearization of the sub-sequence of $r$ consisting of $Rd$ and
all the high-level writes in $r$~\footnote{Note that this
definition is a generalization for Lamport's notion of
regularity to multi-writer register that coincide with it in
case of a single writer, but is weaker than multi-writer
regularity generalizations defined in~\cite{mwr-journal}.}.
  

\noindent {\em Write-Sequential Safety (WS-Safe)}: Similar to
WS-Regular, but only required to hold for high-level reads that
are not concurrent with any high-level writes.

\noindent {\em Wait Freedom}: $A$ is wait-free if it guarantees that
every high-level operation invoked by a correct client eventually
returns.

\noindent {\em Obstruction Freedom}: $A$ is obstruction-free
if every high-level operation invoked by a correct client that is not
concurrent with any other high-level operation by a correct client
eventually returns.

\noindent {\em $f$-tolerance}: $A$ is $f$-tolerant if it remains
correct (in the sense of its safety and liveness properties) as long
as at most $f$ servers crash for a fixed $f > 0$.

\noindent {\em Resource Complexity}: The {\em resource consumption}\/ of $A$
in a run $r$ is the number of base objects used by $A$ in $r$. The
{\em resource complexity}~\cite{JCT98} of $A$ is the maximum resource
consumption of $A$ in all its runs.

\vspace{-3mm}
\section{Resource Complexity of Write-Sequential
\kreg\ Emulation}
\label{sec:space}

In Section~\ref{sub:LowerBoundOverview} we give an overview and
intuition for our lower bound, and in
Section~\ref{sub:LowerBound} we prove it.
In Section~\ref{sub:UpperBound} we present a closely matching
upper bound algorithm.


\vspace{-2mm}
\subsection{Lower bound overview}
\label{sub:LowerBoundOverview}

We prove that any $f$-tolerant emulation of an obstruction-free
WS-Safe \kreg\/ from of a collection of MWMR atomic registers stored
on a collection $\mathcal{S}$ of crash-prone servers has resource
complexity of at least $ kf + \Big\lceil \frac{kf}{|\mathcal{S}|-(f+1)}
\Big\rceil (f+1)$.



Our proof exploits the fact that the environment is allowed to
prevent a pending low-level write from taking effect
for arbitrarily long~\cite{aguilera-nad}. As a result, a client
executing a high-level \Write\ operation cannot reliably store
the requested value in a base register that has a pending write
as this write may take effect at a later time thus erasing the stored value.
At the same time, the client cannot wait for all base registers
on which it triggers low-level operations to respond, since up to
$f$ of them may reside on faulty servers.
It therefore must be able to complete a high-level write without
waiting for responses from up to $f$ registers. Consequently, the next high-level write (by a different
client) cannot reliably use these registers (as they might have
outstanding low-level writes), and is therefore forced to use
additional registers thus causing the total number of registers grow
with each subsequent write.

In our main lemma (Lemma~\ref{lem:exhaustive-run}), we formalize this
intuition as follows: Starting from a run $r_0$ consisting of an
initial configuration, we build a sequence of consecutive extensions
$r_1,\dots,r_k$ so that $r_i$ is obtained from $r_{i-1}$ by having a
new client invoke a high-level write $W_i$ of some (not
previously used) value. We then let the environment behave in an
adversarial fashion (Definition~\ref{def:ad}) by blocking the
responses from the writes triggered on at most $f$ base
registers as well as the prior pending low-level writes.  In
Lemma~\ref{lem:1write}, we show that $W_i$ must terminate
without waiting for these responses to arrive. Furthermore, in
Lemma~\ref{lem:2f}, we show that $W_i$ must invoke low-level
writes on at least $2f+1$ base registers (residing on $\ge
2f+1$ different servers) that do not have any prior pending
writes.
This, combined with Lemma~\ref{lem:1write}, implies that by the
time $W_i$ completes, there are at least $f$ more registers on
at least $f$ servers with pending writes after $W_i$ completes.
Thus, by the time the $k$th high-level write completes, the total number of covered registers is at least $kf$ (see
Lemma~\ref{lem:exhaustive-run}(\ref{lem:CovF}).

To obtain a stronger bound, our construction is parameterized by
an {\em arbitrary}\/ subset $F$ of servers such that $|F|=f+1$.
We show that the extra storage available on these servers cannot
in fact, be utilized by an emulation (see
Lemma~\ref{lem:exhaustive-run}(\ref{lem:F}) forcing it to use at
least $kf$ registers on the remaining $\mathcal{S} \setminus F$
servers to accommodate the same number $k$ of writers.  We use
this result in Theorem~\ref{thm:kf}, to show that the number of
base registers required for the emulation is a function of $k$
and $|\mathcal{S}|$.

\vspace{-2mm}
\subsection{Lower Bounds}
\label{sub:LowerBound}

%



For any time $t$ (following the $t$\textsuperscript{th} action) in a
run $r$ of the emulation algorithm we define the following:

\begin{itemize}

\item {\em Covering write}: Let $w$ be a low-level write triggered on
  a base register $b$ at times $\le t$. We will refer to $w$ as
  {\em covering}\/ \emph{at} $t$, and to $b$ as being {\em
  covered by}\/ $w$ \emph{at} $t$ if $w$ is pending at $t$.

  
\item $C(t) \subseteq \mathbb{C}$: the set of all clients that
have completed a high-level write operation at times $\le t$.

%

  \item $Cov(t) \subseteq \mathcal{B}$: the set of all base registers
  being covered by some low-level write at time $t$.



\end{itemize}



\noindent
We first prove the following key lemma:

\begin{lemma}
  For all $k>0$, $f>0$, let $A$ be an $f$-tolerant algorithm that
  emulates a WS-Safe obstruction-free \kreg\/ using a collection
  $\mathcal{S}$ of servers storing a collection $\mathcal{B}$ of
  wait-free MWMR atomic registers.  Then, for every $F \subseteq S$
  such that $|F| = f +1$, there exist $k$ failure-free runs $r_i$,
  $0 \le i \le k$, of $A$ such that (1) $r_0$ is a run consisting of
  an initial configuration and $t_0=0$ steps, and (2) for all
  $i \in [k]$, $r_i$ is a write-only
  sequential extension of $r_{i-1}$
  ending at time $t_i
  > 0$ that consists of $i$ complete high-level writes of
  $i$
  distinct values $v_1,\dots,v_i$
  by $i$ distinct clients $c_1,\dots,c_i$ such that:

  
  \begin{enumerate}[label=\alph*)]
  \begin{multicols}{2}
    
  \item $|Cov(t_i)| \ge if$
    \label{lem:CovF}
    
  \item $\delta(Cov(t_i)) \cap F = \emptyset$
    \label{lem:F}
    
%
%

  \end{multicols}     
  \end{enumerate}

  \label{lem:exhaustive-run}
\end{lemma}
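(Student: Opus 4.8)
The plan is to prove the lemma by induction on $i$, building the runs one high-level write at a time. The base case $i=0$ is immediate: $r_0$ is the initial configuration with $t_0=0$, so no low-level write has been triggered, giving $Cov(t_0)=\emptyset$; hence $|Cov(t_0)|=0=0\cdot f$ establishes (\ref{lem:CovF}) and $\delta(\emptyset)\cap F=\emptyset$ establishes (\ref{lem:F}). For the inductive step I assume $r_{i-1}$ has been constructed with its stated properties — in particular at least $(i-1)f$ registers are covered at $t_{i-1}$, all on servers outside $F$ — and I extend it by having a fresh client $c_i$ invoke a \Write\ $W_i$ of a new value $v_i$, executed under the adversarial environment of Definition~\ref{def:ad} that keeps all the writes covering $Cov(t_{i-1})$ pending and additionally withholds the responses of the low-level writes $W_i$ triggers on a carefully chosen set of $f$ servers.

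The step rests on two facts that I would isolate as Lemma~\ref{lem:1write} and Lemma~\ref{lem:2f}. First (Lemma~\ref{lem:1write}), because $A$ is $f$-tolerant and obstruction-free and $W_i$ runs in isolation in a write-sequential run, $W_i$ must return even when the responses from any fixed set of $f$ servers are withheld: otherwise, in the fair run where those $f$ servers genuinely crash at the start of $W_i$, the operation would run forever, contradicting obstruction freedom. Second (Lemma~\ref{lem:2f}), $W_i$ must trigger low-level writes on at least $2f+1$ registers, residing on at least $2f+1$ distinct servers, none of which carries a prior pending write; intuitively $W_i$ cannot safely record $v_i$ on a register that already has a pending write, since the adversary may later let that stale write take effect and erase $v_i$, and a fresh quorum of size $2f+1$ is needed for $v_i$ to remain recoverable by a subsequent WS-Safe \Read.

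Granting these, the combinatorial heart of the step is the choice of which $f$ servers to block. Since $W_i$ writes fresh registers on at least $2f+1$ distinct servers and $|F|=f+1$, at least $(2f+1)-(f+1)=f$ of those servers lie outside $F$. Let $D$ be any $f$ of them; then $D\cap F=\emptyset$ and each server in $D$ carried no prior pending write. I run $W_i$ in the scenario where the servers in $D$ crash at the moment $W_i$ begins: by Lemma~\ref{lem:1write} the operation returns, and at the instant it returns its low-level writes on the $D$-registers are still pending. I then take the finite prefix ending at $W_i$'s return and reinterpret the crashes of $D$ as mere delays — a change indistinguishable to every client and base object — yielding a genuinely failure-free extension $r_i$. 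The $f$ registers on $D$ are newly covered (their servers had no prior pending write, so they are disjoint from $Cov(t_{i-1})$) and lie off $F$, giving $|Cov(t_i)|\ge (i-1)f+f=if$ for (\ref{lem:CovF}) and $\delta(Cov(t_i))\cap F=\emptyset$ for (\ref{lem:F}).

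The delicate point at the level of this lemma is managing the atomicity of the base registers so that the $(i-1)f$ old covering writes stay covered throughout $W_i$: their servers lie off $F$ and, being absent from $D$, are treated as correct, so I must schedule all of $W_i$'s finitely many steps first and place the linearization points of these pending writes strictly after $t_i$, ensuring the base-register contents $W_i$ observes are unchanged and truncating the prefix exactly at $W_i$'s return before fairness would force any withheld response to arrive. Coupled with the two auxiliary lemmas — whose own proofs carry the real weight, since they must rule out any cleverer protocol that reuses covered registers or stores $v_i$ on fewer than $2f+1$ servers — this completes the induction.
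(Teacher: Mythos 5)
Your overall architecture matches the paper's: induction on $i$, an auxiliary lemma forcing $W_i$ to return despite $f$ withheld servers (Lemma~\ref{lem:1write}), an auxiliary lemma forcing $W_i$ to write fresh registers on more than $2f$ servers (Lemma~\ref{lem:2f}), and the crash-versus-delay indistinguishability argument. However, there is a genuine gap in how you select the $f$ servers whose responses are withheld. You pick $D$ as ``any $f$'' of the fresh servers outside $F$ that $W_i$ writes to, and then run $W_i$ with $D$ crashed from the outset. This is circular: the set of servers on which $W_i$ triggers low-level writes is determined by the run, and the run is determined by which responses are withheld. If you instead choose $D$ from a failure-free execution of $W_i$ and then re-execute with $D$ crashed, $c_i$'s behaviour diverges as soon as an expected response from $D$ fails to arrive; in the new run $W_i$ may trigger writes on only a few registers of $D$ before assembling its quorum elsewhere, in which case fewer than $f$ registers end up newly covered and part~(\ref{lem:CovF}) fails. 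The paper avoids this with the \emph{adaptive} adversary $Ad_i$: the blocked set $Q_i(t)$ grows dynamically, absorbing each new server outside $F$ the moment one of its registers becomes covered, up to $f$ servers, while the $F_i$, $M_i$, $G_i$ bookkeeping keeps the total blocked set within the $f$-crash budget so that the indistinguishability argument of Lemma~\ref{lem:1write} still applies. This adaptive chasing is the technical heart of the proof and is not recoverable from a fixed choice of $D$.

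Two smaller omissions. First, for the obstruction-freedom contradiction in Lemma~\ref{lem:1write} you need an infinite \emph{fair} run in which the old covering writes on $Cov(t_{i-1})$ never respond; since their servers are correct, fairness forces their responses unless their issuing clients are faulty, so the auxiliary run must also crash $c_1,\dots,c_{i-1}$ (as the paper does) --- truncating the prefix only handles the final finite run $r_i$, not the hypothetical fair run used for the contradiction. Second, cutting $r_i$ exactly at $W_i$'s return does not give part~(\ref{lem:F}): $W_i$ may still have writes pending on registers of servers in $F$ at that instant, so the run must be extended until all unblocked writes on $\delta^{-1}(F)$ have responded before fixing $t_i$.
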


\vspace*{-3mm}
Fix arbitrary $k>0$, $f>0$, and a set $F$ of servers such that
$|F|=f+1$. We proceed by induction on $i$, $0 \le i \le k$.
\textbf{Base:} Trivially holds for the run $r_0$ of $A$ consisting of
$t_0=0$ steps.  \textbf{Step:} Assume that $r_{i-1}$ exists for all
$i\in [k-1]$.  We show how $r_{i-1}$ can be extended up to time
$t_i > t_{i-1}$ so that the lemma holds for the resulting run $r_i$.
%
For the remainder of the Lemma~\ref{lem:exhaustive-run}'s proof, we
will assume without loss of generality that every low-level write
operation is linearized simultaneously with its respond step.
Formally:
\begin{assumption}[Write Linearization]
  For every extension $r$ of $r_{i-1}$ and a base object
  $b\in \mathcal{B}$, let $L_{r|b}$ be a linearization of $r |
  b$. Then, $L_{r|b}$ does not include any low-level write %
  operations in $pending(r | b)$, and for any two low-level writes
  $w_1, w_2 \in complete(r | b)$ such that
  $respond(w_1) \prec_{r | b} respond(w_2)$,
  $w_1 \prec_{L_{r | b}} w_2$.
\label{ass:linear}
\end{assumption}
Note that the above implies that no low-level write $w$ that is
covering some register $b$ at a time $t$ in $r$ will be observed
by any low-level reads from $b$ as having taken effect until
after $w$'s respond event occurs.

We proceed by introducing the following notation:
\begin{definition}
  Let $r$ be an extension of $r_{i-1}$. For all times $t\ge t_{i-1}$
  in $r$, let

\begin{enumerate}

  
\item $Tr_i(t)$: the set of all base
  registers which have a low-level write triggered on between
  $t_{i-1}$ and~$t$. \label{def:notation:tr}

\vspace*{-5mm}
\item $Rr_i(t) \subseteq Tr_i(t)$ be the set of all base registers
  which had a low-level write triggered on and responded (took effect)
  between $t_{i-1}$ and $t$. \label{def:notation:rr}
  
\vspace*{-1mm}
\item $Cov_i(t) = Cov(t) \setminus Cov(t_{i-1})$ be the set of all
  base registers that have been newly covered between $t_{i-1}$ and
  $t$. Note that $Cov_i(t) \subseteq
  Tr_i(t)$. \label{def:notation:covi}

 \vspace*{-1mm}
\item $Q_i(t) \subseteq {\cal S}$ be the set of all servers such that
  $Q_i(t) = \delta(Cov_i(t)) \setminus F$ if
  $|\delta(Cov_i(t)) \setminus F| \le f$, and $Q_i(t) = Q_i(t-1)$,
  otherwise. \label{def:notation:qi}

\vspace*{-1mm}
\item
  $F_i(t) \triangleq \{ s \in F \mid \delta^{-1}(\{s\}) \cap Rr_i(t)
  \neq \emptyset \}$,
  i.e., $F_i(t)$ is the set of all servers in $F$ having a register
  that responded to a low-level write invoked after
  $t_{i-1}$. \label{def:notation:fi}

\vspace*{-1mm}
\item $M_i(t) \triangleq \delta(Cov_i(t)) \cap (F \setminus F_i(t))$,
  i.e., $M_i(t)$ is the set of all servers in $F$ with at least one
  register covered by a low-level write invoked after $t_{i-1}$ and
  without registers that have responded to the low-level
  writes invoked after $t_{i-1}$.
  \label{def:notation:mi}

\item $G_i(t) \subseteq {\cal S}$ be the set of all servers such that
  $G_i(t) = M_i(t)$ if $|Q_i(t)|<|F_i(t)|$, and $G_i(t) = \emptyset$,
  otherwise. \label{def:notation:gi}

\end{enumerate}
\label{def:notation}
\end{definition}

\noindent
Below we will introduce the adversary $Ad_i$, which causes $A$ to
gradually increase the number of base registers covered after
$t_{i-1}$ by delaying the respond actions of some of the previously
triggered low-level writes.

\begin{definition}[Blocked Writes] Let $r$ be an extension of
$r_{i-1}$. For all times $t\ge t_{i-1}$ in $r$, let
$\BlockedWrites_i(t)$ be the set of all low-level covering
writes $w$ satisfying either one of the following two
conditions:
  \begin{enumerate}

    \item $w$ was triggered by a client in $C(t_{i-1})$, or

    \item $w$ was triggered on a base register in
      $\delta^{-1}(Q_i(t) \cup G_i(t))$.

  \end{enumerate}
\label{def:blocked-writes}
\end{definition}

We say that a pending low-level write $w$ is \emph{blocked} in an
extension $r$ of $r_{i-1}$ if there exists a time $t\ge t_{i-1}$ such
that for all $t'>t$ in $r$, $op \in \BlockedWrites_i(t')$.  The
following definition specifies the set of the environment behaviours
that are allowed by $Ad_i$ in all extensions of $r_{i-1}$:







\begin{definition}[$Ad_i$]
	
  For an extension $r$ of $r_{i-1}$ we say that the environment {\em
    behaves like $Ad_i$}\/ after $r_{i-1}$ in $r$ if the following
  holds:
	 
    \begin{enumerate}
      
    \item There are no failures after time $t_{i-1}$ in $r$.
    
    \item \label{def:ad:block} For all $t \geq t_{i-1}$ in $r$, if a low-level write
      $w \in \BlockedWrites_i(t)$, then $w$ does not respond at
      $t$. 
    
    \item If $r$ is infinite then:
    
    \begin{enumerate}
    \item Every pending low-level read or write that is not blocked in
      $r$ eventually responds.

    \item Every trigger or return action that is ready to be executed
      at a client $c$ in $r$ is eventually executed.
     \end{enumerate}

    \end{enumerate}

\label{def:ad}
\end{definition}

\vspace*{-3mm}
For a finite extension $r$ of $r_{i-1}$, we will write $\tup{r}{Ad_i}$
to denote the set of all extensions of $r$ in which the environment
behaves like $Ad_i$ after $r_{i-1}$; and we will write
$\triple{r}{Ad_i}{t}$ to denote the subset of $\tup{r}{Ad_i}$
consisting of all runs having exactly $t$ steps.
For $X\in \{Q_i, F_i, M_i\}$ and a run
$r\in \triple{r_{i-1}}{Ad_i}{t}$, we say that $X$ is {\em stable}\/
after $r$ if for all $t' \geq t$ for all extensions $r'\in
\triple{r}{Ad_i}{t'}$, $X(t') = X(t)$.

The following lemma (proven in Section~\ref{sec:space:app} of the
appendix) asserts several technical facts implied directly by
Definitions~\ref{def:notation} and \ref{def:ad}.

\begin{lemma}
For all $t \ge t_{i-1}$ and $r\in \triple{r_{i-1}}{Ad_i}{t}$, all of
the following holds at time $t$ in $r$:
\begin{AutoMultiColItemize}
\item $Q_i(t) \subseteq \delta(Cov_i(t)) \setminus F$ \label{lem:facts:qsubset}
\item $Q_i(t) \subseteq Q_i(t+1)$  \label{obs:ad}
\item $F_i(t) \subseteq F_i(t+1)$  \label{obs:adf}
\item $|F_i(t)| - |Q_i(t)| \leq 1$ \label{obs:ad1}
\item $|Q_i(t)| \le f$ \label{obs:qlim}
\item $|F_i(t)| \le f+1$ \label{obs:filim}
\item $F_i(t) = F_i(t+1)\ \implies M_i(t) \subseteq M_i(t+1)$
 \label{lem:facts:covfi}
\item $|M_i(t)| \le f+1$ \label{lem:facts:maxcovfi}
\item $|\delta(Cov_i(t)) \setminus F| \ge f \implies |Q_i(t)|\ge f$
\label{lem:facts:qisize}
\item $|\delta(Cov_i(t)) \setminus F| < f \implies \delta(Rr_i(t)) \setminus F = \emptyset$
\label{lem:facts:rrr}
\item $(Q_i(t) \cup M_i(t)) \cap \delta^{-1}(Rr_i(t)) = \emptyset$
\label{lem:facts:silent}
\end{AutoMultiColItemize}
\label{lem:facts1}
%
\end{lemma}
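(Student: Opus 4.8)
The plan is to prove all eleven facts simultaneously by induction on the time $t \ge t_{i-1}$, via a single step-by-step argument. At the base $t=t_{i-1}$ the sets $Cov_i, Tr_i, Rr_i$ are empty, hence $Q_i, F_i, M_i, G_i$ are all empty and every claim holds trivially. Throughout I will lean on the inductive hypothesis of the enclosing Lemma~\ref{lem:exhaustive-run}, namely that $\delta(Cov(t_{i-1})) \cap F = \emptyset$; this is precisely what guarantees that any server in $F$ which becomes covered after $t_{i-1}$ contributes to $Cov_i$ (its covered register cannot already lie in $Cov(t_{i-1})$), a fact I will need for every statement involving $F_i$ and $M_i$. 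For the inductive step $t \to t+1$ I classify the single action taken: low-level read steps and high-level return steps leave $Tr_i, Rr_i, Cov_i$ and hence all derived sets unchanged, so only a low-level write trigger (which can only enlarge the covering and never touches $Rr_i$, hence never changes $F_i$) and a low-level write respond (which adds a register to $Rr_i$ and may uncover a register) need analysis.

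A handful of the claims are immediate and I would dispose of them first: $|F_i(t)| \le f+1$ and $|M_i(t)| \le f+1$ hold because $F_i, M_i \subseteq F$ and $|F|=f+1$; $F_i(t) \subseteq F_i(t+1)$ holds because $Rr_i$ only grows with time; and $|Q_i(t)| \le f$ follows directly from Definition~\ref{def:notation}, since either $Q_i(t)=\delta(Cov_i(t))\setminus F$ with that set of size $\le f$, or $Q_i(t)=Q_i(t-1)$ whose size is $\le f$ by induction. The central tool for the remaining claims is a covering invariant extracted from the adversary: by Definition~\ref{def:blocked-writes}, every low-level write on a register of a server in $Q_i(t)\cup G_i(t)$ is blocked, so it cannot respond; therefore (i) no register on a $Q_i$- or $G_i$-server ever becomes uncovered, and (ii) whenever a write on $b$ does respond, $\delta(b)\notin Q_i\cup G_i$ at that step. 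Fact (i) yields $Q_i(t)\subseteq\delta(Cov_i(t))\setminus F$ (the frozen servers stay covered) and the monotonicity $Q_i(t)\subseteq Q_i(t+1)$, and it shows that once $|\delta(Cov_i)\setminus F|$ first reaches $f$ the set $Q_i$ has size $f$ and, those servers being blocked, the quantity never drops below $f$ again; this gives $|\delta(Cov_i(t))\setminus F|\ge f \Rightarrow |Q_i(t)|\ge f$. For the contrapositive form $|\delta(Cov_i(t))\setminus F|<f \Rightarrow \delta(Rr_i(t))\setminus F=\emptyset$ I would note that a covered non-$F$ server lies in $Q_i$ unless $Q_i$ is frozen, so by (ii) a non-$F$ register can respond only after $Q_i$ has frozen at size $f$, after which $|\delta(Cov_i)\setminus F|$ stays $\ge f$.

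The statement that no server of $Q_i(t)\cup M_i(t)$ carries a register in $Rr_i(t)$ is maintained as follows: a server whose register just entered $Rr_i$ leaves $M_i$ (an $F$-server simultaneously enters $F_i$, a non-$F$-server is never in $M_i$), and it stays out of $Q_i$ because either it leaves $\delta(Cov_i)$ or, if it remains covered, $Q_i$ is necessarily frozen and acquires no new members; monotonicity of $Q_i$ and $F_i$ keeps the older members of $\delta(Rr_i)$ excluded. The monotonicity of $M_i$ under a fixed $F_i$ is the observation that any respond on a register of an $M_i$-server would put that register into $Rr_i$ and hence its ($F$-) server into $F_i$, changing $F_i$; so if $F_i(t)=F_i(t+1)$ no register of an $M_i$-server responds, the server stays covered, and $M_i(t)\subseteq M_i(t+1)$.

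The main obstacle, and the claim I would treat most carefully, is $|F_i(t)|-|Q_i(t)|\le 1$. Since $F_i$ is untouched by triggers, only the respond case matters, and only when the responding register sits on a server $s\in F$ new to $F_i$. Here the $F$-uncovered inductive hypothesis is essential: $s$ is covered at $t$ by the write about to respond and, as $s\in F$, that covering register lies in $Cov_i(t)$, so $s\in \delta(Cov_i(t))\cap(F\setminus F_i(t))=M_i(t)$. If $|Q_i(t)|<|F_i(t)|$ then $G_i(t)=M_i(t)\ni s$ and the write on $s$ would be blocked, contradicting that it responds; hence $|Q_i(t)|\ge |F_i(t)|$, and after the response $|F_i(t+1)|=|F_i(t)|+1\le |Q_i(t)|+1\le |Q_i(t+1)|+1$, closing the induction. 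The delicate points throughout are the exact instants at which $Q_i, F_i, M_i, G_i$ are evaluated relative to the responding action and the transitions of $Q_i$ between its frozen and unfrozen regimes; these are where a careless case split could wrongly admit a response that the adversary in fact blocks.
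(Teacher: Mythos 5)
Your proposal is correct and follows essentially the same route as the paper's proof: induction on $t$, with the cardinality and monotonicity facts dispatched directly from the definitions, the blocking of covering writes on servers in $Q_i(t)\cup G_i(t)$ as the central invariant, and the $G_i=M_i$ mechanism (active exactly when $|Q_i|<|F_i|$) used to rule out a response that would push $|F_i|$ two ahead of $|Q_i|$. Your explicit appeal to the enclosing induction hypothesis $\delta(Cov(t_{i-1}))\cap F=\emptyset$ to place the responding $F$-server's covered register in $Cov_i(t)$ (and hence the server in $M_i(t)$) is a step the paper leaves implicit, but it is the same argument.
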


\vspace{-2mm}
 The following corollary follows immediately from the
claims \ref{obs:ad}--\ref{obs:adf} and
\ref{obs:qlim}--\ref{lem:facts:maxcovfi} of
Lemma~\ref{lem:facts1}.

\begin{corollary}
  \hspace{-0.2cm} There exists a run $r \in \tup{r_{i-1}}{Ad_i}$ such that $F_i$,
  $Q_i$, and $M_i$ are all stable after $r$.
  %
\label{cor:stable}
\end{corollary}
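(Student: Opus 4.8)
The plan is to prove Corollary~\ref{cor:stable} by arguing that the three quantities $|F_i(t)|$, $|Q_i(t)|$, and (in a suitable sense) $|M_i(t)|$ are all monotone and bounded along any run in $\tup{r_{i-1}}{Ad_i}$, so that they must eventually stop changing; stability of the underlying sets then follows because the monotonicity claims in Lemma~\ref{lem:facts1} relate the sets themselves (by inclusion), not merely their cardinalities. Concretely, I would fix an arbitrary infinite run $r \in \tup{r_{i-1}}{Ad_i}$ (which exists by clauses~3(a)--(b) of Definition~\ref{def:ad} applied to an infinite extension) and track the evolution of $F_i$ and $Q_i$ step by step from $t_{i-1}$ onward.

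\textbf{Stabilizing $F_i$ and $Q_i$.}
First I would observe that by Lemma~\ref{lem:facts1}(\ref{obs:adf}) the sets $F_i(t)$ are nondecreasing in $t$, and by Lemma~\ref{lem:facts1}(\ref{obs:filim}) their size never exceeds $f+1$; hence $|F_i(t)|$ is a nondecreasing integer sequence bounded above by $f+1$, so it converges, and since the sets are nested by inclusion, there is a time $t_F$ after which $F_i(t) = F_i(t_F)$ for all $t \ge t_F$. The same argument applied to Lemma~\ref{lem:facts1}(\ref{obs:ad}) and Lemma~\ref{lem:facts1}(\ref{obs:qlim}) gives a time $t_Q$ after which $Q_i$ is constant: $Q_i(t) = Q_i(t_Q)$ for all $t \ge t_Q$. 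Taking $t^\star = \max(t_F, t_Q)$ and letting $r^\star$ be the prefix of $r$ of length $t^\star$, both $F_i$ and $Q_i$ are stable after $r^\star$, since every further extension in $\triple{r^\star}{Ad_i}{t'}$ still obeys the inclusions of Lemma~\ref{lem:facts1}(\ref{obs:ad})--(\ref{obs:adf}) together with the same cardinality caps, pinning the sets to their value at $t^\star$.

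\textbf{Stabilizing $M_i$.}
The remaining piece is $M_i$, and this is the delicate step, because $M_i$ is not asserted to be globally monotone: Lemma~\ref{lem:facts1}(\ref{lem:facts:covfi}) gives $M_i(t) \subseteq M_i(t+1)$ only \emph{conditionally}, namely when $F_i(t) = F_i(t+1)$. The resolution is to exploit that $F_i$ has already stabilized. For all times $t \ge t^\star$ we have $F_i(t) = F_i(t+1)$, so the hypothesis of Lemma~\ref{lem:facts1}(\ref{lem:facts:covfi}) holds at every such step, yielding $M_i(t) \subseteq M_i(t+1)$ for all $t \ge t^\star$. Thus $M_i$ is monotone nondecreasing from $t^\star$ onward, and by Lemma~\ref{lem:facts1}(\ref{lem:facts:maxcovfi}) its size is bounded by $f+1$; the same converging-nested-sequence argument then produces a time $t_M \ge t^\star$ after which $M_i$ is constant. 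Setting $r$ (abusing notation) to be the prefix of length $t_M$ gives a run in $\tup{r_{i-1}}{Ad_i}$ after which $F_i$, $Q_i$, and $M_i$ are simultaneously stable, as required.

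\textbf{Main obstacle.}
The one subtlety to get right is precisely the conditional nature of the $M_i$ monotonicity in Lemma~\ref{lem:facts1}(\ref{lem:facts:covfi}): one cannot stabilize $M_i$ in isolation, and the argument genuinely relies on stabilizing $F_i$ \emph{first} and then running the monotone-convergence argument for $M_i$ only on the tail where $F_i$ is already frozen. Care is also needed to phrase ``stable after $r$'' correctly for \emph{all} extensions rather than along the single witnessing run: this follows because the monotonicity and cardinality bounds in Lemma~\ref{lem:facts1} hold for every $r \in \triple{r_{i-1}}{Ad_i}{t}$, so once a nested bounded sequence has reached its maximal size at $t^\star$ (resp.\ $t_M$), no extension can enlarge it further, forcing equality on all continuations.
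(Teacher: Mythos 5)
Your argument is exactly the one the paper intends: the text offers no written proof, merely asserting that the corollary ``follows immediately'' from claims \ref{obs:ad}--\ref{obs:adf} and \ref{obs:qlim}--\ref{lem:facts:maxcovfi} of Lemma~\ref{lem:facts1}, and those are precisely the claims you invoke. You also correctly identify the one genuinely non-immediate point, namely that the monotonicity of $M_i$ in claim~\ref{lem:facts:covfi} is conditional on $F_i$ not changing, so $F_i$ (and $Q_i$) must be frozen first before the bounded-monotone argument can be applied to $M_i$. The only soft spot is your final step from ``converged along one chosen infinite run'' to ``stable for \emph{all} extensions'': convergence along a single run does not mean the set has reached its maximal achievable size, so a different extension branching off after $t^\star$ could in principle still enlarge $F_i$ or $Q_i$. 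The safe phrasing is the iterative one: as long as some set is not stable after the current finite run, pick an extension in which it strictly grows; since each of $|F_i|$, $|Q_i|$, $|M_i|$ is bounded ($f+1$, $f$, $f+1$ respectively) and each change is a strict enlargement (for $M_i$, only once $F_i$ and $Q_i$ have already been made stable), this process terminates after finitely many extensions, and the run at which it terminates is stable by definition. With that adjustment the proof is complete and matches the paper.
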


We first show that $r_{i-1}$ can be extended with a complete
high-level write $W_i$ by a new client $c_i$ such that the environment
behaves like $Ad_i$ until $W_i$ returns. Roughly, the reason for this
is that $Ad_i$ guarantees that after $r_{i-1}$, $c_i$ would only miss
responses from for the writes invoked on at most $f$ servers (see
Claim~\ref{claim:size}) as well as those that might have been invoked
in $r_{i-1}$ by other clients $\{c_1,\dots,c_{i-1}\}$, which $c_i$ is
unaware of.  As a result, the involved servers and clients would
appear to $c_i$ as faulty after $r_{i-1}$, and therefore, to ensure
obstruction freedom, it must complete $W_i$ without waiting for the
outstanding writes to respond.

\begin{lemma}
  Let $r \in \tup{r_{i-1}}{Ad_i}$ be a run consisting of $r_{i-1}$
  followed by a high-level write invocation $W_i$ by client
  $c_i \not\in C(t_{i-1})$. Then, there exists a run
  $r_r \in \tup{r}{Ad_i}$ in which $W_i$ returns.
  \label{lem:1write}
\end{lemma}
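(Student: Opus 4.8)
The plan is to obtain the termination of $W_i$ from obstruction freedom by exhibiting a fair, effectively solo, extension of $r$ in which at most $f$ servers fail, and then transporting that extension back into $\tup{r}{Ad_i}$ by an indistinguishability argument. First I would pin down the set of servers whose low-level writes $Ad_i$ withholds. Using Corollary~\ref{cor:stable}, fix a run in $\tup{r_{i-1}}{Ad_i}$ after which $Q_i$, $F_i$, and $M_i$ (hence $G_i$) are stable, and let $S_{\mathit{blk}} = Q_i \cup G_i$ be the resulting fixed set of servers named in the second clause of Definition~\ref{def:blocked-writes}. The key quantitative input is $|S_{\mathit{blk}}| \le f$: when $|Q_i| \ge |F_i|$ we have $G_i = \emptyset$, so $|S_{\mathit{blk}}| = |Q_i| \le f$ by Lemma~\ref{lem:facts1}; when $|Q_i| < |F_i|$ we have $G_i = M_i$, and since $F_i$ and $M_i$ are disjoint subsets of $F$ with $|F| = f+1$ and $|F_i| = |Q_i| + 1$ (Lemma~\ref{lem:facts1}), we get $|Q_i| + |M_i| \le f$, so again $|S_{\mathit{blk}}| \le f$. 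This is the content of Claim~\ref{claim:size}.

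Next I would build the witness run $r'$. Starting from $r$, crash every client in $C(t_{i-1}) = \{c_1,\dots,c_{i-1}\}$ and every server in $S_{\mathit{blk}}$ immediately after $t_{i-1}$, and let $c_i$ take steps while all remaining low-level operations on correct base objects eventually respond, so that $r'$ is fair. Client crashes are unrestricted, and since $|S_{\mathit{blk}}| \le f$ only at most $f$ servers crash, so $f$-tolerance guarantees that $A$ remains obstruction-free in $r'$. Moreover the only high-level operations besides $W_i$ are $W_1,\dots,W_{i-1}$, which completed by $t_{i-1}$ and whose authors are now faulty; hence $W_i$ is not concurrent with any high-level operation of a correct client, i.e.\ it runs solo. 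Obstruction freedom then yields a finite time $t_{\mathit{fin}}$ at which $W_i$ returns in $r'$.

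Finally I would transfer this back to $\tup{r}{Ad_i}$. Construct $r_r$ by replaying $c_i$'s actions exactly as in the prefix of $r'$ up to $t_{\mathit{fin}}$, never letting $c_1,\dots,c_{i-1}$ take further steps, and delivering responses to $c_i$'s low-level operations precisely when $r'$ does. For operations on servers outside $S_{\mathit{blk}}$ the responses agree because the set of low-level writes that have taken effect on those registers is identical in both runs: it consists of the writes completed within $r_{i-1}$ together with $c_i$'s own completed writes, while the old pending writes of $C(t_{i-1})$ never take effect by the Write Linearization assumption (Assumption~\ref{ass:linear}). For operations on $S_{\mathit{blk}}$, $c_i$'s writes receive no response in $r'$ (crashed server) and are withheld in $r_r$ as blocked writes (Definition~\ref{def:ad}(\ref{def:ad:block})); since each such write covers a register on an $S_{\mathit{blk}}$ server, the bookkeeping of Definition~\ref{def:notation} along $r_r$ places $S_{\mathit{blk}}$ into $Q_i \cup G_i$, so withholding them is exactly $Ad_i$'s prescribed behaviour. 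Thus $r_r \in \tup{r}{Ad_i}$, $c_i$'s view coincides with its view in $r'$, and $W_i$ returns in $r_r$.

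The step I expect to be the main obstacle is this last one, because a crashed server in $r'$ withholds both reads and writes, whereas under $Ad_i$ a server in $S_{\mathit{blk}}$ is alive and only its writes are blocked, so $c_i$'s reads there would eventually respond. The resolution is that $r_r$ need only be a \emph{finite} run: the liveness clause of Definition~\ref{def:ad} applies only to infinite runs, so in the finite prefix ending at $t_{\mathit{fin}}$ I am free to leave $c_i$'s reads on $S_{\mathit{blk}}$ pending, exactly mirroring the crashed-server behaviour of $r'$. The secondary subtlety I would verify is that the blocked set computed along $r_r$ really equals $Q_i \cup G_i$ (so that $r_r$ genuinely behaves like $Ad_i$), which holds because every withheld write covers a register on an $S_{\mathit{blk}}$ server while $F_i$ remains empty on those servers precisely because none of their writes respond.
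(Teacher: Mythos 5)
Your overall strategy is the paper's: use Corollary~\ref{cor:stable} to reach a point where the adversary's withheld set is stable, bound that set by $f$ (your case analysis on $|Q_i|$ versus $|F_i|$ reproduces Claim~\ref{claim:size}), crash those servers together with the clients $c_1,\dots,c_{i-1}$, invoke obstruction freedom plus $f$-tolerance on the resulting solo run, and transport the terminating execution back into $\tup{r}{Ad_i}$ by indistinguishability. Your explicit resolution of the read-response mismatch (crashed servers withhold reads, $Ad_i$ does not, but the liveness clauses of Definition~\ref{def:ad} bind only infinite runs) is correct and is a point the paper leaves implicit.

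There is, however, a genuine gap in the order of construction, and it sits exactly at the step you flag as the ``secondary subtlety.'' You compute $S_{\mathit{blk}}$ from one stable $Ad_i$-run but then build the crashed run \emph{from scratch}, crashing $S_{\mathit{blk}}$ immediately after $t_{i-1}$ and letting everything else respond fairly. The adversary of Definition~\ref{def:ad} is \emph{adaptive}: by Definition~\ref{def:notation}.\ref{def:notation:qi}, $Q_i(t)$ absorbs whichever servers outside $F$ first acquire a newly covered register in the run at hand, and once a server enters $Q_i$ its pending write may never respond. Your crashed run need not cover servers in the same order as the stable run from which $S_{\mathit{blk}}$ was extracted --- for instance, read responses from the crashed servers are withheld in one run and delivered in the other, so the deterministic client $c_i$ may trigger its low-level writes in a different order. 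If in the crashed run $c_i$ first covers a register on some $s\notin F$ with $s\notin S_{\mathit{blk}}$ while $|\delta(Cov_i(t))\setminus F|\le f$, then along your replay $r_r$ that $s$ enters $Q_i(t)$, its write becomes blocked, yet $r_r$ delivers its response --- violating Definition~\ref{def:ad}.\ref{def:ad:block}, so $r_r\notin\tup{r}{Ad_i}$. Your stated justification only explains why the $S_{\mathit{blk}}$ servers end up blocked, not why no \emph{other} server is forced into $Q_i$ along the replay. The paper avoids this by reversing the order: it first extends $r$ under $Ad_i$ to the stable run $r'$, then defines $r''$ to be \emph{action-for-action identical} to $r'$ except for the crashes, so the coverage order (and hence the blocked set) in the prefix provably coincides; only the fair extension $\sigma$ beyond the stable point is transported, and there stability of $Q_i$, $F_i$, $M_i$ guarantees that no new server enters the blocked set. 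Your proof is repaired by adopting that ordering.
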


By Corollary~\ref{cor:stable}, there exists an extension
$r' \in \triple{r}{Ad_i}{t'}$ where $t' > t_{i-1}$ such that $Q_i$,
$F_i$, and $M_i$ are all stable after $r'$.  If $W_i$ returns in $r'$,
we are done. Otherwise, we will first bound the number of servers
$|Q_i(t') \cup M_i(t')|$ controlled by $Ad_i$ as follows:

\begin{claim}
 Consider a time $t > t_{i-1} $, and a run $r \in
  \triple{r_{i-1}}{Ad_i}{t}$. If $M_i$ is stable after $r$, then
  $|Q_i(t) \cup M_i(t)| \le f$.
\label{claim:size}
\end{claim}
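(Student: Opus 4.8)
The plan is to reduce the claim to a purely combinatorial inequality and then invoke the stability hypothesis to eliminate a single troublesome configuration. First I would observe that $Q_i(t)$ and $M_i(t)$ are disjoint: by Lemma~\ref{lem:facts1}(\ref{lem:facts:qsubset}) we have $Q_i(t) \subseteq \delta(Cov_i(t)) \setminus F \subseteq {\cal S}\setminus F$, while $M_i(t) \subseteq F$ by its definition in Definition~\ref{def:notation}. Hence $|Q_i(t)\cup M_i(t)| = |Q_i(t)| + |M_i(t)|$, so it suffices to show $|Q_i(t)| + |M_i(t)| \le f$. I would also record the bound $|F_i(t)| + |M_i(t)| \le |F| = f+1$, which holds because $F_i(t)$ and $M_i(t)$ are disjoint subsets of $F$ (indeed $M_i(t) \subseteq F \setminus F_i(t)$).

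The easy case is $M_i(t) = \emptyset$, where $|Q_i(t)\cup M_i(t)| = |Q_i(t)| \le f$ by Lemma~\ref{lem:facts1}(\ref{obs:qlim}). So assume $M_i(t) \neq \emptyset$. The central reduction is to argue that in this case $G_i(t) = M_i(t)$, equivalently $|Q_i(t)| < |F_i(t)|$ by Definition~\ref{def:notation}(\ref{def:notation:gi}). Granting this, $|Q_i(t)| \le |F_i(t)| - 1$, and therefore $|Q_i(t)| + |M_i(t)| \le (|F_i(t)| + |M_i(t)|) - 1 \le (f+1) - 1 = f$, which is exactly the desired bound. Thus the entire claim hinges on showing that when $M_i$ is stable, the configuration with $M_i(t) \neq \emptyset$ and $G_i(t) = \emptyset$ (the only alternative to $G_i(t) = M_i(t)$, arising when $|Q_i(t)| \ge |F_i(t)|$) cannot occur.

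This is where stability is used. I would argue by contradiction: suppose $G_i(t) = \emptyset$, and pick a server $s \in M_i(t)$ together with a register $b$ on $s$ that is covered at $t$ by a low-level write $w$. By the definition of $Cov_i$, $w$ must have been triggered after $t_{i-1}$ (a write pending at $t$ and triggered at or before $t_{i-1}$ would have made $b \in Cov(t_{i-1})$), and $s \in M_i(t)$ gives $s \notin F_i(t)$. I would then verify $w \notin \BlockedWrites_i(t)$: condition~2 of Definition~\ref{def:blocked-writes} fails since $s \in F$ implies $s \notin Q_i(t)$ and $G_i(t) = \emptyset$, so $b \notin \delta^{-1}(Q_i(t)\cup G_i(t))$; condition~1 fails since $w$ was triggered after $t_{i-1}$ by a client outside $C(t_{i-1})$ (clients in $C(t_{i-1})$ have completed their single write and take no further steps in runs under $Ad_i$). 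Consequently $Ad_i$ (Definition~\ref{def:ad}) permits $w$ to respond in one additional step, producing a run $r' \in \triple{r}{Ad_i}{t+1}$ in which $b$ has responded to a write invoked after $t_{i-1}$; hence $s \in F_i(t+1)$ and so $s \notin M_i(t+1)$. This contradicts stability of $M_i$, which forces $M_i(t+1) = M_i(t) \ni s$. Therefore $G_i(t) = M_i(t)$, and the combinatorial bound from the previous paragraph completes the proof.

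The main obstacle is the contradiction step. The delicate points are (i) confirming $w \notin \BlockedWrites_i(t)$, which relies simultaneously on $s \in F$ (so $s$ is never in $Q_i$) and on the case assumption $G_i(t) = \emptyset$, and (ii) the book-keeping that a covering write on a register in $Cov_i(t)$ is necessarily \emph{fresh}, i.e.\ triggered after $t_{i-1}$ by a not-yet-completed client, so that neither blocking condition applies and the single-step response of $w$ is a legitimate $Ad_i$-extension. Once these are established, the conflict with stability is immediate, and everything else is the elementary counting of the first two paragraphs.
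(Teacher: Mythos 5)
Your proposal is correct and follows essentially the same route as the paper's proof: the trivial case $M_i(t)=\emptyset$, then a contradiction with the stability of $M_i$ showing that $G_i(t)=\emptyset$ is impossible when $M_i(t)\neq\emptyset$ (an unblocked covering write on a server in $M_i(t)$ would eventually respond and move that server into $F_i$), followed by counting inside $F$. Your bookkeeping via $|F_i(t)|+|M_i(t)|\le f+1$ is just a repackaging of the paper's $|M_i(t)|\le|F\setminus F_i(t)|=f-|Q_i(t)|$, and your extra care in checking both conditions of Definition~\ref{def:blocked-writes} is a welcome but non-essential refinement.
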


\begin{proof}
  By Lemma~\ref{lem:facts1}.\ref{obs:qlim}, $|Q_i(t)| \leq f$.  Thus,
  if $M_i(t) = \emptyset$, then $|Q_i(t) \cup M_i(t)| \le f$ as
  needed.  Otherwise ($M_i(t) \neq \emptyset$), we show that
  $|F_i(t)| = |Q_i(t)| + 1$.  Suppose to the contrary that
  $|F_i(t)| \neq |Q_i(t)| + 1$.  Since by
  Lemma~\ref{lem:facts1}.\ref{obs:ad1}, $|F_i(t)| \le |Q_i(t)| + 1$,
  the only possibility for $|F_i(t)| \neq |Q_i(t)| + 1$ is if
  $|F_i(t)| \le |Q_i(t)|$. Thus, by
  Definition~\ref{def:notation}.\ref{def:notation:gi},
  $G_i(t) = \emptyset$, and hence, by
  Definition~\ref{def:blocked-writes}, no writes on the registers in
  $\delta^{-1}(M_i(t))$ are blocked. However, since
  $M_i(t) \neq \emptyset$, at least one register in
  $\delta^{-1}(M_i(t))$ must have an outstanding write.  Therefore, by
  Definition~\ref{def:ad}.3(a), there exists time $t'$, and an
  extension $r'\in \triple{r}{Ad_i}{t'}$ such that one of the
  registers on some server $s \in M_i(t)$ responds at time $t'$.
  Thus, $s \in F_i(t')$, and therefore, $s\not\in M_i(t')$.  Hence,
  $M_i$ is not stable after $r$.  A contradiction to the assumption.
  
%

  Since $F_i(t) \subseteq F$,
  $ |F\setminus F_i(t)| + |F_i(t)| = |F| = f+1$. 
  Hence, $|F\setminus F_i(t)| = f + 1 - |F_i(t)| = f - |Q_i(t)|$.
  Since by Definition~\ref{def:notation}.\ref{def:notation:mi},
  $M_i(t) \subseteq (F\setminus F_i(t))$,
  $|M_i(t)| \le |F\setminus F_i(t)| = f - |Q_i(t)|$. Thus, we receive
  $|Q_i(t) \cup M_i(t)| \le |Q_i(t)| + |M_i(t)| \le f$ as needed.
\end{proof}






\noindent 
We are now ready to complete the proof of Lemma~\ref{lem:1write}:

\begin{proof}[Proof of Lemma~\ref{lem:1write}]
  By Claim~\ref{claim:size}, $|Q_i(t') \cup M_i(t')| \le f$, and by
  Lemma~\ref{lem:facts1}.\ref{lem:facts:silent}, no base registers on
  servers in $Q_i(t') \cup M_i(t')$ have ever responded to any
  low-level writes issued after $t_{i-1}$.  Thus, there exists a
  finite run $r''$, which is identical to $r'$ except all servers in
  $Q_i(t') \cup M_i(t')$ crash immediately after $r$ and each client
  $c_1,\dots,c_{i-1}$ fails before any of its covering writes on
  registers in $Cov(t_{i-1})$ responds.
  By $f$-tolerance and obstruction freedom, there exists a fair
  extension $r''\sigma$ of $r''$ (i.e.,
  $r''\sigma \notin \tup{r''}{Ad_i}$) such that $W_i$ returns in
  $r''\sigma$.  Since $Q_i \cup M_i$ is stable after $r'$, the set of
  registers precluded by $Ad_i$ from responding in $r'$ is identical
  to that in $r''$, and by Assumption~\ref{ass:linear}, no write with
  a missing response is linearized, $r'$ is indistinguishable from
  $r''$ to $c_i$. Thus, $r' \sigma\in \tup{r}{Ad_i}$, and since
  $\sigma$ includes the return event of $W_i$, $r'\sigma$ satisfies
  the lemma.
  
\end{proof}

\noindent We next show that in order to guarantee safety in the
face of the environment behaving like $Ad_i$, $W_i$ must trigger
a low-level write on at least one non-covered register on each server in a set of
$2f+1$ servers. 
An illustration of the runs constructed in the proof appears in
Figure~\ref{fig:runs} in Appendix~\ref{sec:space:app}.

\begin{lemma}
  Consider a run $r \in \triple{r_{i-1}}{Ad_i}{t_r}$ where
  $t_r > t_{i-1}$, consisting of $r_{i-1}$ followed by a complete
  high-level write invocation by client $c_i \not\in C(t_{i-1})$ that
  returns at time $t_r$. Then,
  $|\delta(Tr_i(t_r) \setminus Cov(t_{i-1}))| > 2f$.
\label{lem:2f}
\end{lemma}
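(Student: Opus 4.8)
The plan is to argue by contradiction: assume $|D| \le 2f$, where $D \triangleq \delta(Tr_i(t_r)\setminus Cov(t_{i-1}))$, and construct a run in which a high-level read invoked after $t_r$ is forced by WS-Safety to return $v_i$, yet is presented with a base-object state from which $v_i$ is unrecoverable. The guiding principle is that $c_i$ can reliably deposit $v_i$ only in a \emph{fresh} register (one not already in $Cov(t_{i-1})$) on which its low-level write has responded and which is not itself covered by a pending write: by Assumption~\ref{ass:linear} a pending write is never linearized, so a covered register exhibits its pre-$W_i$ value and, in a sibling run, the covering write may respond later and overwrite $v_i$. Hence the servers that genuinely witness $v_i$ after $t_r$ are exactly those carrying such responded, uncovered fresh registers; call this set $H$. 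I would first pin down $H$ using the given $Ad_i$-run: by Lemma~\ref{lem:1write} $W_i$ returns at $t_r$, and by Lemma~\ref{lem:facts1}.\ref{lem:facts:silent} the servers whose fresh writes $Ad_i$ keeps blocked (a subset of $Q_i(t_r)\cup M_i(t_r)$) carry \emph{no} responded fresh write, so they are disjoint from $H$ and contribute only pending, invisible writes.

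Granting this, $D$ splits into the blocked servers (pending, showing old values) and $H$ (the witnesses of $v_i$), and the core of the proof is the counting step $|H| \le f$. I would derive it from the size bounds of Claim~\ref{claim:size} and Lemma~\ref{lem:facts1} together with the assumption $|D|\le 2f$: the blocked set has size at most $f$, and $W_i$ is precisely forced by $Ad_i$ to abandon that many servers before returning, leaving at most $2f - f = f$ servers in $H$. Once $|H|\le f$, I build the reader run by taking $r$ up to $t_r$, crashing all servers in $H$ (at most $f$, within the fault budget), keeping the blocked fresh writes pending, and letting every old covering write already present in $r_{i-1}$ take effect so that each surviving register settles to its pre-$W_i$ value.

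I would then invoke obstruction freedom and $f$-tolerance to run a single high-level read $Rd$ by a correct client after $t_r$; since $Rd$ is not concurrent with any high-level write, WS-Safety forces it to return $v_i$. But the surviving configuration contains no register exhibiting $v_i$, and is, register for register, indistinguishable to $Rd$ from the configuration reached by the alternative run in which $c_i$ never invokes $W_i$ (so the latest completed write is $W_{i-1}$, or the initial value when $i=1$). Thus $Rd$ returns the same value in both runs, while WS-Safety demands $v_i$ in one and $v_{i-1}$ in the other, a contradiction; hence $|\delta(Tr_i(t_r)\setminus Cov(t_{i-1}))| > 2f$.

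The main obstacle is the counting step $|H|\le f$, i.e.\ showing that the $\le 2f$ touched servers really do divide into $\le f$ blocked servers and $\le f$ witnesses. This is where the bookkeeping of Definition~\ref{def:notation} and Lemma~\ref{lem:facts1} does the work: the disjointness of $H$ from the blocked set rests on \ref{lem:facts:silent}, while the bound on the blocked set rests on Claim~\ref{claim:size}. A secondary subtlety I would handle carefully is the failure budget: the blocked servers must be kept \emph{correct but delayed} rather than crashed (so that the reader may still crash the $\le f$ servers of $H$), which requires the same crash-versus-delay indistinguishability argument used in the proof of Lemma~\ref{lem:1write}, together with ensuring the alternative $W_i$-free run is made to match the surviving state exactly by resolving all old covering writes identically.
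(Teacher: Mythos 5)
Your overall architecture is the same as the paper's: assume $|D|\le 2f$, split the touched servers into those whose fresh low-level writes are held pending (and hence, by Assumption~\ref{ass:linear}, invisible) and those that actually witness $v_i$, crash the witnesses, let the old covering writes from $r_{i-1}$ resolve, and derive a contradiction with WS-Safety via indistinguishability from the run in which $W_i$ was never invoked. That part is sound and matches the paper's two-run construction. The genuine gap is exactly where you locate the crux: the counting step $|H|\le f$. The inequality $|H| \le |D| - |\mathrm{blocked}| \le 2f - |\mathrm{blocked}|$ yields $|H|\le f$ only if you have a \emph{lower} bound $|\mathrm{blocked}|\ge f$, but Claim~\ref{claim:size} and Lemma~\ref{lem:facts1} give only upper bounds. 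Your justification --- that ``$W_i$ is precisely forced by $Ad_i$ to abandon that many servers before returning'' --- is the statement $|Q_i(t_r)|=f$, i.e.\ Corollary~\ref{cor:Q}, which the paper derives \emph{from} Lemma~\ref{lem:2f}; invoking it here is circular. A priori nothing yet rules out a $W_i$ that returns after covering fewer than $f$ servers outside $F$, in which case the blocked set is small and the subtraction $2f-f$ is unavailable.

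The repair is a case split on $|Q_i(t_r)|$, which is what the paper's partition $S_1=F_i(t_r)$, $S_2=Q_i(t_r)$, $S_3=X\cap(F\setminus F_i(t_r))$, $S_4=X\setminus(S_1\cup S_2\cup S_3)$ implements. If $|Q_i(t_r)|=f$, your subtraction works and the witnesses $S_1\cup S_4$ number at most $2f-f=f$. If $|Q_i(t_r)|<f$, then by Lemma~\ref{lem:facts1}.\ref{lem:facts:qisize} and \ref{lem:facts1}.\ref{lem:facts:rrr} \emph{no} register outside $F$ has responded to a fresh write, so $S_4=\emptyset$ and every witness lies in $F_i(t_r)$, whose size is at most $|Q_i(t_r)|+1\le f$ by Lemma~\ref{lem:facts1}.\ref{obs:ad1} (this is where the adversary's $G_i$ bookkeeping, which throttles responses inside $F$ to the pace of $Q_i$, earns its keep). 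Either way $|S_1\cup S_4|\le f$, and the remainder of your argument --- disjointness of the witnesses from $Q_i(t_r)\cup M_i(t_r)$ via Lemma~\ref{lem:facts1}.\ref{lem:facts:silent}, crashing the at most $f$ witnesses, and the WS-Safety contradiction between $v_i$ and $v_{i-1}$ --- goes through exactly as in the paper.
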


\begin{proof}
  Denote $X \triangleq \delta(Tr_i(t_r) \setminus Cov(t_{i-1}))$, and
  assume by contradiction that $|X| \leq 2f$.  Let
  $S_1= F_i(t_r)$, $S_2=Q_i(t_r)$,
  $S_3 = X \cap (F \setminus F_i(t_r))$ and
  $S_4=X \setminus (S_1\cup S_2 \cup S_3)$.  Note that
  $S_1,S_2,S_3,S_4$ are pairwise disjoint, and
  $X = S_1 \cup S_2 \cup S_3 \cup S_4$.  

  We first show that $|S_1 + S_4| \le f$. By
  Lemma~\ref{lem:facts1}.\ref{obs:filim}, $|S_1|\le f+1$. 
  However, if $|S_1|=f+1$, then by Lemma~\ref{lem:facts1}.\ref{obs:ad1},
  $|S_1| - |S_2| = f + 1 - |S_2| \le 1$, and therefore,
  $|S_1| + |S_2|\ge 2f+1$ violating the assumption. Hence,
  $|S_1| \le f$. By Lemma~\ref{lem:facts1}.\ref{obs:qlim},
  $|S_2|\le f$. If $|S_2| = f$, then by assumption,
  $|S_1 \cup S_3 \cup S_4| = |S_1|+|S_3|+|S_4| \leq f$, and
  therefore, $|S_1+S_4| = |S_1| + |S_4| \le f$. And if $|S_2|<
  f$, then by Definitions~\ref{def:notation}.\ref{def:notation:qi} and
  \ref{def:ad}, 
  $|S_4|=0$.
  Hence, $|S_1 + S_4| = |S_1| + |S_4| \le f$.
  
  The proof proceeds by applying the partitioning argument to the sets
  $S_i$, $i\in [4]$ (see Appendix~\ref{sec:space:app}).

\end{proof}

\vspace{-2mm}
\noindent 
The following corollary follows immediately from Lemma~\ref{lem:2f},
Definitions~\ref{def:notation}.\ref{def:notation:qi} and \ref{def:ad},
and the choice of $|F|=f+1$:

\begin{corollary}
  Consider a run $r \in \triple{r_{i-1}}{Ad_i}{t_r}$ where
  $t_r > t_{i-1}$, consisting of $r_{i-1}$ followed by a complete
  high-level write invocation by client $c_i \not\in C(t_{i-1})$ that
  returns at time $t_r$. Then, $|Q_i(t_r)| = f$.
\label{cor:Q}
\end{corollary}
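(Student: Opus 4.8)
The plan is to prove the two inequalities $|Q_i(t_r)| \le f$ and $|Q_i(t_r)| \ge f$ separately. The upper bound is immediate: it is exactly Lemma~\ref{lem:facts1}.\ref{obs:qlim}. So the entire content lies in the lower bound $|Q_i(t_r)| \ge f$, which I would obtain by exhibiting at least $f$ distinct servers, none of them in $F$, each covered by a fresh low-level write at time $t_r$. Once I have $|\delta(Cov_i(t_r)) \setminus F| \ge f$, Lemma~\ref{lem:facts1}.\ref{lem:facts:qisize} hands me $|Q_i(t_r)| \ge f$ directly.

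The supply of candidate servers comes from Lemma~\ref{lem:2f}: the returning write $W_i$ triggers low-level writes on a set $X = \delta(Tr_i(t_r) \setminus Cov(t_{i-1}))$ with $|X| > 2f$, i.e. $|X| \ge 2f+1$. Since $|F| = f+1$, I have $|X \setminus F| \ge |X| - |F| \ge (2f+1) - (f+1) = f$, so there are at least $f$ servers outside $F$, each carrying some register $b \in Tr_i(t_r) \setminus Cov(t_{i-1})$ on which $c_i$ triggered a write after $t_{i-1}$.

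I would then argue by contradiction: suppose $|Q_i(t_r)| < f$. The contrapositive of Lemma~\ref{lem:facts1}.\ref{lem:facts:qisize} gives $|\delta(Cov_i(t_r)) \setminus F| < f$, and feeding this into Lemma~\ref{lem:facts1}.\ref{lem:facts:rrr} yields $\delta(Rr_i(t_r)) \setminus F = \emptyset$, i.e. no register on a non-$F$ server has responded to any write issued after $t_{i-1}$. Now fix any $s \in X \setminus F$ and its witnessing register $b$. If some post-$t_{i-1}$ write on $b$ had responded, then $b \in Rr_i(t_r)$ and $s \in \delta(Rr_i(t_r)) \setminus F$, contradicting the previous line; hence every such write on $b$ is still pending at $t_r$, so $b$ is covered, and since $b \notin Cov(t_{i-1})$ we get $b \in Cov_i(t_r)$ and $s \in \delta(Cov_i(t_r))$. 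As $s$ was arbitrary, $X \setminus F \subseteq \delta(Cov_i(t_r)) \setminus F$, so $|\delta(Cov_i(t_r)) \setminus F| \ge f$, contradicting $< f$. Therefore $|Q_i(t_r)| \ge f$, and combined with the upper bound, $|Q_i(t_r)| = f$.

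The main obstacle, and the reason the claim is not purely arithmetic, is bridging ``$W_i$ triggered a write on $b$'' (membership in $Tr_i$) to ``$b$ is covered at $t_r$'' (membership in $Cov_i$): a priori some of these fresh writes could have responded before $t_r$ and thereby fail to cover their registers. The adversary $Ad_i$ is exactly what precludes this, and the packaged facts Lemma~\ref{lem:facts1}.\ref{lem:facts:qisize} and Lemma~\ref{lem:facts1}.\ref{lem:facts:rrr} are the clean way to invoke its blocking discipline; everything else is the counting $|X \setminus F| \ge f$ enabled by the choice $|F| = f+1$ together with Lemma~\ref{lem:2f}.
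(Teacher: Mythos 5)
Your proof is correct and takes essentially the route the paper intends: the paper dismisses this corollary as following ``immediately'' from Lemma~\ref{lem:2f}, Definition~\ref{def:notation}.\ref{def:notation:qi}, Definition~\ref{def:ad}, and $|F|=f+1$, and your argument is exactly that chain spelled out, with the appeal to the adversary's blocking discipline routed through the pre-packaged facts Lemma~\ref{lem:facts1}.\ref{lem:facts:qisize} and Lemma~\ref{lem:facts1}.\ref{lem:facts:rrr} rather than through the raw definitions. The counting $|X\setminus F|\ge f$ and the bridge from $Tr_i$ to $Cov_i$ via the emptiness of $\delta(Rr_i(t_r))\setminus F$ are both sound.
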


\vspace{-2mm}
\noindent
We are now ready to complete the proof of the induction step of
Lemma~\ref{lem:exhaustive-run}:

\begin{proof}[Proof of the induction step
(Lemma~\ref{lem:exhaustive-run})]
  
  By Lemma~\ref{lem:1write}, there exists a run
  $r \in \triple{r_{i-1}}{Ad_i}{t_r}$, $t_r > t_{i-1}$, in which
  $r_{i-1}$ is followed by a complete high-level write invocation
  $W_i$ by client $c_i \neq c_{i-1}$ writing a value
  $v_i \neq v_{i-1}$ and returning at time $t_r$.
  By Corollary~\ref{cor:Q}, $|Q_i(t_r)|=f$, and therefore, by
  Lemma~\ref{lem:facts1}.\ref{obs:ad1}, $|F_i(t_r)|=f+1$. Since
  $F_i(t_r) \subseteq F$ and $|F|=f+1$, we conclude that
  $F_i(t_r) = F$. Hence, by
  Definition~\ref{def:notation}.\ref{def:notation:mi},
  $M_i(t_r) = \emptyset$, which by
  Definition~\ref{def:notation}.\ref{def:notation:gi}, implies that
  $G_i=\emptyset$. Thus, by Definition~\ref{def:blocked-writes}, no
  writes on the registers in $\delta^{-1}(F)$ triggered after $t_{i-1}$
  are blocked.

  Hence, by Definition~\ref{def:ad}.3(a), there exists an extension
  $r' \in \triple{r}{Ad_i}{t'}$, for some $t' \ge t_r$, such that
  $\delta(Cov_i(t')) \cap F = \emptyset$.  We now show that $r_i = r'$
  and $t_i = t'$ satisfy the lemma.  By the induction hypothesis and
  the construction of extension $r'$, $r'$ is a write-only
  failure-free sequential extension of $r_{i-1}$ ending at time $t'$
  that consists of $i$ complete high-level writes of values
  $v_1,\dots,v_i$ by $i$ distinct clients $c_1,\dots,c_i$.  It remains
  to show that the implications~(\ref{lem:CovF}--\ref{lem:CovSup}
  hold for $t_i=t'$:

  \begin{enumerate}[label={}]
    
  \item \ref{lem:CovF} $|Cov(t')| \ge if$: By the induction hypothesis
    $|Cov(t_{i-1})| \ge (i-1)f$, and by Definition~\ref{def:ad},
    $Cov(t_{i-1}) \subseteq Cov(t')$.
    Therefore, we left to show that $|Cov(t') \setminus
    Cov(t_{i-1})| \geq f$.
    Since by Corollary~\ref{cor:Q}, $|Q_i(t_r)|=f$, and by
    Lemma~\ref{lem:facts1}.\ref{obs:ad}, $Q_i(t_r) \subseteq
    Q_i(t')$, we get $|Q_i(t')|=f$.
    By Definition~\ref{def:notation}.\ref{def:notation:qi},
    $|Cov_i(t')| \geq |Q_i(t')| $, and by
    Definition~\ref{def:notation}.\ref{def:notation:covi},
    $|Cov(t') \setminus Cov(t_{i-1})| =
    |Cov_i(t')|$.
    Therefore, we get $|Cov(t') \setminus Cov(t_{i-1})| \geq f$.  
    
  \item \ref{lem:F} $\delta(Cov(t')) \cap F = \emptyset$: By the
    induction hypothesis we get that
    $\delta(Cov(t_{i-1})) \cap F = \emptyset$, and by construction of
    $r'$ we get that $\delta(Cov_i(t')) \cap F = \emptyset$.  By
    Definition~\ref{def:notation}.\ref{def:notation:covi},
    $Cov(t') = Cov_i(t') \cup Cov(t_{i-1})$.  Therefore,
    $\delta(Cov(t')) \cap F = \emptyset$.

    
  \end{enumerate} 
\end{proof}

\vspace*{-7mm}
\noindent
\paragraph{Resource Complexity.}
We will now use Lemma~\ref{lem:exhaustive-run} to characterize the
minimum resource complexity of the algorithms implementing an
$f$-tolerant obstruction-free WS-Safe \kreg\/ as a function of the
number $|\mathcal{S}|$ of available servers. First, it is easy to see
that if $|\mathcal{S}| \le 2f$, then no such algorithm can exist. This
result is implied by an extended statement of
Lemma~\ref{lem:exhaustive-run} (see Theorem~\ref{thm:2f+1} in
Appendix~\ref{sec:space:app}), and can also be shown directly by a
straightforward application of a partitioning argument as discussed
in~\cite{lynch-quorum,attiya-quorum}.  If $|\mathcal{S}| > 2f$, then
we have the following:

\begin{theorem}
\label{thm:kf}
  For all $k>0$, $f>0$, let $A$ be an $f$-tolerant
  algorithm emulating an obstruction-free WS-Safe \kreg\/
  using a collection $\mathcal{S}$ of servers such that
  $|\mathcal{S}| \geq 2f + 1 $. 
  Then, $A$ uses at least $kf + \Big\lceil
  \frac{kf}{|\mathcal{S}|-(f+1)} \Big\rceil (f+1)$ base registers
  (i.e., $|\delta^{-1}(\mathcal{S})| \geq kf + \Big\lceil
  \frac{kf}{|\mathcal{S}|-(f+1)} \Big\rceil (f+1)$).
\end{theorem}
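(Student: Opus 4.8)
The plan is to apply Lemma~\ref{lem:exhaustive-run} with a cleverly chosen set $F$ of $f+1$ servers and then combine a pigeonhole argument with the freedom to choose $F$. Recall that the lemma guarantees, for \emph{every} $F \subseteq \mathcal{S}$ with $|F| = f+1$, a failure-free write-sequential run $r_k$ in which at least $kf$ base registers are covered (part~\ref{lem:CovF}) and none of these covered registers lie on servers in $F$ (part~\ref{lem:F}). Hence all $kf$ covered registers are stored on the $|\mathcal{S}| - (f+1)$ servers of $\mathcal{S} \setminus F$.

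First I would sort the servers in decreasing order of register load, $|\delta^{-1}(\{s_1\})| \ge \dots \ge |\delta^{-1}(\{s_{|\mathcal{S}|}\})|$, and choose $F = \{s_1,\dots,s_{f+1}\}$ to be the $f+1$ most heavily loaded servers. Applying Lemma~\ref{lem:exhaustive-run} to this $F$ produces at least $kf$ covered registers spread over the $|\mathcal{S}| - (f+1)$ servers of $\mathcal{S} \setminus F$, so by the pigeonhole principle some server $s_j \in \mathcal{S} \setminus F$ stores at least $\big\lceil \frac{kf}{|\mathcal{S}| - (f+1)} \big\rceil$ of them, and therefore at least that many registers in total.

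The decisive step is to exploit the ordering: since $F$ comprises the most loaded servers and $s_j \notin F$, each server in $F$ stores at least $|\delta^{-1}(\{s_j\})| \ge \big\lceil \frac{kf}{|\mathcal{S}| - (f+1)} \big\rceil$ registers, contributing at least $(f+1)\big\lceil \frac{kf}{|\mathcal{S}| - (f+1)} \big\rceil$ registers on $F$. By part~\ref{lem:F} of the lemma the $kf$ covered registers all reside on $\mathcal{S} \setminus F$, so they are disjoint from the registers counted on $F$; adding the two disjoint collections yields $|\delta^{-1}(\mathcal{S})| \ge kf + \big\lceil \frac{kf}{|\mathcal{S}| - (f+1)} \big\rceil (f+1)$, as required.

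The crux of the argument --- and the only nonroutine part --- is recognizing that Lemma~\ref{lem:exhaustive-run} holds for \emph{every} admissible $F$, which is exactly what lets me load $F$ with the heaviest servers and conclude that even these ``spare'' $f+1$ servers must be heavily populated. Without this universal quantification one could only bound the registers on $\mathcal{S} \setminus F$ and would miss the additive $(f+1)\big\lceil \frac{kf}{|\mathcal{S}| - (f+1)} \big\rceil$ term. The pigeonhole count and the disjointness check are straightforward once $F$ is fixed in this way.
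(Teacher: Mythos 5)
Your proof is correct and follows essentially the same route as the paper: both apply Lemma~\ref{lem:exhaustive-run} with the set $F$ chosen among the most heavily loaded servers, use the pigeonhole principle on the $kf$ covered registers guaranteed to lie in $\mathcal{S}\setminus F$, and add the disjoint contribution of the $f+1$ servers in $F$. The only cosmetic difference is that you obtain the load bound on every server of $F$ directly from the sorted order with a single application of the lemma, whereas the paper first establishes by contradiction (via a second application of the lemma) that at least $f+1$ servers store at least $\bigl\lceil kf/(|\mathcal{S}|-(f+1))\bigr\rceil$ registers.
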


\begin{proof}

  Let $G \subseteq \mathcal{S}$ be the set consisting of all servers
  that store at least
  $\Big\lceil \frac{kf}{|\mathcal{S}|-(f+1)} \Big\rceil$ base
  registers (i.e.,
  $\forall s \in G, ~ |\delta^{-1}(\{s\})| \geq \Big\lceil
  \frac{kf}{|\mathcal{S}|-(f+1)} \Big\rceil$
  and
  $\forall s \in \mathcal{S} \setminus G, |\delta^{-1}(\{s\})| <
  \Big\lceil \frac{kf}{|\mathcal{S}|-(f+1)} \Big\rceil$).
  We first show that $|G| \geq f+1$.  Suppose toward a contradiction
  that $|G| < f+1$, and pick a set $F$, such that $|F|=f+1$ and
  $\mathcal{S} \supset F \supset G$.  By
  Lemma~\ref{lem:exhaustive-run}.\ref{lem:CovF}-\ref{lem:F}, there
  exists a run $r$ of $A$ consisting of $t$ steps such that
  $|Cov(t)| \ge kf$ and $\delta(Cov(t)) \cap F = \emptyset$.  Thus, by
  the pigeonhole argument, and since $|S\setminus F|= |S| -(f+1)$,
  there is at least one server in $\mathcal{S} \setminus F$ that
  stores at least
  $\Big\lceil \frac{kf}{|\mathcal{S}|-(f+1)} \Big\rceil$.  Therefore,
  since $F \supset G$ and the number of objects stored on a server is
  an integer, we get that there is at least one server in
  $\mathcal{S} \setminus G$ that stores at least
  $\Big\lceil \frac{kf}{|\mathcal{S}|-(f+1)} \Big\rceil$ base
  registers.  A contradiction.

  We get that
  $|\delta^{-1}(G)| \geq \Big\lceil \frac{kf}{|\mathcal{S}|-(f+1)}
  \Big\rceil (f+1)$.
  Now, again by
  Lemma~\ref{lem:exhaustive-run}.\ref{lem:CovF}-\ref{lem:F}, there
  exists a run $r'$ of $A$ consisting of $t'$ steps such that
  $|Cov(t')| \ge kf$ and $\delta(Cov(t')) \cap G = \emptyset$, meaning
  that $|\delta^{-1}(\mathcal{S} \setminus G)| \geq kf$.  Therefore,
  we get that
  $|\delta^{-1}(\mathcal{S})| \geq kf + \Big\lceil
  \frac{kf}{|\mathcal{S}|-(f+1)} \Big\rceil (f+1)$.
\end{proof}

The following bound on the number of registers required to emulate a
single (i.e., non-fault-tolerant) \maxreg\/ is a direct consequence of
Theorem~\ref{thm:kf} (see Appendix~\ref{sec:space:app} for the proof):

\begin{theorem}[Resource Complexity of $k$-\maxreg]
  For all $k>0$, any algorithm implementing a wait-free $k$-writer
  \maxreg\/ from a collection of wait-free MWMR atomic base registers
  uses at least $k$ base registers.
\label{thm:maxreg}
\end{theorem}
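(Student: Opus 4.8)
Theorem~\ref{thm:maxreg} asks to show that any wait-free $k$-writer \maxreg\/ built from MWMR atomic base registers uses at least $k$ base registers. The statement is flagged as "a direct consequence of Theorem~\ref{thm:kf}," so the plan is to instantiate the lower bound of Theorem~\ref{thm:kf} in the failure-free (single-server) regime. The key observation is that a \maxreg\/ is itself a restricted form of register, and in the absence of server failures we are in the special case $f=0$, $n=|\mathcal{S}|=1$: everything lives on one (correct) server, and wait-freedom trivially implies obstruction-freedom with no faults to tolerate.

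The plan is as follows. First I would argue that any implementation of a wait-free $k$-writer \maxreg\/ yields an $f$-tolerant WS-Safe obstruction-free \kreg\/ emulation with $f=0$ on a single server. The safety direction is where the reduction bites: a \maxreg\/ supports \writemax\/ and \readmax\/, and the monotone semantics of a max-register refine those of a WS-Safe register once we encode written values as an increasing sequence (e.g. have the $i$-th distinct \Write\/ value map to an increasing tag, or simply observe that on the write-sequential runs used in the lower bound each writer writes a fresh value and the max-register returns the last one written). In particular, the write-sequential, write-only runs $r_0,\dots,r_k$ constructed in Lemma~\ref{lem:exhaustive-run} never involve concurrent writes, so a max-register's last-writer-wins behaviour on distinct increasing values meets WS-Safety on exactly those runs. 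Wait-freedom of the \maxreg\/ gives obstruction-freedom of the emulation.

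Next I would plug $f=0$ and $|\mathcal{S}|=1\ge 2f+1=1$ into the bound of Theorem~\ref{thm:kf}. The expression $kf + \bigl\lceil \frac{kf}{|\mathcal{S}|-(f+1)} \bigr\rceil (f+1)$ must be handled carefully: with $f=0$ the first term $kf$ vanishes and the denominator $|\mathcal{S}|-(f+1)=1-1=0$ makes the literal formula ill-defined (a $0/0$). Rather than substitute blindly, I would go back to Lemma~\ref{lem:exhaustive-run} directly, which is the real engine behind Theorem~\ref{thm:kf}. With $f=0$, however, the covering machinery degenerates (there are no registers to cover via $f$ blocked writes), so the $kf$ conclusion is vacuous. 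This tells me the $f=0$ instantiation of the covering argument alone does \emph{not} produce the $k$ bound, and the reduction must instead exploit a different consequence of the construction.

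The main obstacle — and the crux of the argument — is therefore to recover a nontrivial bound at $f=0$, and the right lever is Lemma~\ref{lem:2f}: each high-level write $W_i$ must trigger a low-level write on at least $2f+1$ \emph{non-covered} registers. At $f=0$ this says each $W_i$ writes to at least one register not covered before $W_i$ began; and since (by Assumption~\ref{ass:linear} and the adversary forcing a fresh covered register per writer) distinct writers are driven to distinct registers, the $k$ successive writers exhaust at least $k$ distinct base registers. Concretely, I would re-run the inductive construction of Lemma~\ref{lem:exhaustive-run} with $f=0$ and read off that after the $k$-th write completes, at least $k$ distinct registers have been written by the $k$ distinct writers on non-overlapping register sets, yielding $|\delta^{-1}(\mathcal{S})|\ge k$. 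The delicate point to verify is that the adversary $Ad_i$ remains well-defined and forces a \emph{new} register for each writer even when $f=0$ (so that no low-level write may be safely reused), which is exactly the "a pending write may take effect later and erase the value" phenomenon specialized to a single register per writer; I would state this as the key step and cite the single-writer special case of the covering/overwriting intuition from Section~\ref{sub:LowerBoundOverview}.
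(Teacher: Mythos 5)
There is a genuine gap. Your plan instantiates the machinery at $f=0$, $n=1$, but the paper's entire lower-bound engine is powered by fault tolerance: the adversary $Ad_i$ gets its leverage from the fact that a writer cannot wait for responses from up to $f$ servers (they may have crashed), and is therefore forced to return while leaving $f$ registers covered, which in turn pushes the next writer onto fresh registers. At $f=0$ the writer \emph{can} wait for every triggered low-level write to respond before returning, so $Cov(t_i)$ can be kept empty at the end of every high-level write, $\BlockedWrites_i$ gives the adversary nothing to block, and Lemma~\ref{lem:2f} only asserts that each write touches at least $2f+1=1$ register not covered at $t_{i-1}$ --- which may be the \emph{same} register for all $k$ writers. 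Your claim that ``the adversary forc[es] a fresh covered register per writer'' is precisely the step that requires $f\ge 1$; re-running the induction of Lemma~\ref{lem:exhaustive-run} with $f=0$ yields the vacuous bound $|Cov(t_k)|\ge 0$ and does not produce $k$ distinct registers. You correctly detected that the formula of Theorem~\ref{thm:kf} is ill-defined at $f=0$ and that the $kf$ term vanishes, but the attempted repair does not close the hole.

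The paper's reduction goes in the opposite direction: it \emph{lifts} the hypothetical $\ell<k$ register implementation into the fault-prone setting rather than specializing the fault-prone bound down to $f=0$. Concretely, assume a wait-free $k$-writer \maxreg\/ using $\ell<k$ base registers; place one copy on each of $n=2f+1$ fault-prone servers (each server now stores $\ell$ registers), and compose with the generic multi-writer ABD protocol of~\cite{mwr-journal}, whose server-side code is exactly \writemax\//\readmax. This yields an $f$-tolerant wait-free regular (hence WS-Safe obstruction-free) \kreg\/ emulation with resource complexity $(2f+1)\ell$. Theorem~\ref{thm:kf} at $n=2f+1$ evaluates to $kf+\lceil kf/f\rceil(f+1)=(2f+1)k$, so $(2f+1)\ell\ge(2f+1)k$, i.e., $\ell\ge k$ --- a contradiction. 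If you want to salvage your write-up, replace the $f=0$ specialization with this composition argument; the ``max-register refines WS-Safe register semantics on write-sequential runs'' observation you make is still the right justification for why the composed emulation satisfies the hypotheses of Theorem~\ref{thm:kf}.
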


In Appendix~\ref{sec:space:app}, we prove an extended statement of
Lemma~\ref{lem:exhaustive-run}, and use it to show three additional
lower bounds as discussed in Section~\ref{sec:intro}.

\subsection{Upper Bound}
\label{sub:UpperBound}

In this section we present an $f$-tolerant construction
emulating a {\em wait-free WS-Regular}\/ \kreg\/ for all
combinations of values of the parameters $k>0$, $f>0$,
and $n$ where $n>2f$. 
Our construction is carefully crafted to deal with the adversarial
behaviour (Definition~\ref{def:ad}) that was exploited in the proof
of Lemma~\ref{lem:exhaustive-run} while minimizing the resource
complexity.
Similarly to multi-writer ABD~\cite{rambo,mwr-journal,MR98}, our
algorithm uses {\em read}\/ and {\em write quorums}\/ to read from and
write to registers.  However, since RMW objects are replaced with
read/write registers, and covering low-level writes belonging to old
\Write s can overwrite registers at any time, the quorums in our case
must have a larger intersection.

Let $z \triangleq \big\lfloor \frac{n- (f+1)}{f} \big\rfloor$ and
$y \triangleq zf + f +1$, we construct a collection $\mathcal{R}$ of
$ m = \big\lfloor \frac{k}{z} \big\rfloor$ disjoint sets
$R_0,\ldots,R_{m-1}$, each of which consist of $y$ registers, and
if $k/z$ is not an integer, then we add to $\mathcal{R}$ another
disjoint set $R_{m}$ of
$(k -\big\lfloor \frac{k}{z} \big\rfloor z) f + f +1$
registers. Intuitively, $z$ is the maximum number of writers
that can be supported by a single set of $y$ registers as can be
deduced from Lemma~\ref{lem:exhaustive-run}'s argument.  If
$z$ divides $k$, then exactly $k/z$ such sets are needed to
accommodate the total of $k$ writers. Otherwise, the remaining
$k \!\mod z$ writers are moved to an overflow set $R_m$.  Note that for all $R_i \in \mathcal{R}$,
$2f+1 \leq|R_i| \leq n$.  Then, we distribute the registers in each
set $R_i$ on servers in $\mathcal{S}$ so that every register in $R_i$
is mapped to a different server (i.e., $|\delta(R_i)| = |R_i|$).
Figure~\ref{fig:alg} demonstrates a possible mapping from
registers to servers.  All in all, we use
$\Sigma_{R_i \in \mathcal{R}} |R_i| = \big\lfloor \frac{k}{z}
\big\rfloor y + (k - \big\lfloor \frac{k}{z} \big\rfloor z)f +
(f+1)(\big\lceil \frac{k}{z} \big\rceil - \big\lfloor \frac{k}{z}
\big\rfloor)= \cdots = kf + \big\lfloor \frac{k}{z} \big\rfloor (f+1)
= kf + \big\lceil \frac{k}{\big\lfloor \frac{n-(f+1)}{f} \big\rfloor}
\big\rceil (f+1)$ registers.

\begin{wrapfigure}{l}{0.3\linewidth} 
   \centering 
  \includegraphics[width=0.85\linewidth]{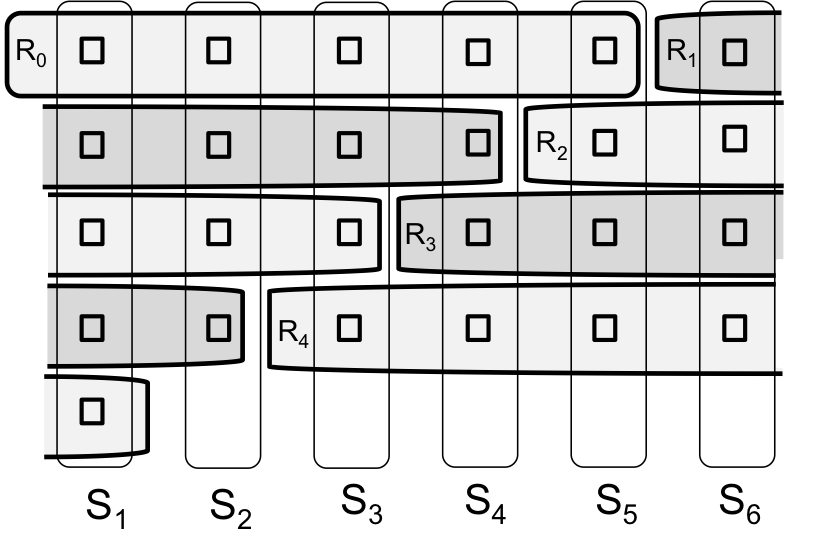}
\vspace*{-0.3cm}
  \caption {\footnotesize A possible mapping from
  $\mathcal{R}$ to $\mathcal{S}$ in case $n=6, k=5$, and $f=2$.} 
\vspace*{-0.7cm}
   \label{fig:alg}
\end{wrapfigure}
 
The resulting layout is then used to derive the {\em read}\/ and {\em
  write quorums}\/ as follows: for every set $R_i \in \mathcal{R}$,
any subset of $R_i$ of size $ |R_i| - f$ is a write quorum for all
writers $c_j$ such that $j = \big\lfloor \frac{i}{z} \big\rfloor$; and
any subset of registers consisting of {\em all}\/ registers mapped to
$ n-f$ servers is a read quorum (i.e., the set of the read quorums is
$\{B \subseteq \mathcal{B} : \exists S \in \mathcal{S}$ s.t.\
$|S| = 2f+1 \wedge \delta^{-1}(S) = B \}$).
%
%
Observe that by construction of $\mathcal{R}$, for every
set $R_i \in \mathcal{R}$, (1) the number of clients mapped to
write quorums in $R_i$ is
$\big\lfloor \frac{|R_i|-(f+1)}{f} \big\rfloor =
\frac{|R_i|-(f+1)}{f}$,
and (2) any write quorum in $R_i$ intersects with any read quorum on
at least $|R_i| - f$ registers.  Therefore, in a write-sequential run,
the latest written value is always guaranteed to be available to
subsequent \Read s provided every writer $c$ executing a
high-level \Write\ $W$ leaves no more than $f$ pending low-level
writes upon $W$'s completion.  To enforce the latter, $c$ is precluded from triggering
new low-level writes on registers on which it still has writes pending
from preceding high-level \Write s invocations.
In addition, since the registers in every (read
or write) quorum are mapped to exactly $n-f$ servers, each
quorum access is guaranteed to terminate, and thus, the
algorithm is wait-free.
In Appendix~\ref{app:upper} we give the algorithm's full details,
present its pseudo-code, and prove the following result:

\begin{theorem}
\label{thm:upper}
For all $k>0$, $f>0$, and $n > 2f$, there exists an $f$-tolerant
algorithm emulating a wait-free WS-Regular \kreg\/ using a collection
of $n$ servers storing $kf + \big\lceil \frac{k}{z}
\big\rceil (f+1)$ wait-free $z$-writer/multi-reader
atomic base registers where $z=\big\lfloor
\frac{n-(f+1)}{f} \big\rfloor$.

\noindent

%
%
%
  
\end{theorem}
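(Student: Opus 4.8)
The plan is to verify that the construction sketched in the body—partitioning $k$ writers into groups of at most $z = \lfloor (n-(f+1))/f \rfloor$, assigning each group a dedicated register set $R_i$ of size $|R_i| = z_i f + f + 1$ (where $z_i$ is the number of writers in that group), and using $|R_i|-f$-sized write quorums together with read quorums spanning $n-f$ servers—yields an $f$-tolerant wait-free WS-Regular \kreg. The total register count has already been computed in the body as $kf + \lceil k/z\rceil(f+1)$, so the theorem's space claim is immediate once correctness is established; the work is therefore entirely in proving wait-freedom, $f$-tolerance, and WS-Regularity.

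First I would present the algorithm precisely (this lives in Appendix~\ref{app:upper}): each writer $c_j$ in the group mapped to $R_i$ maintains a local $coverSet$ of registers on which it still has pending low-level writes. A high-level \Write\ of $v$ attaches a timestamp (a $\langle\text{sequence number}, \text{writer id}\rangle$ pair, incremented past the maximum seen) and triggers low-level writes on the registers in $R_i \setminus coverSet$, returning once it has collected responses from a write quorum (size $|R_i|-f$); any still-pending registers are added to $coverSet$, and a register is removed from $coverSet$ when its outstanding response arrives. A high-level \Read\ reads a full read quorum ($n-f$ servers' worth of registers) and returns the value with the highest timestamp. Wait-freedom and $f$-tolerance both follow from the quorum-availability observation already made in the body: every quorum is mapped to exactly $n-f$ servers, so with at most $f$ server crashes a correct client always receives enough responses to complete, and no high-level operation ever blocks waiting on a crashed server.

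The crux is WS-Regularity, and this is where the adversarial intuition from Lemma~\ref{lem:exhaustive-run} must be countered. The key invariant is that upon completion of any high-level \Write\ $W$ by $c_j$, the number of registers in $R_i$ covered by $c_j$'s own low-level writes is at most $f$: the $coverSet$ discipline guarantees $c_j$ only triggers writes on registers not already covered by itself, and $W$ returns once a write quorum of $|R_i|-f$ registers responds, so at most $f$ of $c_j$'s writes remain pending. Since writers in distinct groups touch disjoint register sets, and since the run is write-sequential (so at most one writer per group is ever active and no two writes overlap), at any point after a \Write\ completes at most $f$ registers in each $R_i$ carry stale/pending values. I would then argue the quorum-intersection guarantee: any write quorum in $R_i$ meets any read quorum in at least $|R_i|-f$ registers (since the read quorum omits only $f$ servers), so a \Read\ not concurrent with, or linearized after, the last completed \Write\ observes at least $|R_i|-f-f \ge 1$ registers holding the latest timestamp—here the $+f$ slack in $|R_i| = z_i f + f + 1$ is exactly what absorbs the $f$ registers left covered by the prior write plus the $f$ the read may miss. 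Finally I would construct the linearization required by WS-Regularity: order the high-level writes by their (totally ordered) timestamps, and insert each \Read\ immediately after the write whose value it returns, checking that this respects real-time precedence—a read returning value $v$ cannot have started before the \Write\ that produced $v$ completed (since $v$'s timestamp would not yet be visible on any read quorum), and cannot be forced before a later write it did not observe.

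The main obstacle I anticipate is precisely the timestamp-visibility argument in the WS-Regularity proof: I must show that covering writes from \emph{old} high-level \Write s, which the adversary may cause to take effect at arbitrary later times, cannot resurrect a stale value and defeat a subsequent \Read. This is handled by the interplay of two facts—(i) a fresh \Write\ overwrites a full write quorum with a strictly larger timestamp, and (ii) a \Read\ takes the \emph{maximum} timestamp over its read quorum—so even if up to $f$ covering writes from a stale \Write\ later overwrite some registers in the read quorum, they bear an older timestamp and are discarded by the max rule, while at least one register with the newest timestamp survives in the intersection. Formalizing that every stale covering write carries a timestamp no larger than the latest completed \Write's, and that the intersection bound leaves a surviving newest-timestamp register even in the worst case, is the delicate step that ties the whole correctness argument to the $+f$ redundancy built into each $R_i$.
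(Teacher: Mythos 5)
Your proposal follows essentially the same route as the paper's proof in Appendix~\ref{app:upper}: the same algorithm (the $coverSet$ discipline, timestamps, write quorums of size $|R_i|-f$, read quorums spanning $n-f$ servers), wait-freedom and $f$-tolerance from quorum availability, and WS-Regularity via a timestamp-monotonicity lemma and a read-visibility lemma proved by quorum intersection, followed by a linearization that orders writes by timestamp and inserts each read after the write whose value it returns. The one point to tighten is the covering count in your intersection argument: the stale low-level writes that can later clobber the newest timestamp come from all $z-1$ other writers mapped to the same group (up to $(z-1)f$ registers, not just $f$ from a single prior write), which is exactly what the intersection size $|R_i|-2f=(z-1)f+1$ absorbs, leaving the single surviving newest-timestamp register your argument needs.
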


\vspace{-4mm}
\section{Discussion and Future Directions}
\label{sec:conclusion}

\section{Discussion and Future Work}
\label{sec:conclusion}

We studied space complexity of emulaiting an $f$-tolerant register
from fault-prone base objects as a function of the base object type,
the number of writers $k$, the number of available servers $n$, and
the failure threshold $f$. For the three object types considered, we
established a sharp separation (by factor $k$) between registers and
both max-registers and CAS in terms of the number of objects of the
respective types required to support the emulation; we showed that no
such separation between max-registers and CAS exists.
Interestingly, these results shed light on the resource complexity
bounds in the standard shared memory model (i.e., without object
failures) as evidenced by our proof of a lower bound on the number of
registers required for implementing a \maxreg\/ for $k$ writers.
Our main technical contribution comprises the lower bound of
$ \Big\lceil \frac{k}{\frac{n-(f+1)}{f}} \Big\rceil (f+1) + kf $ and
the upper bound of
$ \Big\lceil \frac{k}{\lfloor\frac{n-(f+1)}{f} \rfloor} \Big\rceil
(f+1) + kf$
on the resource complexity of emulating an $f$-tolerant $k$-writer
register from $n$ fault-prone servers storing read/write registers.
To strengthen our lower bound, it was proved for emulations satisfying
weak liveness and safety properties.

\paragraph{Future directions.}
First, for some choices of $k$ and $n$, our bounds leave a small gap
that can be closed.  Second, an interesting question that arises is
whether our lower bound is tight for stronger properties.  In the
special case of $n=2f+1$ servers, emulation with stronger
regularity~\cite{shao2011multiwriter} is possible with $(2f+1)k$
registers (tight to our lower bound).  However, the question of the
general case ($n\geq 2f+1$) remains open.  In addition, since
atomicity usually requires readers to write, it is interesting to
investigate whether the space complexity (assuming read/write
registers) in this case also linearly depends on the number of
readers.

Our results suggest a new classification of the data types based on
space complexity of fault-tolerant emulations built from base objects
of these types, which is fundamentally different from those
established by~\cite{herlihy1991wait}
and~\cite{ellen2016complexity}. A promising future direction is to
extend this classification with additional types (e.g., multiple
assignment), and potentially generalize it into a full-fledged
hierarchy of its own.

Another possible direction is to consider the time complexity of the
emulations.  For example, we showed that although a \maxreg\/ can be
implemented from a single CAS, the time complexity of the
implementation is high. An interesting open question is to determine
whether this tradeoff is inherent.

%

\newpage

\bibliographystyle{plain}

\newpage
\appendix
%
%
%
%
%
%

\section{Formal Model}
\label{app:formal-model}

\vspace{-2mm}
\subsection{Shared Objects}

A {\em shared object}\/ supports concurrent execution of {\em
  operations}\/ performed by some set,
${\mathbb C} = \{c_1, c_2, \dots\}$, of client processes. Each operation
has an {\em invocation}\/ and {\em response}. An object {\em
  schedule}\/ is a sequence of the operation invocations and their
responses. An invoked operation is {\em complete}\/ in a given
schedule if the operation's response is also present in the schedule,
and {\em pending}\/ otherwise.  For a schedule $\sigma$, $ops(\sigma)$
denotes the set of all operations that were invoked in $\sigma$, and
$complete(\sigma)$ (resp., $pending(\sigma)$) denotes the subset of
$ops(\sigma)$ consisting of all the complete (resp., pending)
operations. Also, for a set $X$ of operations, we use $\sigma | X$ to
denote the subsequence of $\sigma$ consisting of all the invocation
and responses of the operations in $X$.

An operation $op$ {\em precedes}\/ an operation $op'$ in a schedule
$\sigma$, denoted $op \prec_\sigma op'$ iff $op'$ is invoked after
$op$ responds in $\sigma$. Operations $op$ and $op'$ are {\em
  concurrent}\/ in $\sigma$, if neither one precedes the other. A
schedule with no concurrent operations is {\em sequential}.
Given a schedule $\sigma$, we use $\sigma | i$ to denote the
subsequence of $\sigma$ consisting of the actions client $c_i$.  The
schedule is {\em well-formed}\/ if each $\sigma | i$ is a sequential
schedule.  In the following, we will only consider well-formed
schedules. 

The object's {\em sequential specification}\/ is a collection of the
object's sequential schedules in which all operations are complete.
For an object schedule $\sigma$, a {\em linearization}\/ $L_\sigma$ of
$\sigma$ is a sequential schedule consisting of all operations in
$complete(\sigma)$ along with their responses and a
subset of $pending(\sigma)$, each of which being
assigned a matching response, so that $L_\sigma$ satisfies both the
$\sigma$'s operation precedence relation ($\prec_\sigma$) and the
object sequential specification.

\vspace{-2mm}
\subsection{Registers} 

A {\em read/write register}\/ object (or simply a {\em
register}) supports two operations of the form $\lwrite(v)$,
$v\in Vals$, and $\lread()$ returning $ack$ and $v\in Vals$
respectively where $Vals$ is the register value domain. Its
sequential specification is the collection of all sequential
schedules where every $read$ returns the value written by the
last preceding $write$ or an initial value $v_0 \in Vals$ if no
such write exists.

A register is {\em multi-writer (MW)}\/ (resp., {\em
multi-reader (MR)}) if it can be written (resp., read) by an
{\em unbounded}\/ number of clients. A $k\textrm{-writer}$
register, or simply, \kreg, is a register that can be written by
at most $k>0$ distinct clients. A register is {\em single-writer
(SW)}\/ (resp., {\em single-reader (SR)}) if only one process
can write (resp., read) the register.
For a register schedule $\sigma$, we use $writes(\sigma)$ and
$reads(\sigma)$ to denote the sets of all write and read
operations invoked in $\sigma$.
We say that $\sigma$ is {\em write-sequential}\/ if no two
writes in $writes(\sigma)$ are concurrent.

\vspace{-2mm}
\subsection{Consistency Conditions} 
\label{sec:consistency}

{\em Consistency conditions}\/ specify the shared object behaviour
when accessed concurrently by the clients. Below, we
introduce a number of consistency conditions that will be used throughout this
paper.  They are expressed as a set of schedules $C$ satisfying one of
the following requirements:

\begin{description}

\item \textbf{Atomicity}~\cite{herlihy} For all schedules
  $\sigma \in C$, $\sigma$ has a linearization.

\item \textbf{Write-Sequential Regularity (WS-Reg)} For all
$\sigma \in C$, if $\sigma$ is write-sequential, then for each
$rd \in reads(\sigma) \cap complete(\sigma)$ there is a
linearization $L_{rd}$ of $\sigma | writes(\sigma) \cup \{rd\}$.

\item \textbf{Write-Sequential Safety (WS-Safe)} As WS-Reg,
but only required to hold for complete reads that are not
concurrent with any writes.

\end{description}

\vspace{-6mm}
\subsection{System Model}

We consider an {\em asynchronous fault-prone shared memory
  system}~\cite{JCT98} consisting of a set of shared {\em base}\/
objects ${\cal B} = \{b_1,b_2,\dots\}$. The objects are accessed by a
collection of clients in the set ${\mathbb C} = \{c_1, c_2, \dots\}$.

We consider a slight generalization of the model in~\cite{JCT98} where
the objects are mapped to a set ${\cal S} = \{s_1, s_2, \dots\}$ of
servers via a function $\delta$ from ${\cal B}$ to ${\cal S}$. For
$B \subseteq {\cal B}$, we will write $\delta(B)$ to denote the {\em
  image}\/ of $B$, i.e., $\delta(B) = \{\delta(b) : b \in B\}$.
Conversely, for $S \subseteq {\cal S}$, we will write $\delta^{-1}(S)$
to denote the {\em pre-image}\/ of $S$, i.e.,
$\delta^{-1}(S) = \{b : \delta(b) \in S\}$.  Note that for all
$B \subseteq {\cal B}$, $|\delta(B)| \le |B|$, and conversely, for all
$S\subseteq \mathcal{S}$, $|\delta^{-1}(S)| \ge |S|$.  Both servers
and clients can fail by crashing. A crash of a server causes all
objects mapped to that server to instantaneously crash\footnote{Note
  that the original faulty shared model of~\cite{JCT98} can be derived
  from our model by choosing $\delta$ to be an injective function.}.

 
We study algorithms emulating reliable $k\textrm{-writer}$
registers to a set of clients. Clients interact with the emulated register via {\em
  high-level}\/ read and write operations. To distinguish the
high-level emulated reads and writes from low-level base object
invocations, we refer to the former as \Read\/ and \Write. We say that
high-level operations are {\em invoked}\/ and {\em return} whereas
low-level operations are {\em triggered}\/ and {\em respond}. A
high-level operation consists of a series of trigger and respond {\em
  actions}\/ on base objects, starting with the operation's invocation
and ending with its return. Since base objects are crash-prone, we
assume that the clients can trigger several operations in a row
without waiting for the previously triggered operations to respond.

An emulation algorithm $A$ defines the behaviour of clients as
deterministic state machines where state transitions are associated
with actions, such as trigger/response of low-level operations. A {\em
  configuration}\/ is a mapping to states from system components,
i.e., clients and base objects. An {\em initial configuration}\/ is
one where all components are in their initial states.

A {\em run}\/ of $A$ is a (finite or infinite) sequence of alternating
configurations and actions, beginning with some initial configuration,
such that configuration transitions occur according to $A$. We use the
notion of time $t$ during a run $r$ to refer to the configuration
reached after the $t$\textsuperscript{th} action in $r$. A {\em run
  fragment}\/ is a contiguous sub-sequence of a run. A run is {\em
  write-only}\/ if it has no invocations of the high-level
  \Read\/ operations.

We say that a base object, client, or server is {\em faulty}\/ in a
run $r$ if it fails at some time in $r$, and correct, otherwise. A run
is {\em fair}\/ if (1) for every low-level operation triggered by a
correct client on a correct base object, there is eventually a
matching response, and (2) every correct client gets infinitely
many opportunities to both trigger a low-level operation and
execute the return actions. We say that a low-lever operation on
a base object is {\em pending}\/ in run $r$ if it was triggered but has no matching
response in $r$.
We assume that the base objects are atomic (as defined in
Section~\ref{sec:consistency}) 

\vspace{-2mm}
\subsection{Properties of the Emulation Algorithms}

\noindent %
\textbf{Safety} The emulation algorithm safety will be expressed in
terms of the consistency conditions specified in
Section~\ref{sec:consistency}. An emulation algorithm $A$ {\em
  satisfies}\/ a consistency condition $C$ if for all
  $A$'s runs $r$, the subsequence of $r$ consisting of the invocations and responses of the high-level read and write operations is a schedule in $C$.

\noindent %
{\bf Liveness} We consider the following liveness conditions that must
be satisfied in fair runs of an emulation algorithm. A {\em
  wait-free}\/ object is one that guarantees that every high-level
operation invoked by a correct client eventually returns, regardless
of the actions of other clients. 
An {\em obstruction-free}\/ object guarantees that every
high-level operation invoked by a correct client that is not
concurrent to any other operation by a correct client eventually
returns.

\noindent
{\bf Fault-Tolerance} The emulation algorithm is $f$-tolerant if it
remains correct (in the sense of its safety and liveness properties)
as long as at most $f$ servers crash for a fixed $f > 0$.

\noindent
{\bf Complexity measures} The {\em resource consumption} of an
emulation algorithm $A$ in a (finite) run $r$ is the number of
base objects used by $A$ in $r$. The {\em resource
complexity}~\cite{JCT98} of $A$ is the maximum resource
consumption of $A$ in all its runs. 

%
%

\section{A \maxreg\ Emulation With One CAS}
\label{app:CAStoMax}

We present here a wait-free emulation of an atomic max-register
on top of a single CAS object.
The pseudocode appears in Algorithm~\ref{alg:CAStoMAX}.
The CAS object supports one operation with two parameters, $exp$
and $new$; if $exp$ is equal to the object's current value, then
the value is set to $new$.
In any case, the operation returns the old value.

A max-register supports two operations, $\emph{\writemax}(v)$
for some $v$ from some domain of ordered values $\mathbb{V}$
that returns ok, and $\emph{\readmax}()$ that returns a value
from $\mathbb{V}$.
The sequential specification of max-register is the following:
A $\emph{\readmax}$ returns the highest value among those
written by $\emph{\writemax}$ before it, or $v_0$ in case
no such values.

\begin{algorithm}[H]
\label{alg:CAStoMAX}

\caption{\maxreg\ emulation from a single CAS object $b$}

\begin{algorithmic}[0]
\small
\Local{}{}
 \State $tmp \in \mathbb{V}$, initially $v_0$
 \EndLocal

\Operation{$b.CAS(exp, new)$, $exp, new \in \mathbb{V}$}{}
	\State $prev \gets b$
	\If{$exp = b$}
		\State $b \gets new$
	\EndIf
	\State {\bf return} $prev$ \label{line:cas-end}
\EndOperation

\end{algorithmic}

\label{alg:CAStoMAX}
\begin{algorithmic}[1]
\small%

\Operation{$\emph{\writemax}(v)$}{} %

	\While{true}
		\State $tmp \gets b.CAS(v_0, v_0)$
		\label{line:writeLinear2}
		\If{$tmp \geq v$}
			\State return ok
		\EndIf
		\State $b.CAS(tmp, v)$
		\label{line:writeLinear1}
	
	\EndWhile
\EndOperation %

\Statex

\Operation{\emph{\readmax}()}{} %

	\State  $tmp \gets b.CAS(v_0, v_0)$
	\label{line:readLinear}
	\State return $tmp$
	
\EndOperation %

\end{algorithmic}
\end{algorithm}

\subsection{Correctness}

We say that a $b.CAS(exp,new)$ operation is \emph{successful} if
$b$ is set to $new$.

\begin{observation}
\label{obs:successful}

Consider a successful $b.CAS(exp,new)$ operation for some $exp$
and $new$, the the next $b.CAS(exp',new')$ operation for some
$exp'$ and $new'$ returns $new$.

\end{observation}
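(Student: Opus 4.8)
The plan is to argue directly from the sequential code of the CAS operation in Algorithm~\ref{alg:CAStoMAX}, relying only on the atomicity of the base object $b$. First I would unpack the two hypotheses concretely. By the definition of a \emph{successful} $b.CAS(exp,new)$ operation, the guard $exp = b$ held at the moment it executed, so the branch $b \gets new$ was taken; hence the value of $b$ immediately after this operation completes is exactly $new$.

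Next I would use atomicity to pin down the value observed by the following operation. Since $b$ is atomic, all of its operations take effect instantaneously and are totally ordered, and by \emph{the next} $b.CAS(exp',new')$ operation we mean the one immediately succeeding the given successful operation in that order. Consequently no other operation on $b$ takes effect in between, so when the next operation executes its first step $prev \gets b$, the value of $b$ is still $new$, giving $prev = new$.

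Finally I would observe that every invocation of the CAS operation returns the value it read into $prev$ at line~\ref{line:cas-end}, regardless of whether its own guard $exp' = b$ succeeds or fails, and regardless of the particular parameters $exp', new'$. Therefore the next operation returns $prev = new$, which is precisely the claim.

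The statement is elementary, so I do not expect any substantive obstacle. The only point requiring care is making the informal phrase ``the next operation'' precise as the immediate successor in the total order on operations of $b$ guaranteed by atomicity; this is exactly what licenses the assertion that $b$ retains the value $new$ undisturbed until that operation performs its read.
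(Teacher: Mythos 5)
Your proof is correct and is precisely the direct code-level argument the paper intends: the paper states this as an observation with no written proof, treating it as immediate from the atomicity of $b$ and the fact that every $b.CAS$ returns the value read into $prev$ at line~\ref{line:cas-end}. Your care in formalizing ``next'' as the immediate successor in the total order on $b$'s operations is exactly the right way to make the omitted justification rigorous.
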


\noindent The following observation follows immediately
from the fact that $b.CAS(exp, new)$ is called only with $new
\geq exp$.

\begin{observation}
\label{obs:monotonic}

The values returned by $b.CAS(exp, new)$ are monotonically
increasing.

\end{observation}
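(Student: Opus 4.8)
The plan is to first establish the key invariant that the value stored in the CAS object $b$ never decreases during a run, and then transfer this monotonicity to the returned values using the atomicity of $b$.

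For the invariant, I would note that $b$ is modified only when a $b.CAS(exp, new)$ operation is \emph{successful}, i.e., when $exp$ equals the current value of $b$, in which case $b$ is set to $new$. The algorithm invokes $b.CAS(exp, new)$ only with $new \geq exp$: lines~\ref{line:writeLinear2} and~\ref{line:readLinear} use $exp = new = v_0$, while line~\ref{line:writeLinear1} is reached only after the test $tmp \geq v$ fails, so there $new = v > tmp = exp$. Hence every successful update replaces the current value $b = exp$ by $new \geq exp$. Starting from the initial value $v_0$ and inducting on the steps of the run, the value of $b$ is therefore monotonically non-decreasing throughout.

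Next I would exploit the atomicity of $b$ (assumed in Section~\ref{sec:consistency}) to totally order the $b.CAS$ operations by their linearization points. Each $b.CAS(exp, new)$ returns $prev$, which is precisely the value of $b$ immediately before its linearization point. Now consider two operations $op$ and $op'$ with $op$ linearized before $op'$. By the invariant, the value of $b$ just before $op'$ is at least its value just after $op$, which is in turn at least its value just before $op$, i.e.\ the value returned by $op$. Consequently $op'$ returns a value no smaller than $op$, which is exactly the claimed monotonicity.

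I do not expect a genuine obstacle here, as the statement is essentially a corollary of the non-decreasing invariant (and indeed dovetails with Observation~\ref{obs:successful}). The only point requiring care is keeping the correspondence ``returned value $=$ value of $b$ at the linearization point, taken before the possible update'' precise, and applying the invariant across both the interval separating the two operations \emph{and} $op$'s own update.
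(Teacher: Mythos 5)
Your proof is correct and follows the same route the paper takes: the paper justifies this observation in one line by noting that $b.CAS(exp,new)$ is only ever invoked with $new \geq exp$, which is exactly the invariant you establish and then propagate through the linearization order. Your version merely spells out the call-site check and the non-decreasing-value induction that the paper leaves implicit.
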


%
%
%

\noindent We next define linearization points:

\begin{definition} (linearization points)

\begin{description}
	\item[\emph{\readmax}:] The linearization point is
	line~\ref{line:readLinear}.

	\item[\emph{\writemax}:] If the operation performs a successful
	$b.CAS(tmp, v)$ in line~\ref{line:writeLinear1}, then the
	linearization point is the last time
	Line~\ref{line:writeLinear1} is performed.
	Otherwise, the linearization point is last time
	line~\ref{line:writeLinear2} is performed.

\end{description}

\end{definition}

\begin{lemma}
\label{lem:CASlinearization}

For every run $r$ of Algorithm~\ref{alg:CAStoMAX}, the sequential
run $\sigma_r$, produced by the linearization points of
operations in $r$, is a linearization of $r$.

\end{lemma}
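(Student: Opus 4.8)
The plan is to establish two things: that each operation's linearization point lies within its own execution interval, so that $\sigma_r$ respects the real-time precedence of $r$; and that $\sigma_r$ obeys the sequential specification of the \maxreg. The first is immediate: the $\readmax$ point at line~\ref{line:readLinear}, and both candidate $\writemax$ points at lines~\ref{line:writeLinear1} and~\ref{line:writeLinear2}, are steps taken strictly between an operation's invocation and its return. Hence if $op$ precedes $op'$ in $r$, then $op$'s point precedes $op'$'s invocation and therefore $op'$'s point. Since the base CAS object is atomic, all these points are distinct, totally ordered steps, so $\sigma_r$ is well defined and order-preserving.

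Before addressing the specification, I would record two structural facts about how the single object $b$ evolves. First, the only steps that change $b$ are successful CAS operations: a read $b.CAS(v_0,v_0)$ (at lines~\ref{line:readLinear} or~\ref{line:writeLinear2}) leaves $b$ unchanged whenever $b\neq v_0$ and rewrites $v_0$ otherwise, while a $\writemax(v)$ reaches line~\ref{line:writeLinear1} only after the test $tmp\ge v$ fails, so a successful $b.CAS(tmp,v)$ there moves $b$ from $tmp$ to a strictly larger $v$. Thus $b$ is monotonically nondecreasing (consistent with Observation~\ref{obs:monotonic}), and by Observation~\ref{obs:successful} its value at any time equals the argument of the last preceding successful CAS, or $v_0$. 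Second, once a $\writemax(v)$ performs a successful CAS, monotonicity gives $b\ge v$ forever, so its next iteration reads $tmp\ge v$ and the operation returns; hence that successful CAS is indeed the \emph{last} execution of line~\ref{line:writeLinear1}, confirming it as the operation's linearization point.

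The heart of the argument is the invariant that at every time $t$ the value of $b$ equals $\max\bigl(\{v : \writemax(v) \text{ is linearized at a point}\le t\}\cup\{v_0\}\bigr)$, which I would prove by induction on the steps of $r$. Read points and non-linearizing steps change neither side. A $\writemax(v)$ linearized at line~\ref{line:writeLinear1} sets $b:=v$ with $v$ exceeding the previous value and simultaneously joins the set on the right, so both sides become $v$. A $\writemax(v)$ linearized at line~\ref{line:writeLinear2} is linearized there precisely because its final read returned $tmp\ge v$, i.e.\ $b\ge v$ already, so adjoining $v$ to the set leaves the maximum unchanged, matching the unchanged value of $b$. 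Given the invariant, a $\readmax$ returns the value $b$ holds at its point (line~\ref{line:readLinear}, a $b.CAS(v_0,v_0)$ returning the current value), which equals the maximum over all $\writemax$ operations linearized before it — exactly the sequential specification; $\writemax$ operations return $ok$ unconditionally and impose no further constraint.

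I expect the main obstacle to be the inductive step for writes linearized at line~\ref{line:writeLinear2}: one must argue that placing such a write in the order exactly when $b$ already dominates its value is both legitimate (the point lies within the interval) and harmless (it cannot raise the running maximum), which is what keeps the returned-value reasoning consistent and rules out a read ``missing'' a concurrently completing write. Finally, pending operations are handled by including in $\sigma_r$ only those that have already reached a linearization point — in particular writes that have performed a successful CAS — assigning them their responses and omitting the rest, as permitted by the definition of linearization.
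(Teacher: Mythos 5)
Your proof is correct, but it is organized differently from the paper's. The paper argues per read operation and by contradiction: for a \readmax\ $mr$ returning $v$, it shows separately that (1) no $\writemax(v')$ with $v'>v$ can precede $mr$ in $\sigma_r$ (splitting into cases according to whether that write's linearization point is at line~\ref{line:writeLinear1} or line~\ref{line:writeLinear2}, and in each case using Observations~\ref{obs:successful} and~\ref{obs:monotonic} to force the read's CAS at line~\ref{line:readLinear} to return something at least $v'$), and (2) if $v\neq v_0$ then a $\writemax(v)$ with a successful CAS is linearized before $mr$. You instead prove a single step-indexed invariant --- that at every time the value of $b$ equals the maximum of $v_0$ and the values of all \writemax\ operations already linearized --- and read off the sequential specification directly. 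Both arguments rest on the same two facts (monotonicity of $b$ and the visibility of a successful CAS), and your observation that a successful CAS at line~\ref{line:writeLinear1} must be the last execution of that line reproduces the reasoning the paper uses for its claim (2). The invariant route is arguably cleaner and more global (it handles all reads at once and makes the treatment of writes linearized at line~\ref{line:writeLinear2} explicit, which the paper leaves somewhat implicit), at the cost of having to carefully enumerate every kind of step in the induction; the paper's per-read contradiction is more local and leans directly on the two stated observations. Your handling of pending operations (include exactly those that have reached a linearization point, in particular writes whose successful CAS has occurred) is a point the paper glosses over and is worth stating.
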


\begin{proof}

The real time order of $r$ is trivially preserved in $\sigma_r$.
We need to show that $\sigma_r$ preserves max-register's
sequential specification.
Let $mr$ be a \emph{\readmax()} in $r$ that returns a value $v$,
and let $t_{mr}$ be the time when $b.CAS(v_0,v_0)$ (that
returns $v$) is called by $mr$ (line~\ref{line:readLinear}).
We need to show that (1) there is no \emph{\writemax$(v')$} that
precedes $mr$ in $\sigma_r$ s.t.\ $v'>v$, and (2) if $v \neq
v_0$, there is a \emph{\writemax$(v)$} that precedes $mr$ in
$\sigma_r$


\begin{enumerate}
  
  \item Assume by way of contradiction that there is a
  \emph{\writemax$(v')$} $w$ that precedes $mr$ in $\sigma_r$
  s.t.\ $v'>v$. Denote $t_w$ to be the linearization point of
  $w$ in $r$, and note that $t_w < t_{mr}$. 
  Now consider two case:
  
  \begin{enumerate}
    
    \item $w$ performs line~\ref{line:writeLinear1} at time
    $t_w$ (line~\ref{line:writeLinear1} is $w$'s linearization
    point).
    In this case, $w$ performs a successful $b.CAS(tmp, v')$ in
    line~\ref{line:writeLinear1} for some $tmp$ at some time $t'
    \leq t_w < t_{mr}$. Therefore, by
    Observations~\ref{obs:successful} and~\ref{obs:monotonic},
    $b.CAS(v_0, v_0$) called by $mr$ in
    line~\ref{line:readLinear} at time $t_{mr}$ returns $v'' \geq v' > v$. Hence, $mr$ returns $v'' \geq v' > v$.
    A contradiction.
    
    \item $w$ performs line~\ref{line:writeLinear2} at time
    $t_w$ (line~\ref{line:writeLinear2} is $w$'s linearization
    point).
    Thus, $t_w$ is the last time $w$ calls $tmp \gets b.CAS(v_0,
    v_0)$ in lime~\ref{line:writeLinear2}.
    Therefore, $b.CAS(v_0, v_0)$, called at time $t_w$ returns
    a value bigger than or equal to $v'$.
    Therefore, by
    Observations~\ref{obs:successful} and~\ref{obs:monotonic},
    $b.CAS(v_0, v_0)$ called by $mr$ in
    line~\ref{line:readLinear} at time $t_{mr}$ returns $v''
    \geq v' > v$. Hence, $mr$ returns $v'' \geq v' > v$.
    A contradiction. 
    
  \end{enumerate}

  \item Assume that $v \neq v_0$.
  By the code and by the CAS properties, there is a
  \emph{\writemax(v)} operation $w$ that calls a successful
  $b.CAS(tmp, v)$ (line~\ref{line:writeLinear1}) for some $tmp$
  at time $t_w < t_{mr}$.
  Therefore, by Observations~\ref{obs:successful}
  and~\ref{obs:monotonic}, the next call to $b.CAS(v_0, v_0)$
  (in line~\ref{line:writeLinear2}) by $w$ returns $tmp \geq v$,
  and thus $w$ does not perform line~\ref{line:writeLinear1}
  again.
  We get that $t_w$ is the linearization point of $w$, and thus
  $w$ precedes $mr$ in $\sigma_r$.

\end{enumerate}

\end{proof}


%
%
%

\begin{theorem}

Algorithm~\ref{alg:CAStoMAX} emulates wait-free atomic
max-register.

\end{theorem}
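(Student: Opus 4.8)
The plan is to split the claim into its two constituent guarantees---atomicity and wait freedom---and dispatch them separately, leaning on the machinery already developed for the algorithm. For atomicity essentially all the work is already done: by the definition of atomicity in Section~\ref{sec:consistency} it suffices to exhibit a linearization of every run $r$ of Algorithm~\ref{alg:CAStoMAX}, and Lemma~\ref{lem:CASlinearization} provides exactly this, namely the schedule $\sigma_r$ induced by the chosen linearization points, which respects both real-time order and the sequential specification of \maxreg. Hence I would simply invoke Lemma~\ref{lem:CASlinearization} to conclude atomicity, with no further argument required.

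The substance of the proof is therefore wait freedom, i.e.\ showing that in every fair run every operation invoked by a correct client returns. The \readmax\ case is immediate: it performs a single $b.CAS(v_0,v_0)$ on line~\ref{line:readLinear} and returns, and since $b$ is a correct base object this invocation responds in a fair run. The work is in the \writemax\ loop. Here I would track the values read into $tmp$ across successive iterations, say $tmp_1, tmp_2, \dots$, and argue progress as follows. If some $tmp_j \ge v$ then the operation returns immediately, so assume every $tmp_j < v$. After reading $tmp_j$ the operation triggers $b.CAS(tmp_j, v)$ on line~\ref{line:writeLinear1}, and I would split on its outcome. If this CAS succeeds then $b$ is set to $v$, so by Observation~\ref{obs:monotonic} the next read on line~\ref{line:writeLinear2} returns a value $\ge v$ and the operation returns one iteration later. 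If it fails, then the value of $b$ differed from $tmp_j$ at the time of the CAS; since $b$ is monotone (Observation~\ref{obs:monotonic}) and equalled $tmp_j$ when read, its value must have strictly increased, so the following read satisfies $tmp_{j+1} > tmp_j$.

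The main obstacle is thus ruling out an infinite execution of the loop in which every one of the operation's CAS attempts fails. In that scenario the reads form a strictly increasing sequence $tmp_1 < tmp_2 < \cdots$ with every $tmp_j < v$, and each strict increase is witnessed by a concurrent successful CAS that raised $b$; by Observation~\ref{obs:successful} each such CAS installs the argument of some other \writemax, and all these arguments are $<v$. To close this case I would argue that each competing \writemax\ contributes at most one successful CAS---once its CAS on line~\ref{line:writeLinear1} succeeds, its next read on line~\ref{line:writeLinear2} returns a value at least its own argument and it returns without CAS-ing again---so that every failed attempt of ours is charged to a distinct concurrent write whose installed value lies strictly below $v$, while the register value advances monotonically toward $v$ and forces a return on our very next read as soon as it reaches $v$. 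I expect this termination/charging step to be the delicate part, since it is precisely where the interaction between monotonicity, fairness, and the ceiling imposed by the target value $v$ must be made rigorous in order to conclude that the loop cannot run forever; with that step in hand, wait freedom follows and, combined with the atomicity argument above, establishes the theorem.
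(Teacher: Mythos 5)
Your decomposition is exactly the paper's: atomicity is discharged by citing Lemma~\ref{lem:CASlinearization}, \readmax\ is wait-free because a single \CAS\ on line~\ref{line:readLinear} is wait-free, and for \writemax\ you observe (as the paper does, via Observations~\ref{obs:successful} and~\ref{obs:monotonic}) that a successful \CAS\ on line~\ref{line:writeLinear1} forces a return one iteration later, while a failed one forces the next read on line~\ref{line:writeLinear2} to return a strictly larger value. Up to that point you match the paper step for step.

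The divergence, and the gap, is in how you close the termination argument for the \writemax\ loop. Your proposed charging scheme---each failed \CAS\ of ours is charged to a distinct concurrent \writemax\ that installed a value below $v$---is sound as a charging map but bounds nothing: the operations concurrent with a hypothetically non-terminating \writemax\ form an unbounded set, since other clients may invoke fresh \writemax\ operations one after another forever, each contributing its one successful \CAS. So ``at most one successful \CAS\ per competing operation'' does not preclude infinitely many failures. The quantity that is actually finite is the one you already exhibited and then walked away from: the values $tmp_1 < tmp_2 < \cdots$ form a strictly increasing sequence in $\mathbb{V}$, all bounded above by $v$, and the paper closes the proof by simply counting---if there are $k$ values of $\mathbb{V}$ strictly between the first value read and $v$, the loop runs at most $k+O(1)$ iterations before the read returns something $\ge v$. (This silently assumes that only finitely many elements of $\mathbb{V}$ lie in that interval, which holds for, e.g., $\mathbb{V}=\mathbb{N}$ but is a genuine hypothesis for dense orders; your starvation scenario with values creeping up toward $v$ is exactly what fails without it.) Replacing your operation-charging step with this value-counting step---which requires no new ideas, only the monotonicity you already established---completes the proof and recovers the paper's argument.
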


\begin{proof}

Atomicity follows from Lemma~\ref{lem:CASlinearization}.
We left to show wait-freedom:

\begin{itemize}
  
  \item \textbf{\readmax():} Since $b.CAS(exp,new)$ is wait free,
  \readmax() is wait-free.
  
  \item \textbf{\writemax(v):} Note that $\emph{\writemax}(v)$
  returns in the iteration in which $b.CAS(v_0, v_0)$ in
  line~\ref{line:writeLinear2} returns a value bigger than or
  equal to $v$.
  Therefore, by Observations~\ref{obs:successful}
  and~\ref{obs:monotonic}, $\emph{\writemax}(v)$ returns in the
  following iteration after a successful $b.CAS(tmp, v)$ in
  lime~\ref{line:writeLinear1}.
  Now assume in a way of contradiction that 
  $\emph{\writemax}(v)$ do not have a successful
  $b.CAS(tmp,v)$ in lime~\ref{line:writeLinear1}.
  By the code and Observations~\ref{obs:successful}
  and~\ref{obs:monotonic}, if $b.CAS(tmp, v)$ in
  lime~\ref{line:writeLinear1} do not succeed, then the
  following $b.CAS(v_0, v_0)$ in line~\ref{line:writeLinear2}
  returns a bigger value that what it returned in the previous
  iteration.
  Now let $v'$ be the value returned by the
  first $b.CAS(v_0, v_0)$ in line~\ref{line:writeLinear2}, and
  assume w.l.o.g. that there are $k$ values bigger than $v'$ and
  smaller than $v$ in $\mathbb{V}$.
  Therefore, after at most $k$ iteration $\emph{\writemax}(v)$
  returns.

\end{itemize}

\end{proof}
\newpage


\section{Space Lower Bounds}
\label{sec:space:app}

\begin{lemclone}{lem:facts1}~
  Let $r\in \tup{r_{i-1}}{Ad_i}$ be a run consisting of $t$ steps.
  Then, for all $t \ge t_{i-1}$, all of the following hold:
\begin{AutoMultiColItemize}
\item $Q_i(t) \subseteq \delta(Cov_i(t)) \setminus F$ \label{lem:facts:qsubset}
\item $Q_i(t) \subseteq Q_i(t+1)$  \label{obs:ad}
\item $F_i(t) \subseteq F_i(t+1)$  \label{obs:adf}
\item $|F_i(t)| - |Q_i(t)| \leq 1$ \label{obs:ad1}
\item $|Q_i(t)| \le f$ \label{obs:qlim}
\item $|F_i(t)| \le f+1$ \label{obs:filim}
\item $F_i(t) = F_i(t+1)\ \implies M_i(t) \subseteq M_i(t+1)$
 \label{lem:facts:covfi}
\item $|M_i(t)| \le f+1$ \label{lem:facts:maxcovfi}
\item $|\delta(Cov_i(t)) \setminus F| \ge f \implies |Q_i(t)|\ge f$
\label{lem:facts:qisize}
\item $|\delta(Cov_i(t)) \setminus F| < f \implies \delta(Rr_i(t)) \setminus F = \emptyset$
\label{lem:facts:rrr}
\item $(Q_i(t) \cup M_i(t)) \cap \delta^{-1}(Rr_i(t)) = \emptyset$
\label{lem:facts:silent}
\end{AutoMultiColItemize}
\end{lemclone}

\begin{proof}
  By induction on $t\ge t_{i-1}$.\\
  {\em Base}: If $t = t_{i-1}$, then
  $Tr_i(t) = Rr_i(t) = Cov_i(t) = F_i(t) = \emptyset$. Furthermore,
  since $|\delta(Cov_i(t)) \setminus F| = 0 \le 1 \le f$,
  $Q_i(t) = \delta(Cov_i(t)) \setminus F$. Thus, all the claims hold.
  
  \noindent
  {\em Induction step}: Suppose all the claims hold for all
  $t\ge t_{i-1}$, and consider the time $t+1$:

  \ref{lem:facts1}.\ref{lem:facts:qsubset}: If
  $|\delta(Cov_i(t+1)) \setminus F| \le f$, then by
  Definition~\ref{def:notation}.\ref{def:notation:qi},
  $Q_i(t+1) = \delta(Cov_i(t+1)) \setminus F$ as needed. Otherwise,
  $Q_i(t+1) = Q_i(t)$. Consider an arbitrary server $s\in Q_i(t+1)$,
  and towards a contradiction, suppose that
  $s \notin \delta(Cov_i(t+1)) \setminus F$. Since
  $s\in Q_i(t+1) = Q_i(t)$, by the induction hypothesis,
  $s \in \delta(Cov_i(t)) \wedge s \notin F$. Since
  $s\notin \delta(Cov_i(t+1)) \setminus F$, we get that either (1)
  $s\notin \delta(Cov_i(t+1))$ or (2)
  $s \in \delta(Cov_i(t+1)) \cap F$. Since $s \notin F$, (2) is false,
  and therefore, $s\notin \delta(Cov_i(t+1))$. Hence, there exists a
  base register in $\delta^{-1}(\{s\})$ that responded at $t$ to a
  low-level write $w$ triggered after $t_{i-1}$. Since $s\in Q_i(t)$,
  by Definition~\ref{def:blocked-writes}, $w\in \BlockedWrites(t)$.
  However, since the environment behaves like $Ad_i$ after $r_{i-1}$,
  by Definition~\ref{def:ad}.\ref{def:ad:block}, $w$ does not respond
  at $t$. A contradiction.

  \ref{lem:facts1}.\ref{obs:ad}: Towards a contradiction, suppose that
  there exists $s \in {\cal S}$ such that
  $s \in Q_i(t) \wedge s \not\in Q_i(t+1)$. By
  Definition~\ref{def:notation}.\ref{def:notation:qi},
  $|\delta(Cov_i(t+1)) \setminus F| \le f$ as otherwise,
  $Q_i(t+1) = Q_i(t)$ contradicting the assumption. Thus,
  $Q_i(t+1) = \delta(Cov_i(t+1)) \setminus F$, and therefore, either
  (1) $s \not\in \delta(Cov_i(t+1))$ or (2)
  $s \in \delta(Cov_i(t+1)) \cap F$.  By the induction hypothesis for
  \ref{lem:facts1}.\ref{lem:facts:qsubset},
  $s \in \delta(Cov_i(t)) \wedge s \not\in F$. Hence, (2) is false,
  and it is only left to consider the case
  $s \not\in \delta(Cov_i(t+1))$. Thus, $s \in \delta(Cov_i(t))$ and
  $s \not\in \delta(Cov_i(t+1))$, which implies that there exists a
  base register in $\delta^{-1}(\{s\})$ that responded at time $t$ to
  a low-level write $w$ invoked after $t_{i-1}$. Since $s\in Q_i(t)$,
  by Definition~\ref{def:blocked-writes}, $w\in \BlockedWrites(t)$.
  However, since the environment behaves like $Ad_i$ after $r_{i-1}$,
  by Definition~\ref{def:ad}.\ref{def:ad:block}, $w$ does not respond
  at $t$. A contradiction.
  
  \ref{lem:facts1}.\ref{obs:adf}: Let $s\in F_i(t)$. By
  Definition~\ref{def:notation}.\ref{def:notation:fi}, there exists a
  base register $b \in \delta^{-1}(\{s\})$ that responded to a low-level
  write triggered on $b$ after $t_{i-1}$ at time $t'$ such that
  $t_{i-1} < t' \le t < t+1$. Since $t < t+1$, the $b$'s response has
  also occurred before $t+1$, and therefore, $b\in Rr_i(t+1)$. Hence,
  $b \in (\delta^{-1}(S) \cap Rr_i(t+1)) = F_i(t+1)$ as needed.

  \ref{lem:facts1}.\ref{obs:ad1}: Toward a contradiction, suppose that
  $|F_i(t+1)| - |Q_i(t+1)| > 1$. Since we already proved that
  $Q_i(t) \subseteq Q_i(t+1)$, $|Q_i(t)| \le |Q_i(t+1)|$. In addition,
  we know that $||Q_i(t+1)| - |Q_i(t)|| \le 1$,
  $||F_i(t+1)| - |F_i(t)|| \le 1$, and by the induction hypothesis
  $|F_i(t)| - |Q_i(t)| \le 1$. Thus, $|F_i(t+1)| - |Q_i(t+1)| > 1$
  implies that (1) $|F_i(t)| - |Q_i(t)| = 1$ (i.e.,
  $|F_i(t)| > |Q_i(t)|$), (2) $|Q_i(t+1)| = |Q_i(t)|$, and (3)
  $|F_i(t+1)| = |F_i(t)| + 1$.  Since we already proved that
  $F_i(t+1) \supseteq F_i(t)$, (3) implies that there exists
  $s \in {\cal S}$ such that $s \in F_i(t+1) \setminus F_i(t)$. Since
  by Definition~\ref{def:notation}.\ref{def:notation:fi},
  $F_i(t+1) \subseteq F$, $s \in F$. Thus, $s \in F \setminus F_i(t)$.
  This means that either no low-level writes have been triggered on
  registers in $\delta^{-1}(\{s\})$ after $t_{i-1}$, or there is a
  register $b\in \delta^{-1}(\{s\})$ that responds to a low-level
  write triggered on $b$ after $t_{i-1}$. In the first case, no
  register on $s$ can respond at time $t$, and therefore,
  $s \not\in F_i(t+1)$, which is a contradiction. In the second case,
  we obtain that $b$ satisfies $b\in Cov_i(t)$,
  $b\in \delta^{-1}(\{s\})$, $s \in F \setminus F_i(t)$, and $b$
  responds at $t$ to a covering write $w$ triggered after $t_{i-1}$.
  Thus, by Definition~\ref{def:notation}.\ref{def:notation:mi},
  $s \in M_i(t)$ and since $|F_i(t)| > |Q_i(t)|$, by
  Definition~\ref{def:notation}.\ref{def:notation:gi}, $s\in G_i(t)$.
  Thus, by Definition~\ref{def:blocked-writes},
  $w\in \BlockedWrites(t)$, and since the environment behaves like
  $Ad_i$ after $r_{i-1}$, by
  Definition~\ref{def:ad}.\ref{def:ad:block}, $w$ does not respond at
  $t$. A contradiction.

  \ref{lem:facts1}.\ref{obs:qlim}: Assume by contradiction that
  $|Q_i(t+1)| > f$.  Since by \ref{lem:facts1}.\ref{lem:facts:qsubset},
  $|Q_i(t+1)| \subseteq \delta(Cov_i(t+1)) \setminus F$,
  $|\delta(Cov_i(t+1)) \setminus F| > f$. By
  Definition~\ref{def:notation}.\ref{def:notation:qi},
  $Q_i(t+1)=Q_i(t)$, and therefore, $|Q_i(t)| > f$. A contradiction to
  the inductive assumption.

  \ref{lem:facts1}.\ref{obs:filim}: By
  Definition~\ref{def:notation}.\ref{def:notation:fi},
  $F_i(t+1) \subseteq F$. Since $|F|=f+1$, $|F_i(t+1)| \le f+1$.

  \ref{lem:facts1}.\ref{lem:facts:covfi}: Suppose $F_i(t) = F_i(t+1)$.
  Consider $s \in M_i(t)$, and toward a contradiction, suppose that
  $s\not\in M_i(t+1)$. Since by
  Definition~\ref{def:notation}.\ref{def:notation:fi},
  $F_i(t) \subseteq F$ and $F_i(t+1) \subseteq F$,
  $F\setminus F_i(t) = F \setminus F_i(t+1)$. This together with the
  fact that $s\in F\setminus F_i(t)$ implies that
  $s\in F\setminus F_i(t+1)$ as well. Thus, it must be the case that
  $s \in \delta(Cov_i(t)) \wedge s \not\in \delta(Cov_i(t+1))$. Thus,
  by Definition~\ref{def:notation}.\ref{def:notation:covi}, there
  exists a base register $b\in \delta^{-1}(\{s\})$ that responds to a
  low-level write invoked after $t_{i-1}$ at time $t$ which implies
  that $b\in Rr_i(t+1)$. Hence, by Definition~\ref{def:notation:fi},
  $s\in F_i(t+1)$. However, since $s\in F\setminus F_i(t+1)$,
  $s\not\in F_i(t+1)$. A contradiction.
  
  \ref{lem:facts1}.\ref{lem:facts:maxcovfi}: Since $F$ is fixed in advance
  and $|F|=f+1$, we receive
  $|M_i(t+1)|=|\delta(Cov_i(t+1)) \cap (F\setminus F_i(t+1))| \le |F\setminus
  F_i(t+1)| \le |F| = f+1$.

  \ref{lem:facts1}.\ref{lem:facts:qisize}: If
  $|\delta(Cov_i(t)) \setminus F| < f$ and
  $|\delta(Cov_i(t+1)) \setminus F| \ge f$, then there exists a base
  register on a server in $\mathcal{S} \setminus F$ that is newly
  covered after $t$. Thus, we get
  $|\delta(Cov_i(t)) \setminus F| = f - 1$ and
  $|\delta(Cov_i(t+1)) \setminus F| = f$. By
  Definition~\ref{def:notation}.\ref{def:notation:qi},
  $Q_i(t+1) = \delta(Cov_i(t+1)) \setminus F$, and therefore,
  $|Q_i(t+1)|=f$. Otherwise, by the induction hypothesis,
  $|Q_i(t)|\ge f$.  Since $|\delta(Cov_i(t+1)) \setminus F| \ge f$, we
  have that either (1) $|\delta(Cov_i(t+1)) \setminus F| = f$, or (2)
  $|\delta(Cov_i(t+1)) \setminus F| > f$. Applying
  Definition~\ref{def:notation}.\ref{def:notation:qi} to (1) and (2),
  we get the following: for (1),
  $Q_i(t+1) = \delta(Cov_i(t+1))\setminus F$, which implies
  $|Q_i(t+1)| = |\delta(Cov_i(t+1))\setminus F| = f$, and for (2),
  $Q_i(t+1) = Q_i(t)$, and therefore, $|Q_i(t+1)|\ge f$.

  \ref{lem:facts1}.\ref{lem:facts:rrr}: Toward a contradiction,
  suppose that $|\delta(Cov_i(t+1)) \setminus F| < f$ and
  $\delta(Rr_i(t+1)) \setminus F \neq \emptyset$.  By the induction
  hypothesis,
  $|\delta(Cov_i(t)) \setminus F| < f \implies \delta(Rr_i(t))
  \setminus F = \emptyset$.
  We first consider the case $|\delta(Cov_i(t)) \setminus F| \ge f$.
  Thus given $|\delta(Cov_i(t+1)) \setminus F| < f$, there exists a
  server in $\delta(Cov_i(t)) \setminus F$ such that some register on
  that server responds to a low-level write $w$ that was triggered
  after $t_{i-1}$. Moreover, $|\delta(Cov_i(t)) \setminus F| = f$, and
  thus, by Definition~\ref{def:notation}.\ref{def:notation:qi},
  $Q_i(t) = \delta(Cov_i(t)) \setminus F$. Since the environment
  behaves like $Ad_i$ after $r_{i-1}$, by
  Definition~\ref{def:blocked-writes}, $w\in \BlockedWrites(t)$, and
  therefore, by Definition~\ref{def:ad}.\ref{def:ad:block}, $w$ does
  not respond at $t$. A contradiction.

  Thus, we know that $|\delta(Cov_i(t)) \setminus F| < f$ and
  $\delta(Rr_i(t)) \setminus F = \emptyset$. And since
  $\delta(Rr_i(t+1)) \setminus F \neq \emptyset$, there exists a
  server $s\in \delta(Cov_i(t))\setminus F$ such that some object on
  $s$ responded at $t$ to a low-level write $w$ triggered after
  $t_{i-1}$. Since $|\delta(Cov_i(t)) \setminus F| < f$, by
  Definition~\ref{def:notation}.\ref{def:notation:qi},
  $Q_i(t) = \delta(Cov_i(t)) \setminus F$. Thus, $s \in Q_i(t)$.
  However, by Definition~\ref{def:blocked-writes},
  $w\in \BlockedWrites(t)$, and since the environment behaves like
  $Ad_i$ after $t_{i-1}$, by
  Definition~\ref{def:ad}.\ref{def:ad:block}, $w$ does not respond at
  $t$. A contradiction.

  \ref{lem:facts1}.\ref{lem:facts:silent}: Toward a contradiction,
  suppose that
  $(Q_i(t+1) \cup M_i(t+1)) \cap \delta(Rr_i(t+1)) \neq \emptyset$.  We will
  consider the following two cases separately: (1)
  $Q_i(t+1) \cap \delta(Rr_i(t+1)) \neq \emptyset$, and (2)
  $M_i(t+1) \cap \delta(Rr_i(t+1)) \neq \emptyset$.

  (1) Suppose $Q_i(t+1) \cap \delta(Rr_i(t+1)) \neq \emptyset$, and
  let $s\in Q_i(t+1) \cap \delta(Rr_i(t+1))$. If $s \in Q_i(t)$, then
  by the induction hypothesis $s \not\in \delta(Rr_i(t))$. This means
  that either (a) $\delta^{-1}(\{s\}) \cap Tr(t) = \emptyset$, or (b) there
  exists a base register in $\delta^{-1}(\{s\})$ that responds to a low-level
  write $w$ triggered after $t_{i-1}$. If (a) holds, then no base
  register can respond to a low-level write before $t+1$, and
  therefore, $s\notin \delta(Rr_i(t+1))$, which is a contradiction. If
  (b) is the case, then since $s \in Q_i(t)$, by
  Definition~\ref{def:blocked-writes}, $w\in \BlockedWrites(t)$. Since
  the environment behaves like $Ad_i$ after $t_{i-1}$, by
  Definition~\ref{def:ad}.\ref{def:ad:block}, $w$ does not respond at
  $t$, which is also a contradiction. 

  If $s\notin Q_i(t)$, then since $s\in Q_i(t+1)$, by
  Definition~\ref{def:notation}.\ref{def:notation:qi}, the only action
  that can follow $t$ is a trigger of a low-level write on some
  register in $\delta^{-1}(\{s\})$. Since $s\in \delta(Rr_i(t+1))$,
  and the action executed at $t$ is not a respond,
  $s \in \delta(Rr_i(t))$. Thus, $Q_i(t+1) \neq Q_i(t)$ which by
  Definition~\ref{def:notation}.\ref{def:notation:qi}, implies that
  $Q_i(t+1) = \delta(Cov_i(t+1)) \setminus F$, and
  $|Q_i(t+1)| = |\delta(Cov_i(t+1)) \setminus F| \le f$. Since
  $Q_i(t) \subset Q_i(t+1)$, $|Q_i(t)| < f$, and by the induction
  hypothesis for \ref{lem:facts1}.\ref{lem:facts:qisize}, we have
  $|\delta(Cov_i(t+1)) \setminus F| < f$. Thus, by the induction
  hypothesis for \ref{lem:facts1}.\ref{lem:facts:rrr}, we conclude
  that no registers on servers in $\mathcal{S} \setminus F$ have
  responded to any low-level writes triggered between $t_{i-1}$ and
  $t$. However, since $s \in \delta(Rr_i(t))$ and, by the induction
  hypothesis for \ref{lem:facts1}.\ref{lem:facts:qsubset},
  $s \notin \delta(Rr_i(t))$, $s\in \mathcal{S} \setminus F$ has a
  register that responded to a low-level write triggered after
  $t_{i-1}$. A contradiction.

  (2) Suppose that $M_i(t+1) \cap \delta(Rr_i(t+1)) \neq \emptyset$,
  and let $s\in M_i(t+1) \cap \delta(Rr_i(t+1))$. By
  Definition~\ref{def:notation}.\ref{def:notation:fi}, we know that
  $s\in F\setminus F_i(t+1)$, and therefore, $s\in F$ and
  $s\notin F_i(t+1)$. Thus, $s \in F \cap \delta(Rr_i(t+1))$, and
  therefore, by Definition~\ref{def:notation}.\ref{def:notation:fi},
  $s \in F_i(t+1)$. A contradiction.
\end{proof}

We now give the full proof of Lemma~\ref{lem:2f}. 
An illustration of the runs constructed in the proof appear in
Figure~\ref{fig:runs}.\\

\begin{figure}[ht]      
  \centering 
  \includegraphics[width=6.5in]{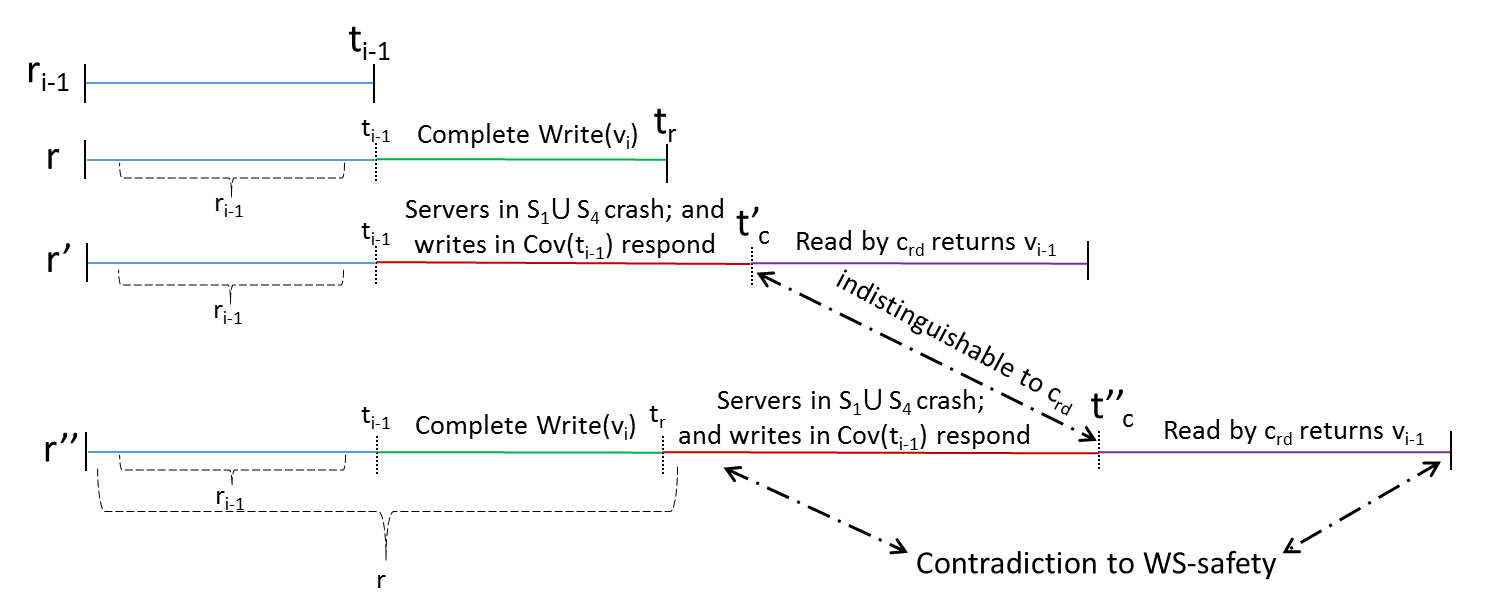}
  \caption{An illustration of the runs constructed in
  Lemma~\ref{lem:2f}.}
   \label{fig:runs}
\end{figure}

\begin{lemclone}{lem:2f}~
  Consider a run $r \in \triple{r_{i-1}}{Ad_i}{t_r}$ where
  $t_r > t_{i-1}$, consisting of $r_{i-1}$ followed by a complete
  high-level write invocation by client $c_i \not\in C(t_{i-1})$ that
  returns at time $t_r$. Then,
  $|\delta(Tr_i(t_r) \setminus Cov(t_{i-1}))| > 2f$.

\end{lemclone}

\begin{proof}
  Denote $X \triangleq \delta(Tr_i(t_r) \setminus Cov(t_{i-1}))$, and
  assume by contradiction that $|X| \leq 2f$.  Let
  $S_1= F_i(t_r)$, $S_2=Q_i(t_r)$,
  $S_3 = X \cap (F \setminus F_i(t_r))$ and
  $S_4=X \setminus (S_1\cup S_2 \cup S_3)$.  Note that
  $S_1,S_2,S_3,S_4$ are pairwise disjoint, and
  $X = S_1 \cup S_2 \cup S_3 \cup S_4$.  

  We first show that $|S_1 + S_4| \le f$. By
  Lemma~\ref{lem:facts1}.\ref{obs:filim}, $|S_1|\le f+1$.  However, if
  $|S_1|=f+1$, then by Lemma~\ref{lem:facts1}.\ref{obs:ad1},
  $|S_1| - |S_2| = f + 1 - |S_2| \le 1$, and therefore,
  $|S_1| + |S_2|\ge 2f+1$ violating the assumption. Hence,
  $|S_1| \le f$. By Lemma~\ref{lem:facts1}.\ref{obs:qlim},
  $|S_2|\le f$. If $|S_2| = f$, then by assumption,
  $|S_1 \cup S_3 \cup S_4| = |S_1|+|S_3|+|S_4| \leq f$, and
  therefore, $|S_1+S_4| = |S_1| + |S_4| \le f$. And if $|S_2|<
  f$, then by Definitions~\ref{def:notation}.\ref{def:notation:qi} and
  \ref{def:ad}, 
  $|S_4|=0$.
  Hence, $|S_1 + S_4| = |S_1| + |S_4| \le f$.
  
  Now let $r'$ be a fair extension of $r_{i-1}$ consisting of $t'_c$
  steps in which $r_{i-1}$ is followed by (1) the crash events of all
  servers in $S_1 \cup S_4$, and (2) the respond steps of all the
  covering writes in $r_{i-1}$ and (and no other steps). Extend $r'$
  with an invocation of a high-level read operation $R$ by client
  $c_{rd}\neq c_i$ at time $t'_c$.  Since $|S_1 + S_4| \le f$, by
  obstruction freedom and $f$-tolerance, there exists time
  $t_{rd} > t_{i-1}$ at which $R$ returns.  Since $r'$ is
  write-sequential, by WS-Safety, $R$ must return $v_{i-1}$.

  Next, let $r''$ be an extension of $r$ consisting of all steps in
  $r$ up to the time $t_r$ followed by (1) the crash events of all
  servers in the set $S_1 \cup S_4$, and (2) the respond steps of all
  covering writes in $r_{i-1}$ (and no other steps). Let $t''_c > t_r$
  be the number of steps in $r''$. 
  By Assumption~\ref{ass:linear}, the values that can be
  read from the base registers in $Cov(t_{i-1})$ at time
  $t''_c$ in $r''$ are identical to those that can be read
  at time $t'_c$ in $r'$.
  Furthermore, by
  definitions~\ref{def:notation}.\ref{def:notation:fi} 
  and~\ref{def:ad}, low-level writes triggered on registers in
  $\delta^{-1}(S_2 \cup S_3)$ do not respond before $t_r$ in
  $r$.
  Thus, by Assumption~\ref{ass:linear}, the values that can be
  read from the base registers in $\delta^{-1}(S_2 \cup S_3)$ at
  time $t'_c$ in $r'$ are also the same as those that can be read
  at time $t''_c$ in $r''$.  
  Thus, all registers in non-faulty
  servers at time $t'_c$ in $r'$ will appear to the subsequent
  reads as having the same content as at the time $t''_c$ in
  $r''_c$.
  
  We now extend $r''$ by letting client $c_{rd}$ invoke high-level
  read $R$ at time $t''_c$.  Since $r'$ is indistinguishable from
  $r''$ to $c_{rd}$, and $R$ has no concurrent high-level operations,
  we get that $R$ returns $v_{i-1}$ in $r''$.  However, since $W_i$ is
  the last complete write preceding $R$ in $r''$, by WS-Safety, the
  $R$'s return value must be $v_i \neq v_{i-1}$. A contradiction.
\end{proof}

\begin{thmclone}{thm:maxreg}~[Resource Complexity of $k$-\maxreg]
  Any algorithm implementing a wait-free $k$-writer \maxreg\/ from a
  collection of wait-free MWMR atomic base registers uses at least
  $k>0$ base registers.
\end{thmclone}

\begin{proof}
  Suppose to the contrary that there exists an algorithm $A$
  implementing a $k$-writer \maxreg\/ using $\ell<k$ base MWMR
  wait-free atomic registers. Consider a fault-prone shared memory
  system consisting of $n=2f+1$ servers each of which stores $\ell$
  MWMR wait-free atomic registers. Run $n$ copies $A_1,\dots,A_n$ of
  $A$, one on each server, to obtain $n=2f+1$ copies of $k$-writer
  \maxreg. Run a generic protocol of~\cite{mwr-journal} to obtain an
  $f$-tolerant emulation $\mathcal{A}$ of a wait-free $k$-writer
  regular register. By assumption, the resource complexity of
  $\mathcal{A}$ is $(2f+1)\ell < (2f+1)k$ base registers. However, by
  Theorem~\ref{thm:kf}, for $n = 2f+1$, it must be at least
  $kf + k(f+1) = (2f+1)k$. A contradiction.
\end{proof}

We next prove an extended version of Lemma~\ref{lem:exhaustive-run}
and use it to show additional lower bounds (see
Theorems~\ref{thm:2f+1}, \ref{thm:2fplus1}, \ref{thm:bounded-servers},
and \ref{theorem:adaptive}).


\begin{lemexe}{lem:exhaustive-run}~
  For all $k>0$, $f>0$, let $A$ be an $f$-tolerant algorithm that
  emulates a WS-Safe obstruction-free \kreg\/ using a collection
  $\mathcal{S}$ of servers storing a collection $\mathcal{B}$ of
  wait-free MWMR atomic registers.  Then, for every $F \subseteq S$
  such that $|F| = f +1$, there exist $k$ failure-free runs $r_i$,
  $0 \le i \le k$, of $A$ such that (1) $r_0$ is a run consisting of
  an initial configuration and $t_0=0$ steps, and (2) for all
  $i \in [k]$, $r_i$ is a write-only
  sequential extension of $r_{i-1}$
  ending at time $t_i
  > 0$ that consists of $i$ complete high-level writes of
  $i$
  distinct values $v_1,\dots,v_i$
  by $i$ distinct clients $c_1,\dots,c_i$ such that:
  
  \begin{multicols}{2}
  \begin{enumerate}[label=\alph*)]
    
  \item $|Cov(t_i)| \ge if$
    \label{lem:CovF}
    
  \item $\delta(Cov(t_i)) \cap F = \emptyset$
    \label{lem:F}
    
  \item $|\delta(Tr(t_i) \setminus Cov( t_{i-1} ))| > 2f$
    \label{lem:Tr}
    
  \item $|\delta(~Cov(t_i) \setminus Cov(t_{i-1})~)| \ge f$
    \label{lem:Cov}
   
  \item  $Cov(t_i) \supseteq Cov(t_{i-1})$
    \label{lem:CovSup}
     
  \end{enumerate}
  \end{multicols}
  
\end{lemexe}

\begin{proof}[Proof of \ref{lem:Tr}~--~\ref{lem:CovSup}]: Fix
  arbitrary $k>0$, $f>0$, and a set $F$ of servers such that
  $|F|=f+1$. We proceed by induction on $i$, $0 \le i \le k$.
  \textbf{Base:} Trivially holds for the run $r_0$ of $A$ consisting
  of $t_0=0$ steps. \textbf{Step:} Assume that $r_{i-1}$ exists for
  all $i\in [k-1]$. We use the extension $r'$ constructed in the proof
  of Lemma~\ref{lem:exhaustive-run} in Section~\ref{sec:space} to show
  the implications \ref{lem:Tr}~--~\ref{lem:CovSup} of the extended
  version are true:

\begin{enumerate}[label={}]

\item \ref{lem:Tr} $|\delta(Tr(t') \setminus Cov( t_{i-1} ))| > 2f$: Follows
  immediately from Lemma~\ref{lem:2f} and
  Definition~\ref{def:notation}.\ref{def:notation:tr}.

\item \ref{lem:Cov}
  $|\delta(~Cov(t') \setminus Cov(t_{i-1})~)| \ge f$: Since by
  Corollary~\ref{cor:Q}, $|Q_i(t_r)|=f$, and by
  Lemma~\ref{lem:facts1}.\ref{obs:ad}, $Q_i(t_r) \subseteq Q_i(t')$,
  we get $|Q_i(t')|=f$.  Hence, by
  Definition~\ref{def:notation}.\ref{def:notation:qi},
  $|\delta(Cov_i(t'))| \geq |Q_i(t')| \geq f$, and by
  Definition~\ref{def:notation}.\ref{def:notation:covi},
  $|\delta(~Cov(t') \setminus Cov(t_{i-1})~)| = |\delta(Cov_i(t'))|
  \geq f$.

\item \ref{lem:CovSup} $Cov(t_i) \supseteq Cov(t_{i-1})$: Follows
  immediately from Definition~\ref{def:ad}.

\end{enumerate}

\end{proof}

\begin{theorem}
  For all $k > 0$, and $f > 0$, let $A$ be an $f$-tolerant
  algorithm emulating a WS-Safe obstruction-free \kreg\/
  using a collection $\mathcal{S}$ of servers. 
  Then, $|\mathcal{S}| \geq 2f+1$.
\label{thm:2f+1}
\end{theorem}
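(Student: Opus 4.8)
The plan is to argue by contradiction: assume $|\mathcal{S}| \le 2f$ and derive a contradiction. I would split the range $|\mathcal{S}| \le 2f$ into the principal case $f+1 \le |\mathcal{S}| \le 2f$, where the covering machinery of Lemma~\ref{lem:exhaustive-run} is available, and the degenerate case $|\mathcal{S}| \le f$, which that machinery cannot reach (it presupposes a set $F$ with $|F| = f+1 \le |\mathcal{S}|$). Essentially all the work in the principal case is already done by part~\ref{lem:Tr} of the extended Lemma~\ref{lem:exhaustive-run} (equivalently, by Lemma~\ref{lem:2f}), so the theorem follows almost immediately once that lemma is invoked.

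In the principal case $f+1 \le |\mathcal{S}| \le 2f$ a set $F \subseteq \mathcal{S}$ with $|F| = f+1$ exists, so the extended Lemma~\ref{lem:exhaustive-run} applies for this $F$, using $k \ge 1$ (which holds since $k>0$). Instantiating the conclusion at $i=1$ and recalling that $r_0$ has $t_0 = 0$ steps so that $Cov(t_0) = \emptyset$, part~\ref{lem:Tr} produces a run $r_1$ with $|\delta(Tr(t_1) \setminus Cov(t_0))| = |\delta(Tr(t_1))| > 2f$. Since every register written during $r_1$ resides on some server, $\delta(Tr(t_1)) \subseteq \mathcal{S}$, and hence $|\mathcal{S}| \ge |\delta(Tr(t_1))| > 2f$, contradicting the assumption $|\mathcal{S}| \le 2f$.

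For the degenerate case $|\mathcal{S}| \le f$ I would use a standard indistinguishability argument. Since there are at most $f$ servers and at most $f$ may crash, the adversary may crash every server at the outset. Consider the run $r$ in which, after all servers crash, a client $c_1$ performs a solo high-level $\Write(v_1)$ with $v_1 \ne v_0$ (which returns by obstruction freedom and $f$-tolerance), followed by a solo high-level $\Read$ by $c_2$ (which likewise returns); by WS-Safety this $\Read$ must return $v_1$. Compare $r$ with the run $r'$ that is identical except that $c_1$'s $\Write$ is never invoked, so that $c_2$'s $\Read$ must return $v_0$. Because all base objects are crashed, $c_1$'s low-level writes leave no trace observable by $c_2$, so $r$ and $r'$ are indistinguishable to $c_2$, forcing its $\Read$ to return the same value in both and contradicting $v_1 \ne v_0$.

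I do not expect any single step to be a genuine obstacle, since the hard combinatorial work lives in Lemma~\ref{lem:exhaustive-run}; the only point requiring care is making the two ranges exhaustive, i.e.\ remembering that the covering construction needs $|\mathcal{S}| \ge f+1$ and closing the sub-threshold case $|\mathcal{S}| \le f$ by the elementary argument above. (Alternatively, one could bypass the lemma entirely and handle all of $|\mathcal{S}| \le 2f$ uniformly by partitioning $\mathcal{S}$ into two blocks of size at most $f$ and running the same indistinguishability argument; I prefer the lemma-based route here because part~\ref{lem:Tr} is already in hand.) Combining the two cases shows $|\mathcal{S}| \le 2f$ is impossible, hence $|\mathcal{S}| \ge 2f+1$.
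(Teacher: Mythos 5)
Your proposal is correct, and its main case is exactly the paper's proof: the paper disposes of the theorem in one line by invoking part~\ref{lem:Tr} of the extended Lemma~\ref{lem:exhaustive-run} to obtain a run with $|\delta(Tr(t_1)\setminus Cov(0))|>2f$, whence $|\mathcal{S}|\ge 2f+1$. Where you genuinely differ is in noticing that Lemma~\ref{lem:exhaustive-run} is quantified over sets $F\subseteq\mathcal{S}$ with $|F|=f+1$ and is therefore vacuous when $|\mathcal{S}|\le f$; the paper's proof silently presupposes $|\mathcal{S}|\ge f+1$. Your patch for that sub-threshold case --- crash every server (legitimate since at most $f$ crash), complete a solo $\Write(v_1)$ and then a solo $\Read$ by obstruction freedom and $f$-tolerance, and observe that with all base objects crashed the reader cannot distinguish this run from one with no $\Write$, violating WS-Safety --- is sound, since clients communicate only through base objects in this model. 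So your argument is strictly more careful than the paper's; the only cost is the extra case analysis, and as you note one could alternatively run the same partitioning argument uniformly over all of $|\mathcal{S}|\le 2f$ (as the paper itself remarks, citing the classical quorum lower bounds) without appealing to the covering lemma at all.
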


\begin{proof}
  By Lemma~\ref{lem:exhaustive-run}.\ref{lem:Tr}, there exists a run
  $r_1$ of $A$ consisting of $t_1$ steps such that
  $|\delta(Tr(t_1) \setminus Cov(0))| > 2f$.  Therefore,
  $|\mathcal{S}| \geq |\delta(Tr(t_1))| \geq 2f+1$.
\end{proof}

\vspace*{-3mm}
\paragraph{Number of registers per server.}
Theorem~\ref{thm:kf} implies that in case $|\mathcal{S}| = 2f+1$, at
least $(2f+1)k$ registers are required. The following theorem further
refines this result by showing that in this case, each server must
store at least $k$ registers:

\begin{theorem}
  Let $|{\cal S}|=2f+1$. For all $k>0$, $f>0$, every $f$-tolerant
  algorithm emulating an obstruction-free WS-Safe \kreg\/ stores at
  least $k$ registers on each server in ${\cal S}$.
\label{thm:2fplus1}
\end{theorem}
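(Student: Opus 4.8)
The plan is to reduce the claim to a per-server covering count extracted from the extended Lemma~\ref{lem:exhaustive-run}. Fix an arbitrary server $s \in \mathcal{S}$. Since $|\mathcal{S}| = 2f+1$ and $f > 0$, the set $\mathcal{S}\setminus\{s\}$ has $2f \ge f+1$ elements, so I can pick a set $F \subseteq \mathcal{S}$ with $|F| = f+1$ and $s \notin F$; consequently $s \in \mathcal{S}\setminus F$ and $|\mathcal{S}\setminus F| = (2f+1)-(f+1) = f$. Applying the extended Lemma~\ref{lem:exhaustive-run} to this particular $F$ yields runs $r_0,\dots,r_k$ with associated times $t_0,\dots,t_k$ satisfying, in particular, parts \ref{lem:F}, \ref{lem:Cov}, and \ref{lem:CovSup}.

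The key step is to show that at every step $i \in [k]$ the server $s$ acquires a newly covered register. By part \ref{lem:F}, $\delta(Cov(t_i)) \cap F = \emptyset$, so every register covered at $t_i$ resides on a server in $\mathcal{S}\setminus F$; in particular $\delta(Cov(t_i)\setminus Cov(t_{i-1})) \subseteq \mathcal{S}\setminus F$. By part \ref{lem:Cov}, $|\delta(Cov(t_i)\setminus Cov(t_{i-1}))| \ge f$. Since $|\mathcal{S}\setminus F| = f$, these two facts force the equality $\delta(Cov(t_i)\setminus Cov(t_{i-1})) = \mathcal{S}\setminus F$. Because $s \in \mathcal{S}\setminus F$, there must exist a base register $b_i \in Cov(t_i)\setminus Cov(t_{i-1})$ with $\delta(b_i) = s$.

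It remains to argue that $b_1,\dots,b_k$ are pairwise distinct, from which $|\delta^{-1}(\{s\})| \ge k$ follows immediately, and since $s$ was arbitrary the theorem is proved. For $i < j$, monotonicity (part \ref{lem:CovSup}) gives $Cov(t_i) \subseteq Cov(t_{j-1})$, so $b_i \in Cov(t_i) \subseteq Cov(t_{j-1})$ while $b_j \notin Cov(t_{j-1})$ by the choice of $b_j$; hence $b_i \neq b_j$. I expect the main obstacle to be precisely this pinning-down of the covered-server set: the argument relies on the tight equality $|\delta(Cov(t_i)\setminus Cov(t_{i-1}))| = |\mathcal{S}\setminus F| = f$, which is available only because $n = 2f+1$ shrinks $\mathcal{S}\setminus F$ to exactly the lower bound on the number of freshly covered servers, leaving $s$ no room to be skipped at any of the $k$ steps. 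Care is also needed to confirm that choosing $F$ afresh for each target server $s$ is legitimate, which it is, since Lemma~\ref{lem:exhaustive-run} holds for every set $F$ with $|F|=f+1$.
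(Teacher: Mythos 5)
Your proposal is correct and follows essentially the same route as the paper's own proof: fix a set $F$ of $f+1$ servers avoiding $s$, invoke the extended Lemma~\ref{lem:exhaustive-run} (parts \ref{lem:F}, \ref{lem:Cov}, \ref{lem:CovSup}) to force a freshly covered register on every server of $\mathcal{S}\setminus F$ at each of the $k$ steps, and use monotonicity of $Cov$ to see the $k$ registers on $s$ are distinct. The only cosmetic difference is that the paper phrases the argument as a contradiction while you argue directly.
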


\begin{proof}

  Pick an arbitrary $f>0$, $k>0$, and suppose toward a
  contradiction that there is an $f$-tolerant algorithm $A$
  emulating an obstruction-free WS-Safe \kreg\/ that stores less
  than $k$ registers on some server $s \in \mathcal{S}$ (i.e.,
  $|\delta^{-1}(\{s\})| < k$).
  
  Pick an arbitrary set $F \subset \mathcal{S}$ of size $|F|= f+1$
  such that $s \notin F$.  By Lemma~\ref{lem:exhaustive-run}, there
  exist a sequential write-only run $r_k$ consisting of $k$ high-level
  write invocations $W_1,\dots,W_k$ by $k$ distinct clients, and $k$
  distinct times $t_1 < \dots < t_k$ such that:
  $|\delta(Cov(t_1))| \ge f$ and
  $\delta(Cov(t_1)) \cap F = \emptyset$; and for all
  $i \in [k]\setminus \{1\}$,
  $|\delta(Cov(t_i) \setminus Cov(t_{i-1}))| \ge f$, $Cov(t_i)
  \supseteq Cov(t_{i-1})$, and $\delta(Cov(t_i)) \cap F =
  \emptyset$.
  By induction on $i\in [k]$, it is easy to see that all sets in
  the collection consisting of $Cov(t_1)$ and
  $Cov(t_i) \setminus Cov(t_{i-1})$ where $i \in [k] \setminus \{1\}$
  are pairwise disjoint. Thus, at least $f$ new registers become
  covered at each $t_i$, $i\in [k]$. 
  Moreover, since no registers on
  the servers in $F$ are covered at $t_i$, all registers that
  become covered at $t_i$ must be located on the servers in
  $\mathcal{S} \setminus F$.  
  Therefore, since $|\mathcal{S} \setminus F| = f$ and $s \in
  \mathcal{S} \setminus F$, we conclude that at least $k$
  distinct registers on $s$ must be covered at time 
  $t_k$, that is, $|\delta^{-1}(\{s\}) \cap Cov(t_k)| \geq k$.
  Therefore, $|\delta^{-1}(\{s\})| \geq k$.  
  A contradiction.
\end{proof}

\paragraph{Servers with bounded storage.}
The following result provides a lower bound on the number of
servers for the case the storage available on each server is bounded (as it is
often the case in practice) by a known constant: 

\begin{theorem}
  Let $m>0$ be an upper bound on the number of registers mapped to
  each server in $\mathcal{S}$ \emph{(}i.e.,
  $\forall s \in \mathcal{S}, |\delta^{-1}(\{s\})|\leq m $\emph{)}.
  For all $f > 0$ and $k > 0$, every $f$-tolerant algorithm emulating
  an obstruction-free WS-Safe \kreg\/ from a collection $\mathcal{S}$
  of servers such that $|\mathcal{S}| > 2f + 1$ uses at least
  $\lceil kf/m \rceil + f +1$ servers (i.e.,
  $|\mathcal{S}| \geq \lceil kf/m \rceil + f +1$).
\label{thm:bounded-servers}
\end{theorem}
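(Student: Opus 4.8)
The plan is to read this theorem off directly from the extended Lemma~\ref{lem:exhaustive-run}, exactly as the proofs of Theorems~\ref{thm:kf} and \ref{thm:2fplus1} do, by combining its guarantee of $kf$ covered registers off $F$ with the bound $m$ on per-server storage. First I would fix an arbitrary subset $F \subseteq \mathcal{S}$ with $|F| = f+1$; such an $F$ exists since $|\mathcal{S}| > 2f+1 \ge f+1$. Applying the extended Lemma~\ref{lem:exhaustive-run} to this $F$ yields a (write-only, sequential, failure-free) run $r_k$ ending at a time $t_k$ for which Lemma~\ref{lem:exhaustive-run}.\ref{lem:CovF} gives $|Cov(t_k)| \ge kf$ and Lemma~\ref{lem:exhaustive-run}.\ref{lem:F} gives $\delta(Cov(t_k)) \cap F = \emptyset$. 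Thus by the end of $r_k$ there are at least $kf$ covered base registers, and every one of them resides on a server outside $F$.

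The remaining step is a pigeonhole count. Since every server stores at most $m$ registers, the $\ge kf$ covered registers occupy at least $\lceil kf/m \rceil$ distinct servers: formally, $m \cdot |\delta(Cov(t_k))| \ge |Cov(t_k)| \ge kf$, and because $|\delta(Cov(t_k))|$ is an integer this gives $|\delta(Cov(t_k))| \ge \lceil kf/m \rceil$. Because $\delta(Cov(t_k)) \cap F = \emptyset$, all these servers lie in $\mathcal{S} \setminus F$ and are therefore disjoint from the $f+1$ servers of $F$. Hence $|\mathcal{S}| \ge |\delta(Cov(t_k))| + |F| \ge \lceil kf/m \rceil + (f+1)$, which is exactly the claimed bound.

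I do not expect any genuine obstacle here: once the extended Lemma~\ref{lem:exhaustive-run} is in hand, the argument is essentially a one-line pigeonhole, since all the real work—constructing the adversarial sequential run that forces $kf$ registers to remain covered off $F$—has already been done in Lemma~\ref{lem:exhaustive-run} together with its supporting Lemmas~\ref{lem:1write} and \ref{lem:2f}. The only points needing mild care are (i) invoking the ceiling correctly, using that the number of servers is an integer rather than merely concluding $|\delta(Cov(t_k))| \ge kf/m$; and (ii) noting that the hypothesis on $|\mathcal{S}|$ guarantees a set $F$ with $|F| = f+1$ can be chosen so that Lemma~\ref{lem:exhaustive-run} applies and its run is well-defined.
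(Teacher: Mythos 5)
Your proof is correct, and it is in fact simpler than the one the paper gives. You derive the bound from items \ref{lem:CovF} and \ref{lem:F} of Lemma~\ref{lem:exhaustive-run} alone: at time $t_k$ there are at least $kf$ covered registers, all on servers outside $F$, and since each server stores at most $m$ registers these occupy at least $\lceil kf/m\rceil$ servers of $\mathcal{S}\setminus F$, which added to the $f+1$ servers of $F$ gives the claim. The paper instead argues through item \ref{lem:Tr} (the fact that the final write $W_k$ triggers low-level writes on registers spanning more than $2f$ servers none of which were covered at $t_{k-1}$): it shows at least $f$ of these freshly-touched registers must lie outside $\delta^{-1}(F)$, then bounds the number of registers available outside $F$ and not already covered at $t_{k-1}$ by $(|\mathcal{S}|-f-1)m-(k-1)f$, and solves the resulting inequality $(|\mathcal{S}|-f-1)m-(k-1)f\ge f$ for $|\mathcal{S}|$. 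Both routes land on exactly the same bound $|\mathcal{S}|\ge\lceil kf/m\rceil+f+1$ after invoking integrality of $|\mathcal{S}|$, and both lean entirely on the extended Lemma~\ref{lem:exhaustive-run} for the hard work; your pigeonhole on $Cov(t_k)$ is the more direct counting argument and mirrors the structure of the proof of Theorem~\ref{thm:kf}, whereas the paper's capacity-based accounting is closer in spirit to the proof of Theorem~\ref{thm:2fplus1}. Your two points of care --- taking the ceiling only after observing that a count of servers is an integer, and checking that $|\mathcal{S}|>2f+1$ permits the choice of $F$ with $|F|=f+1$ --- are exactly the right ones.
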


\begin{proof}
  Fix $F \subset S$ such that $|F|=f + 1$.  
  By Lemma~\ref{lem:exhaustive-run}(a),
  there exists an extension $r_k$ of $r_{k-1}$ ending at time $t_k$
  such that $|\delta(T_r(t_k) \setminus Cov(t_{k-1}))| > 2f$.  Since
  $|F|=f+1$,
  $|\delta(T_r(t_k) \setminus Cov(t_{k-1})) \cap F| \le f + 1$. Hence,
  we receive
\begin{multline*}
|\delta(T_r(t_k) \setminus Cov(t_{k-1})) \setminus F| =\\
|\delta(T_r(t_k) \setminus Cov(t_{k-1}))| - |\delta(T_r(t_k) \setminus
Cov(t_{k-1})) \cap F| \geq 2f+1-f-1 = f
\end{multline*}
Thus,
\begin{equation}
|(T_r(t_k) \setminus Cov(t_{k-1})) \setminus \delta^{-1}(F)| \ge
|\delta(T_r(t_k) \setminus Cov(t_{k-1})) \setminus F| \geq f
\label{eqn:1}
\end{equation}
On the other hand,
\begin{multline}
  |(T_r(t_k) \setminus Cov(t_{k-1})) \setminus \delta^{-1}(F)| = 
  |(T_r(t_k) \setminus Cov(t_{k-1})) \cap \delta^{-1}(\mathcal{S} \setminus F)| = \\
  |(T_r(t_k) \cap \delta^{-1}(\mathcal{S} \setminus F)) \setminus Cov(t_{k-1})| \le  
  |\delta^{-1}(\mathcal{S} \setminus F) \setminus Cov(t_{k-1})|
\label{eqn:2}
\end{multline}

\noindent  
Since by Lemma~\ref{lem:exhaustive-run}(b) and (d), $|Cov(t_{k-1}|\ge (k-1)f$ and
$\delta^{-1}(\mathcal{S} \setminus F) \supseteq Cov(t_{k-1})$, we get
\begin{equation}
|\delta^{-1}(\mathcal{S} \setminus F) \setminus Cov(t_{k-1})| =
|\delta^{-1}(\mathcal{S} \setminus F)| - |Cov(t_{k-1})| \le
(|\mathcal{S} \setminus F|)m - (k-1)f
\label{eqn:3}
\end{equation}

\noindent
Combining (\ref{eqn:3}) with (\ref{eqn:1}) and (\ref{eqn:2}), we get
\begin{equation*}
  (|\mathcal{S} \setminus F|)m - (k-1)f \ge 
|(T_r(t_k) \setminus Cov(t_{k-1})) \setminus \delta^{-1}(F)| \ge f
\end{equation*}
Since $\mathcal{S} \supset F$ and $|F|=f+1$, we obtain
$(|\mathcal{S} \setminus F|)m - (k-1)f = |\mathcal{S}|m - (f+1)m -
(k-1)f \ge f$.
Therefore, $|\mathcal{S}|m \ge fm + m + kf$, which implies that
$|\mathcal{S}| \ge kf / m + f + 1$. Since $|\mathcal{S}|$ is an
integer, we conclude that
$|\mathcal{S}| \ge \lceil kf / m \rceil + f + 1$.
\end{proof}

\paragraph{Adaptivity to Contention}  Given a run fragment
 $r$ of an emulation algorithm, the {\em point
   contention}~\cite{point-cont-afek,point-cont} of $r$,
   $\PntCont(r)$, is the maximum number of clients that have an
   incomplete high-level invocation after some finite prefix of
   $r$. Similarly, we use $\PntCont(op)$ to denote
   $\PntCont(r_{op})$, where $r_{op}$ is the run fragment
   including all events between the $op$'s invocation and response.
 
 The resource complexity of $A$ is {\em adaptive to point
 contention}\/ if there exists a function $M$ such that after
 all finite runs $r$ of $A$, the resource consumption of $A$ in
 $r$ is bounded by $M(\PntCont(r))$. Likewise, the time
 complexity of $A$ is {\em adaptive to point contention}\/ if
 there exists a function $T$ such that for each client $c_i$,
 and operation $op$, the time to complete the invocation of $op$
 by $c_i$ is bounded by $T(\PntCont(op))$.\\

\noindent We show that no WS-Safe
obstruction-free MWSR register can have a fault-tolerant
emulation adaptive to point contention:

\begin{theorem}
\label{theorem:adaptive}

  For any $f>0$, there is no $f$-tolerant algorithm that
  emulates an WS-Safe obstruction-free \kreg\ with resource
  complexity adaptive to point contention.

\end{theorem}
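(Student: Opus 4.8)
The plan is to turn the resource growth exhibited by the covering runs of the (extended) Lemma~\ref{lem:exhaustive-run} into a contradiction, by observing that all of those runs have point contention exactly $1$. Suppose toward a contradiction that some $f$-tolerant algorithm $A$ emulates a WS-Safe obstruction-free multi-writer register with resource complexity adaptive to point contention, and let $M$ be the associated bounding function, so that the resource consumption of $A$ in every finite run $r$ is at most $M(\PntCont(r))$. Since $A$ supports an unbounded number of writers, it is in particular a valid \kreg\/ emulation (satisfying all the hypotheses of Lemma~\ref{lem:exhaustive-run}: $f$-tolerance, WS-Safety, obstruction freedom, and MWMR atomic base registers) for \emph{every} $k>0$.

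First I would fix an arbitrary $k>0$ and an arbitrary set $F \subseteq \mathcal{S}$ with $|F|=f+1$, and invoke Lemma~\ref{lem:exhaustive-run} to obtain the write-only, write-sequential run $r_k$ ending at time $t_k$ consisting of $k$ complete high-level writes by $k$ distinct clients and satisfying $|Cov(t_k)| \ge kf$ (clause~\ref{lem:CovF}). Every register in $Cov(t_k)$ has, by definition, a low-level write triggered on it, so it is a base object used by $A$ in $r_k$; hence the resource consumption of $A$ in $r_k$ is at least $|Cov(t_k)| \ge kf$.

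Next I would bound $\PntCont(r_k)$. Because $r_k$ is write-only it contains no high-level \Read\/ invocations, and because it is write-sequential its $k$ high-level writes pairwise do not overlap; therefore after every finite prefix of $r_k$ at most one client has an incomplete high-level invocation, and since at least one write is in progress for some prefix we get $\PntCont(r_k)=1$. Adaptivity then forces $kf \le M(\PntCont(r_k)) = M(1)$, a fixed constant independent of $k$. As $k$ was arbitrary, choosing any $k$ with $kf > M(1)$ yields the desired contradiction.

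The main obstacle is the point-contention bookkeeping: I must be certain that the runs delivered by Lemma~\ref{lem:exhaustive-run} genuinely keep each high-level write \emph{complete} before the next begins, even though the adversary $Ad_i$ indefinitely blocks up to $f$ low-level write responses. This is exactly what clauses~(1)--(2) of the lemma's conclusion supply — each $r_i$ is a write-sequential extension of $r_{i-1}$ consisting of $i$ \emph{complete} high-level writes — so no two clients are ever simultaneously mid-operation, and the blocked low-level writes do not count as pending high-level invocations. The only other point to check is that "resource consumption'' counts all base objects ever touched, so that $|Cov(t_k)|$ is a bona fide lower bound on it; this follows directly from the definition of resource consumption together with the definition of a covering (hence triggered) low-level write.
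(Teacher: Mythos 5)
Your proposal is correct and follows essentially the same route as the paper: both invoke Lemma~\ref{lem:exhaustive-run} to produce, for each $k$, a write-sequential write-only run with point contention $1$ whose resource consumption is at least $kf$, and conclude that no bounding function $M$ can exist. Your version merely spells out the bookkeeping (that covered registers count toward resource consumption, and that the completeness and sequentiality of the high-level writes pin the point contention at $1$) that the paper's one-line proof leaves implicit.
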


%

\begin{proof}
  By Lemma~\ref{lem:exhaustive-run}, there exists a run
  $r$ of $A$ consisting of $k$ high-level writes by $k$
  distinct clients such that the resource complexity
  grows by $f$ for each consecutive write that completes
  in $r$ whereas the point contention remains equal $1$
  for the entire $r$. 
  We conclude that no function mapping point contention
  to resource consumption can exist, and therefore, $A$'s
  resource complexity is not adaptive to point
  contention. 
\end{proof}

\newpage


\section{Upper Bound Algorithm}
\label{app:upper}

In this section, we will give a detailed description of our upper
bound construction discussed in Section~\ref{sub:UpperBound} along
with the correctness proof.

\begin{algorithm}[H]

\caption{$\forall f>0 ~\forall k>0, ~\forall n =
|\mathcal{S}| \geq 2f+1$.}
\begin{algorithmic}[0]
\small

\Types{}{}
\State $TSVal = \mathbb{N} \times \mathbb{V}$, with selectors
$ts$ and $val$.
\State $States =  TSVal \times
2^{TSVal} \times 2^{\mathcal{B}} \times 2^{\mathcal{B}}$ 
 with selectors \emph{tsVal},  \emph{rdSet},
\emph{wrSet} and \emph{coverSet}.
\EndTypes

\BaseObjectsServers{}{}
\State  $\forall b \in \delta^{-1}(\mathcal{S})$,
$b \in TSVal$, initially,  $\langle 0,
v_0 \rangle$.
\State Let $z \triangleq \big\lfloor \frac{n- (f+1)}{f}
\big\rfloor$, $y \triangleq zf + f +1$, and $m \triangleq
\big\lceil \frac{k}{z} \big\rceil$.
\State $R = \{ R_0,\ldots,R_{m-1}\}  \subset
2^{\delta^{-1}(\mathcal{S})}$ s.t.\:
   \State \hspace*{4mm} 1. $\forall i \in \{0,\ldots,m-2\}$,
   $|R_i| = y$.
   If $ \big\lceil \frac{k}{z} \big\rceil = 
   \big\lfloor \frac{k}{z} \big\rfloor$, then $|R_{m-1}| = y$.
   Else, $|R_{m-1}| = (k -\big\lfloor \frac{k}{z} \big\rfloor z)
   f + f +1$.
   \State \hspace*{4mm} 2.  $\forall R_i,R_j \in R$, $R_1 \cap
   R_j = \emptyset$.
   \State \hspace*{4mm} 3.  $\forall R_i\in R$, $|\delta(R_i)| =
   |R_i|$.
   
\EndBaseObjectsServerss


\States{}{}

\State $\forall i \in [k], ~State_i \in States$,
initially, 
\State $\langle  \langle \langle0,
1 \rangle, v_0 \rangle, \emptyset, R_j, \emptyset
\rangle$, where $j = \big\lfloor \frac{i}{z} \big\rfloor$.

\EndStates

 \Statex

\Code{for client $c_i$, $1\le i \le k$:}
\EndCode
\end{algorithmic}

\vspace*{-0.4cm}%
\begin{multicols}{2}
\label{alg:upper}
\begin{algorithmic}[1]

\small%

\Operation{$\Write(v)$}{} %
	
	\State $value \gets collect()$
	\label{line:writeCollect}
	\State $State_i.tsVal.val \gets v$
	\State $State_i.tsVal.ts \gets  value.ts+1 $
	\label{line:pickTs}

 	\State $j \gets \big\lfloor \frac{i}{z} \big\rfloor$
 	\label{line:row}
	\Statex \Comment{do not handle responds between lines
	\ref{line:cover} to \ref{line:trigger}} 
	\State $State_i.coverSet \gets R_j \setminus State_i.wrSet$
	\label{line:cover} 
	\State $State_i.wrSet \gets \emptyset$
	\State \textbf{$||$ for each} $b \in R_j$
	\label{line:forTrigger}
	\State \hspace*{0.4cm} \textbf{if} $b \notin
	State_i.coverSet$
	\label{line:ifCovered}
	\State \hspace*{0.8cm} trigger $b.\lwrite(State_i.tsVal)$
	\label{line:trigger}
	
	\State \textbf{wait until} $|State_i.wrSet| \geq |R_j|-f$
	\label{line:writeWait}
	\State \Return $ack$
\EndOperation %

\Statex

\Macro{$scan(s)$}{}
	\For{\textbf{each} $b \in \delta^{-1}(s)$} 
		\State trigger $b.\lread()$
		\State wait for the matching response 

\EndFor
\EndMacro

\Statex 
\Statex 
\Statex

\Operation{$\Read()$}{}
	\State $value \gets collect()$
	\State \Return value.val 
\EndOperation

\Statex 

\Macro{$collect()$}{}
	\State $State_i.rdSet \gets \emptyset$
	\State \textbf{$||$ for each} $s \in \mathcal{S}$
	\label{line:collectStart}
	\textbf{do}
		\State \hspace*{0.4cm} $scan(s)$
		\State wait for $n-f$ \emph{scans} to complete
		\label{line:waitScans}
		\State $ts \gets  max(\{ts' \mid \langle ts', * \rangle \in
		State_i.rdSet \})$
		\State \Return $\langle v, ts'\rangle \in State_i.rdSet$ : $ts' = ts$
		\label{line:collectEnd}

\EndMacro

	\Statex 

\RRespond{$res$}

	\State $State_i.rdSet \gets State_i.rdSet \cup \{res\}$ 

\EndRRespond

\Statex

\WRespond{}
\label{line:wrespnodBegin}

	\If{$b \in State_i.coverSet$}
	
		\State $State_i.coverSet \gets State_i.coverSet \setminus
		\{b\}$
		\State trigger $b.\lwrite(State_i.tsVal)$
		
		
	\Else
		\State $State_i.wrSet \gets State_i.wrSet \cup \{b\}$

	\EndIf

\EndWRespond
\label{line:wrespnodEnd}





\end{algorithmic}
\end{multicols}
\end{algorithm}

The registers store values in $\mathbb{V}$ each of which is
associated with a unique timestamp.
(Note that since safety is required only in write-sequential
runs, we do not need to break ties with clients' ids.)
To write a value $v$ to the emulated register, a client $c_i$
first accesses a read quorum (via collect() in lines
\ref{line:collectStart}--\ref{line:collectEnd}) and selects a
new timestamp $ts$ which is higher than any other timestamp that
has been returned.
It then proceeds to trigger low-level writes of $\tup{ts}{v}$ on
registers in $R_j = \big\lfloor \frac{i}{z} \big\rfloor$, so as
to ensure that (1) $\tup{ts}{v}$ is stored in a write quorum
$wq$ (lines~\ref{line:forTrigger}--\ref{line:writeWait}), and
(2) no more than $f$ registers in $R_j$ are left covered by
$c_i$'s writes (the current and the previous operations). 
The latter is achieved by preventing $c_i$ from triggering a new
low-level write on every register that has not yet responded to
the previously triggered one
(lines~\ref{line:ifCovered}--\ref{line:trigger}).
To read a value, a client simply reads all registers in a read quorum,
via collect(), and returns the value having the highest timestamp.


The space complexity of the algorithm is $\Sigma_{R_i \in R}
|R_i| = \big\lfloor \frac{k}{z} \big\rfloor y + (k - \big\lfloor
\frac{k}{z} \big\rfloor z)f  + (f+1)(\big\lceil
\frac{k}{z} \big\rceil - \big\lfloor \frac{k}{z}
\big\rfloor)= \cdots = kf + \big\lfloor \frac{k}{z}
\big\rfloor (f+1) = kf +  \big\lceil \frac{k}{\big\lfloor \frac{x-(f+1)}{f}
\big\rfloor} \big\rceil$ registers.
Below, we prove that the algorithm satisfies wait-freedom and
write-sequential regularity.  The following observation follows
from code and the construction of the sets in $R$; (1) writers
never trigger low-level writes on base object with pending
low-lever writes from previous \Write s, (2) writers wait for
$n-f$ base objects to reply (line~\ref{line:waitScans}), and for
every set $R_i \in R$, the number of client that write
to registers in $R_i$ is $\big\lfloor \frac{|R_i|-(f+1)}{f}
\big\rfloor = \frac{|R_i|-(f+1)}{f}$.

\begin{observation}
\label{obs:covering}

For every $0 < i \leq k$ for every time $t$ in a run $r$, if
writer $c_i$ have no pending \Write\ at $t$ then it covers at
most $f$ base objects at time $t$.

\end{observation}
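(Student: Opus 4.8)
The plan is to fix an arbitrary writer $c_i$ and the single row $R_j$ it is assigned to (with $j=\lfloor i/z\rfloor$, set on line~\ref{line:row}), and to show that the set of base registers covered by $c_i$ at any time is contained in $R_j$, and that at every \emph{idle} time (a time $t$ at which $c_i$ has no pending \Write) it meets $R_j$ in at most $f$ registers. First I would record that $c_i$ triggers a low-level write only on registers of $R_j$: the only triggers are the parallel loop of lines~\ref{line:forTrigger}--\ref{line:trigger}, which ranges over $b\in R_j$, and the re-trigger inside the write-response handler (lines~\ref{line:wrespnodBegin}--\ref{line:wrespnodEnd}), which fires only on $b\in \mathit{coverSet}\subseteq R_j$. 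Hence the set $Cov_i(t)$ of registers covered by $c_i$ satisfies $Cov_i(t)\subseteq R_j$ at all times.

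Next I would set up two invariants on $c_i$'s local state. The first is disjointness, $\mathit{coverSet}\cap\mathit{wrSet}=\emptyset$: it holds immediately after line~\ref{line:cover} (which sets $\mathit{coverSet}=R_j\setminus\mathit{wrSet}$ and then empties $\mathit{wrSet}$) and is preserved by the response handler, whose else-branch adds $b$ to $\mathit{wrSet}$ only when $b\notin\mathit{coverSet}$ and whose if-branch deletes $b$ from $\mathit{coverSet}$. The second is that every register in $\mathit{wrSet}$ is uncovered by $c_i$: when $b$ enters $\mathit{wrSet}$ its most recent low-level write has just responded, and, being outside $\mathit{coverSet}$ by disjointness, it can never be re-triggered, so it acquires no new pending write. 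Together these give $Cov_i(t)\cap\mathit{wrSet}=\emptyset$ throughout.

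I would then examine the instant a $\Write$ by $c_i$ returns. A return happens only after the guard $|\mathit{wrSet}|\ge |R_j|-f$ on line~\ref{line:writeWait} is met, so combining with the previous step, $Cov_i\subseteq R_j\setminus\mathit{wrSet}$ and thus $|Cov_i|\le |R_j|-(|R_j|-f)=f$ at the return. Finally I would propagate this bound through the ensuing idle interval: during idle the only actions of $c_i$ are executions of the response handler, under which $\mathit{wrSet}$ is non-decreasing (only the else-branch touches it, and the reset on line~\ref{line:cover} fires only at an invocation, which ends the idle interval) and $\mathit{coverSet}$ is non-increasing; hence $|\mathit{wrSet}|\ge |R_j|-f$ and $Cov_i(t)\cap\mathit{wrSet}=\emptyset$ persist, yielding $|Cov_i(t)|\le f$ at every idle $t$. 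The interval before $c_i$'s first $\Write$ is covered by the initial state, where $\mathit{wrSet}=R_j$ and no write has been triggered, so $Cov_i=\emptyset$.

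The main obstacle is the re-trigger in the if-branch of the response handler: when a register in $\mathit{coverSet}$ responds, a fresh low-level write is immediately launched on it, so the register stays covered rather than being released, and the covered count does not simply decay to zero during idle. I must argue this cannot break the bound, the point being that such a re-triggered register remains outside $\mathit{wrSet}$, so the inclusion $Cov_i(t)\subseteq R_j\setminus\mathit{wrSet}$ --- and hence the estimate $|Cov_i(t)|\le f$ --- is insensitive to whether the at-most-$f$ registers outside $\mathit{wrSet}$ carry stale writes from an earlier $\Write$ or freshly re-triggered ones. The remaining work is routine bookkeeping: checking that the two invariants survive the arbitrary interleaving of responses and triggers step by step.
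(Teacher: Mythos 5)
Your proof is correct and follows the same code-level reasoning the paper relies on: the paper states Observation~\ref{obs:covering} without a separate proof, asserting only that it ``follows from code and the construction of the sets in $R$'' (writers never re-trigger a low-level write on a register with a pending one, and wait for $|R_j|-f$ responses), and your invariants $\mathit{coverSet}\cap\mathit{wrSet}=\emptyset$ and $Cov_i(t)\subseteq R_j\setminus\mathit{wrSet}$ are a faithful and complete elaboration of exactly that mechanism. The only point worth flagging is that your disjointness invariant implicitly uses the code's stated atomicity constraint that responses are not handled between lines~\ref{line:cover} and~\ref{line:trigger}, which is what rules out a register acquiring two concurrent pending writes from $c_i$.
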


\begin{lemma}
\label{lemma:writeTS}

Consider a write-sequential run $r$ of the algorithm, and
consider two sequential \Write s $W_i,W_j$ in $r$ s.t.\ $W_i$
precedes $W_j$. 
Then $W_j$'s value is associated with a bigger timestamp than
$W_i$'s value.

\end{lemma}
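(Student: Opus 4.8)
The plan is to prove the stronger claim that the timestamps chosen by the high-level writes are strictly increasing in the order the writes occur; the lemma then follows immediately. Since $r$ is write-sequential, its \Write s are totally ordered by $\prec_r$; enumerate them as $W^{(1)} \prec_r W^{(2)} \prec_r \cdots$ and let $ts^{(p)}$ be the timestamp $W^{(p)}$ assigns on line~\ref{line:pickTs}. I would prove by induction on $p$ that $ts^{(1)} < ts^{(2)} < \cdots$. The base case is immediate (the first write reads only initial values and picks $ts^{(1)} \ge 1 > 0$), so the work is in the inductive step: assuming $ts^{(1)} < \cdots < ts^{(p)}$, show $ts^{(p+1)} > ts^{(p)}$. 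Because $W^{(p+1)}$ sets $ts^{(p+1)} = value.ts + 1$ from the value returned by its opening \emph{collect} (line~\ref{line:writeCollect}), it suffices to show that this \emph{collect} observes some register carrying a timestamp $\ge ts^{(p)}$.

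To this end, let $c$ be the client executing $W^{(p)}$, let $R$ be its row, and let $w = (|R|-(f+1))/f$ be the number of writers mapped to $R$, so that $|R| = wf + f + 1$. When $W^{(p)}$ returns it has a confirmed write quorum $WQ \subseteq R$ with $|WQ| \ge |R| - f$, every register of which holds $\tup{ts^{(p)}}{v^{(p)}}$ at that moment. The read quorum scanned by $W^{(p+1)}$'s \emph{collect} comprises all registers on $n-f$ servers and therefore misses at most $f$ of the $|R|$ distinct servers hosting $R$; thus it reads a set $RD \subseteq R$ with $|RD| \ge |R| - f$. By inclusion--exclusion, $|WQ \cap RD| \ge 2(|R|-f) - |R| = |R| - 2f = (w-1)f + 1$, and it remains to find a register in $WQ \cap RD$ still carrying a timestamp $\ge ts^{(p)}$ at read time.

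The crux --- and the step I expect to be the main obstacle --- is to bound the \emph{stale} low-level writes (pending writes carrying a timestamp $< ts^{(p)}$) that could overwrite registers of $WQ \cap RD$ between $W^{(p)}$'s return and $W^{(p+1)}$'s read. As $r$ is write-sequential and $W^{(p+1)}$ is still in its \emph{collect} phase, no high-level write is active during this interval, so every such overwrite is a covering write left pending by an already-completed operation. No register of $WQ$ can be overwritten by $c$ itself: each confirmed register lies in $c$'s \emph{wrSet} rather than its \emph{coverSet}, so $c$ has no pending write on it. Any other writer $c' \ne c$ mapped to $R$ holds at most $f$ registers with a pending low-level write at this time (Observation~\ref{obs:covering} when $c'$ is idle, and, when $c'$ is the one running $W^{(p+1)}$, its cover set is inherited unchanged from its previous completed operation, again at most $f$); moreover $c'$'s latest completed operation precedes $W^{(p)}$ in $\prec_r$, so by the induction hypothesis all of $c'$'s pending writes are stale. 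With at most $w-1$ writers other than $c$ mapped to $R$, stale pending writes can target at most $(w-1)f$ registers of $R$. Since $|WQ \cap RD| \ge (w-1)f + 1$, by pigeonhole some $b \in WQ \cap RD$ is targeted by no stale pending write; as no write with timestamp $> ts^{(p)}$ has yet been triggered and the base registers are atomic (so unresponded writes are invisible to the read), $b$ still holds $\tup{ts^{(p)}}{v^{(p)}}$ when \emph{collect} reads it. Hence \emph{collect} returns a timestamp $\ge ts^{(p)}$ and $ts^{(p+1)} > ts^{(p)}$. The delicate point is that the row size $|R| = wf + f + 1$ is calibrated so that $|WQ \cap RD|$ strictly exceeds the $(w-1)f$ bound on stale writes by exactly one, which is precisely what makes the pigeonhole go through.
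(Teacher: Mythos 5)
Your proposal is correct and follows essentially the same route as the paper: a quorum-intersection count showing that $W^{(p)}$'s confirmed write quorum of size $|R|-f$, minus the at most $(w-1)f$ registers coverable by the other writers mapped to the row, still meets the next \emph{collect}'s read quorum in at least one register. The only differences are organizational --- you perform the inclusion--exclusion in the opposite order and wrap the argument in an explicit induction to justify that all pending covering writes are stale and to handle non-consecutive writes, which the paper leaves implicit (and defers to Corollary~\ref{cor:writeTS}).
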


\begin{proof}

Since $W_i$ precedes $W_j$, $W_j$ starts the $collect$ in
line~\ref{line:writeCollect} after $W_i$ returns.
$W_i$ triggers low level writes with its value and timestamp on
base objects in $R_l$ ($l = \big\lfloor \frac{i}{z}
\big\rfloor$) that are not covered by its
previous \Write s, and waits for $|R_l|-f$ low level writes to
respond (line~\ref{line:writeWait}) before it returns.
Thus, since $|\delta(R_l)| = |R_l|$, $W_j$ starts its $collect$
after $W_i$ writes its timestamp to at least $|R_l|-f $ base
objects in different servers, none of which is covered by
low-level write of $W_i$ previous \Write s.

Moreover, since the number of writers excluding $W_i$ that
write to base objects in $R_j$ is $\big\lfloor
\frac{|R_l|-(f+1)}{f} \big\rfloor -1 = \frac{|R_i|-(f+1)}{f} -1$,
readers do not write, and each writer covers at most
$f$ base objects,  we get that at least $f+1$ servers has a base
object that stores $W_i$'s timestamp when $W_j$ begins its $collect$.
Now since $collect$ reads all base object in at least
$n-f$ servers (line~\ref{line:waitScans}), $W_j$ sees $W_i$'s
timestamp and picks a bigger one (line~\ref{line:pickTs}).


\end{proof}

%
%
%

\begin{corollary}
\label{cor:writeTS}

Consider a write-sequential run $r$ of the algorithm. 
If \Write\ $W_i$ precedes \Write\ $W_j$, then $W_j$ is associated
with a bigger timestamp than $W_i$.

\end{corollary}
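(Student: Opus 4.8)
The plan is to obtain the corollary from Lemma~\ref{lemma:writeTS} by transitivity along the total order on high-level writes that a write-sequential run induces. First I would note that in a write-sequential run $r$ no two \Write\ operations are concurrent, so the precedence relation $\prec_r$, restricted to the high-level writes of $r$, is a strict total order (it is irreflexive and transitive on any schedule, and write-sequentiality rules out incomparable pairs). Hence the writes completing before a given point can be enumerated as $W_{(1)} \prec_r W_{(2)} \prec_r \cdots$, and for any pair with $W_i \prec_r W_j$ there is a finite chain $W_i = W_{(a)} \prec_r W_{(a+1)} \prec_r \cdots \prec_r W_{(b)} = W_j$ in which each write immediately precedes the next in this order.

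Next I would apply Lemma~\ref{lemma:writeTS} to each consecutive pair $W_{(p)}, W_{(p+1)}$ of the chain: since $W_{(p)} \prec_r W_{(p+1)}$, the lemma gives that the timestamp selected by $W_{(p+1)}$ in line~\ref{line:pickTs} is strictly larger than the one selected by $W_{(p)}$. Because timestamps are drawn from the first component of $TSVal = \mathbb{N} \times \mathbb{V}$, "strictly larger" is just the usual order on $\mathbb{N}$, which is transitive. Chaining the strict inequalities from index $a$ to index $b$ then yields that the timestamp associated with $W_j$ strictly exceeds that associated with $W_i$, which is exactly the claim.

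The argument is short, so the only point that needs care is the reduction to consecutive writes in the first step. The covering count inside Lemma~\ref{lemma:writeTS} guarantees that the value just written by a write still resides on at least $f+1$ servers when its \emph{immediate} successor collects; for a non-adjacent pair $W_i \prec_r W_j$ an intervening write on the same row could overwrite $W_i$'s timestamp on some of those registers, so the lemma should be invoked only across adjacent writes and then composed. I therefore expect the main (and still routine) obstacle to be stating the total order and the finite connecting chain precisely; once that is in place, transitivity of $<$ on $\mathbb{N}$ closes the proof with no further difficulty.
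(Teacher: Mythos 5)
Your proposal is correct and matches the paper's (implicit) derivation: the paper states Corollary~\ref{cor:writeTS} without proof as an immediate consequence of Lemma~\ref{lemma:writeTS}, and your argument---totally ordering the high-level writes of a write-sequential run, applying the lemma to each consecutive pair, and chaining the strict timestamp inequalities by transitivity of $<$ on $\mathbb{N}$---is exactly the routine step the authors omit. Your side remark that the covering argument inside Lemma~\ref{lemma:writeTS} is only airtight for an immediate successor (since an intervening write on the same row may overwrite $W_i$'s timestamp) is a fair reading and is precisely why the reduction to adjacent pairs is the right way to invoke the lemma.
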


\begin{lemma}
\label{lemma:readTS}

Consider a write-sequential run $r$ of the algorithm, and
a read operation $rd$ and a \Write\ $W$ in $r$.
Let $ts$ be the timestamp associated with $W$.
If $W$ precedes $rd$, than $rd$ returns a value associated with
timestamp $ts' \geq ts$. 

\end{lemma}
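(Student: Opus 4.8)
The plan is to establish a quorum-intersection property: I will exhibit a \emph{fixed} set of at least $f+1$ servers whose row-$j$ register continuously stores a timestamp $\ge ts$ from the moment $W$ returns until $rd$ finishes its $collect$, and then observe that the read quorum of $n-f$ servers scanned by $rd$ must meet this set.

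First I would fix notation: let $W$ be the high-level write of the lemma, let $ts$ be its timestamp, let $j = \big\lfloor \frac{i}{z} \big\rfloor$ be its row, and let $t_W$ be the time $W$ returns (so $t_W < t_{rd}$, the invocation time of $rd$, since $W$ precedes $rd$). By the code, before returning $W$ has obtained responses to low-level writes of $\tup{ts}{\cdot}$ on a write quorum $wq \subseteq R_j$ with $|wq| = |R_j| - f$, all mapped to distinct servers; hence every register in $wq$ holds a timestamp $\ge ts$ at $t_W$. I would also recall from the accounting in Lemma~\ref{lemma:writeTS} that $R_j$ is written by exactly $\frac{|R_j| - (f+1)}{f} \le z$ clients, one of which is the author $c_i$ of $W$.

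The core step is a covering bound showing that the \emph{bad} registers --- those that ever hold a timestamp $< ts$ during the interval from $t_W$ to the end $t'_{rd}$ of $rd$'s $collect$ --- number at most $zf$, so that at least $|R_j| - zf = f+1$ registers of $R_j$ remain $\ge ts$ throughout. A $wq$-register can drop below $ts$ only if a low-level write carrying a timestamp $< ts$ responds on it after $t_W$; such a write must be pending at $t_W$ and belong to some client $c_{i'} \neq c_i$ whose current value is $< ts$ (a later value carries a larger timestamp by Corollary~\ref{cor:writeTS}, and $c_i$'s own pending writes after $W$ all carry $ts$). By Observation~\ref{obs:covering} each such $c_{i'}$ covers at most $f$ registers; moreover, by inspecting \textsc{WRespond} I would show that the set of registers on which $c_{i'}$ holds a pending $<ts$ write never \emph{grows} after $t_W$, since a responding covering write is re-triggered only on the \emph{same} register, and a fresh low-level write on a new register always carries the writer's current, hence larger, timestamp. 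Thus the registers ever tainted by $c_{i'}$ are confined to its cover set at $t_W$, at most $f$ of them, and over the at most $z-1$ relevant clients this totals at most $(z-1)f$; adding the at most $f$ registers of $R_j \setminus wq$ (which may already hold $<ts$ at $t_W$) gives at most $zf$ bad registers. The remaining $\ge f+1$ registers lie in $wq$ and never receive a $<ts$ write, hence store a timestamp $\ge ts$ continuously on $[t_W, t'_{rd}]$. The identical arithmetic covers the overflow row $R_{m-1}$.

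Finally I would close by intersection: these clean registers occupy at least $f+1$ distinct servers (registers of $R_j$ are mapped to distinct servers), while $rd$'s $collect$ reads every register on $n-f$ servers (line~\ref{line:waitScans}); since $(n-f)+(f+1) > n$, the $collect$ reads at least one clean register at a time when it stores a timestamp $\ge ts$. Hence the maximum timestamp returned by $collect$ is $\ge ts$, and $rd$ returns a value associated with a timestamp $ts' \ge ts$. I expect the main obstacle to be the covering bound, and specifically the verification that the re-trigger logic of \textsc{WRespond} cannot propagate a stale ($<ts$) value onto a register outside the writer's original cover set --- this is precisely what keeps the clean set \emph{fixed}, and therefore robust against $collect$ reading different servers at different times.
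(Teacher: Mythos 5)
Your proposal is correct and follows essentially the same route as the paper's own proof: both arguments count that at most $(z-1)f$ registers of $W$'s write quorum can later be overwritten by stale covering writes of the other $z-1$ writers assigned to row $R_j$ (each covering at most $f$ registers when $W$ returns, by Observation~\ref{obs:covering}), leaving at least $f+1$ registers on distinct servers that permanently store a timestamp $\ge ts$, and then intersect these $f+1$ servers with the $n-f$ servers scanned by $collect$. Your explicit check that the re-trigger logic of \textsc{WRespond} can only re-issue a stale write on the \emph{same} register, so the tainted set stays confined to the cover sets at the moment $W$ returns, is a detail the paper's proof glosses over, but it does not change the structure of the argument.
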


\begin{proof}

Let $t$ be the time when $W$ returns, and assume w.l.o.g that
$W$ is performed by client $c_i$ s.t.\  $ \big\lfloor
\frac{i}{z} \big\rfloor = j$.
Before $W$ returns it $c_i$ triggers low level writes with
its value and timestamp on base objects in $R_j$
that are not covered by its previous \Write s, and waits for
$|R_j|-f $ low level writes to respond.
The number of clients excluding $c_i$ that trigger low-level
writes on base objects is $R_j$ is $\big\lfloor
\frac{|R_l|-(f+1)}{f} \big\rfloor -1 = \frac{|R_i|-(f+1)}{f}
-1$, and by Observation~\ref{obs:covering}, each client covers
at most $f$ base objects at time $t$.
By Corollary~\ref{cor:writeTS} and since readers do not write,
every low level write in $r$ that is triggered after time $t$ is
associated with a bigger timestamp than $ts$. 
Therefore, since $|\delta(R_l)| = |R_l|$, there is a set of
$f+1$ base objects, each of which mapped to a different server,
s.t.\ at any time $t' \geq t$ the timestamp each of them stores
is bigger than or equal to $ts$.

Since $W$ precedes $rd$, $rd$ starts the $collect$ after time
$t$.
And since $collect$ reads all base object in at least
$n-f $ servers, $rd$ sees at least one value associated
with timestamp bigger than or equal to $ts$, and thus, returns a
value associated with timestamp $ts' \geq ts$.

\end{proof}

\begin{definition}

For every write-sequential run $r$, for every \emph{read} $rd$
in $r$ that returns a value associated with timestamp $ts$ we
define the sequential run $\sigma_{r_{rd}}$ as follows:
All the completed write operations in $r$ are ordered in
$\sigma_{r_{rd}}$ by their timestamp, and $rd$ is added after
the \Write\ operation that is associated with $ts$.

\end{definition}

\noindent In order to show that the algorithm simulates a
write-sequential regular register we need to proof that for
every write-sequential run $r$, for every \emph{read} $rd$,
$\sigma_{r_{rd}}$ preserves the real time order of $r$ and the
sequential specification.
Note that the sequential specification is satisfied by
construction, and we prove the real time order in the next lemma.

\begin{lemma}
\label{lem:realTime}

For every write-sequential run $r$, for every complete read $rd$ 
that returns a value associated with timestamp $ts$ in $r$,
$\sigma_{r_{rd}}$ preserves $r$'s operation precedence relation
(real time order).

\end{lemma}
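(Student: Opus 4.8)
The plan is to verify the precedence condition by a case analysis over the pairs of operations appearing in $\sigma_{r_{rd}}$, namely the completed writes and the single read $rd$. Write $ts$ for the timestamp of the value returned by $rd$, and recall that $\sigma_{r_{rd}}$ orders all completed writes by their timestamps and inserts $rd$ immediately after the last completed write whose timestamp is at most $ts$. I would fix two operations with $op_1 \prec_r op_2$ and show $op_1 \prec_{\sigma_{r_{rd}}} op_2$, splitting according to which of them is the read. A preliminary remark is that, by write-sequentiality together with Corollary~\ref{cor:writeTS}, all completed writes carry strictly increasing (hence distinct) timestamps along the real-time order, so the insertion point of $rd$ in $\sigma_{r_{rd}}$ is well defined.

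\textbf{Case 1 (both are writes).} Since $r$ is write-sequential and $op_1 \prec_r op_2$, Corollary~\ref{cor:writeTS} gives $ts(op_1) < ts(op_2)$; as writes are ordered by timestamp in $\sigma_{r_{rd}}$, this yields $op_1 \prec_{\sigma_{r_{rd}}} op_2$. \textbf{Case 2 ($op_1 = W$ a write, $op_2 = rd$).} Here $W \prec_r rd$, so Lemma~\ref{lemma:readTS} applies and yields $ts \ge ts(W)$. Since $rd$ is placed after every completed write of timestamp at most $ts$ and $ts(W) \le ts$, we get $W \prec_{\sigma_{r_{rd}}} rd$.

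\textbf{Case 3 ($op_1 = rd$, $op_2 = W$ a write)} is the crux, and here $rd \prec_r W$ forces me to show $ts(W) > ts$ so that $W$ is placed after $rd$. If $ts$ is the initial timestamp, then every completed write carries a strictly larger timestamp (each \Write\ selects its timestamp as one more than a collected value in line~\ref{line:pickTs}), $rd$ sits at the front of $\sigma_{r_{rd}}$, and the claim is immediate. Otherwise, the value of timestamp $ts$ observed by the \emph{collect} of $rd$ must have been deposited by a low-level write of some high-level \Write\ $W_{ts}$ carrying timestamp $ts$, and that low-level write was triggered before $rd$ read the corresponding register; hence $W_{ts}$ was invoked before $rd$ returned. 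Since $rd \prec_r W$, the \Write\ $W$ is invoked only after $rd$ returns, so $W_{ts}$ is invoked strictly before $W$; write-sequentiality then forces $W_{ts} \prec_r W$, and Corollary~\ref{cor:writeTS} gives $ts(W) > ts(W_{ts}) = ts$. Therefore $W$ follows $rd$ in $\sigma_{r_{rd}}$.

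The main obstacle is precisely Case 3: unlike the other two cases, it is not handed to me by Lemmas~\ref{lemma:writeTS}--\ref{lemma:readTS} directly, but requires tracing the timestamp observed by the read back to a \Write\ that was already in progress before the read completed, after which write-sequentiality and the monotonicity of timestamps (Corollary~\ref{cor:writeTS}) close the argument. I would take care to treat the initial-value subcase separately, since there is then no write $W_{ts}$ to chain from, and to confirm that the read quorum accessed by \emph{collect} (all registers on $n-f$ servers) indeed guarantees that the value of timestamp $ts$ returned by $rd$ was genuinely present at some register during the read, so that the corresponding low-level write had been triggered beforehand.
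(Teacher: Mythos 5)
Your proposal is correct and follows essentially the same route as the paper's proof: both writes are handled by Corollary~\ref{cor:writeTS}, the case $W \prec_r rd$ by Lemma~\ref{lemma:readTS}, and the case $rd \prec_r W$ by tracing $ts$ back to the write $W_{ts}$ that was already in progress before $rd$ completed and then invoking write-sequentiality and timestamp monotonicity. Your explicit treatment of the initial-value subcase is a small additional care the paper omits, but it does not change the argument.
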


\begin{proof}

By Corollary~\ref{cor:writeTS}, the real time order of $r$
between every two \Write\ operations is preserved in
$\sigma_{r_{rd}}$.
We left to show that the real time order of $r$ 
between $rd$ and any \Write\ $W$ in $\sigma_{r_{rd}}$ is
preserved.
Consider two cases: 

\begin{itemize}

  \item $W$ precedes $rd$ in $r$. By Lemma~\ref{lemma:readTS},
  $W$ is associated with a timestamp smaller than or equal to
  $ts$, and thus, by construction of $\sigma_{r_{rd}}$ the real
  time order between $rd$ and $W$ is preserved.
  
  \item $rd$ precedes $W$ in $r$. Let $W_{ts}$ be the \Write\
  operation associated with timestamp $ts$. Since $rd$ returns a
  value associated with timestamp $ts$, $W_{ts}$ starts before
  $rd$ completes, and since $r$ is write-sequential, $w_{ts}$
  precedes $W$ in $r$. Thus, by lemma~\ref{lemma:writeTS},
  $W$ is associated with bigger timestamp than $ts$.
  Therefore, by construction of $\sigma_{r_{rd}}$ the real
  time order between $rd$ and $W$ is preserved.
   
\end{itemize}

\end{proof}

\begin{corollary}
\label{cor:linearization}

For every write-sequential run $r$, for every complete read $rd$ 
in $r$, there is a linearization of $rd$ and all the \Write\
operations in $r$.

\end{corollary}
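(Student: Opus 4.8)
The plan is to show that the sequential run $\sigma_{r_{rd}}$, constructed immediately before Lemma~\ref{lem:realTime}, is itself the required linearization of $rd$ together with all the high-level \Write\ operations of $r$. Recall that a linearization must (i) be a sequential schedule of the operations in question, (ii) respect $r$'s operation precedence relation, and (iii) satisfy the register's sequential specification. I would verify these three requirements for $\sigma_{r_{rd}}$ in turn.

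First, I would check that $\sigma_{r_{rd}}$ is well defined as a sequential schedule. Since $r$ is write-sequential, any two complete \Write s are ordered by precedence, and by Corollary~\ref{cor:writeTS} precedence implies a strictly larger timestamp; hence distinct \Write s carry distinct timestamps, so ordering them by increasing timestamp yields a genuine total order that coincides with their precedence order. Inserting $rd$ immediately after the \Write\ whose timestamp equals $rd$'s returned timestamp $ts$ therefore produces a single well-defined sequential schedule containing $rd$ and all the completed \Write s.

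Second, requirement (ii) is exactly the content of Lemma~\ref{lem:realTime}, which already establishes that $\sigma_{r_{rd}}$ preserves $r$'s operation precedence relation, so I would simply invoke it. Third, for the sequential specification I would argue that, because the \Write s in $\sigma_{r_{rd}}$ appear in increasing timestamp order and $rd$ is placed directly after the \Write\ $W_{ts}$ associated with $ts$, the operation immediately preceding $rd$ in $\sigma_{r_{rd}}$ is precisely $W_{ts}$, whose written value is the value $rd$ returns; if instead $ts$ is the initial timestamp, then $rd$ is placed before every \Write\ and returns $v_0$. In either case $rd$ returns the value of the last preceding \Write\ (or the initial value), which is exactly the register's sequential specification. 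Combining the three points shows $\sigma_{r_{rd}}$ is a linearization, proving the corollary.

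Given the preceding lemmas, the corollary is essentially immediate; the substantive work lives in Lemmas~\ref{lemma:writeTS}, \ref{lemma:readTS}, and \ref{lem:realTime}. The only point that deserves care is whether the timestamp $ts$ returned by $rd$ always belongs to a \emph{completed} \Write, since $\sigma_{r_{rd}}$ as defined collects only completed \Write s. If $rd$ is concurrent with the \Write\ that produced $ts$, that \Write\ may still be pending; in that case I would rely on the linearization definition, which permits including a subset of the pending operations, and add this single pending \Write\ to $\sigma_{r_{rd}}$ right before $rd$. This preserves both the precedence ordering and the sequential specification, so the argument goes through unchanged.
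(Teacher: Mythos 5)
Your proof is correct and follows essentially the same route as the paper, which derives the corollary directly from the construction of $\sigma_{r_{rd}}$ (sequential specification satisfied by placing $rd$ immediately after the \Write\ carrying its returned timestamp) together with Lemma~\ref{lem:realTime} for precedence preservation. Your additional observation about a possibly pending \Write\ supplying $rd$'s timestamp, handled by including it as one of the pending operations the linearization definition allows, addresses a detail the paper leaves implicit and is resolved correctly.
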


\begin{thmclone}{thm:upper}~ 
For all $k>0$, $f>0$, and $n > 2f$, there exists an $f$-tolerant
algorithm emulating a wait-free WS-Regular \kreg\/ using a collection
of $n$ servers storing $kf + \big\lceil \frac{k}{z}
\big\rceil (f+1)$ wait-free $z$-writer/multi-reader atomic
base registers where $z=\big\lfloor \frac{n-(f+1)}{f} \big\rfloor$.
\end{thmclone}


\begin{proof}

  By Corollary~\ref{cor:linearization}, the code in
  Algorithm~\ref{alg:upper} satisfies WS-regularity.  Now notice that
  in both \Write\ and \Read\ operations clients never wait for more
  than $n-f$ servers to respond, and thus, wait-freedom
  follows. We conclude that Algorithm~\ref{alg:upper} satisfies the
  theorem.

\end{proof}

\end{document}